\definecolor{winered}{rgb}{0.7,0,0}
\definecolor{lessblue}{rgb}{0,0,0.7}
\newcommand{\myitem}[3]{\item[#2]\def\@currentlabel{#3}\label{#1}}
\def\@tocline#1#2#3#4#5#6#7{
\begingroup
  \par
    \parindent\z@ \leftskip#3 \relax \advance\leftskip\@tempdima\relax
                  \rightskip\@pnumwidth plus 4em \parfillskip-\@pnumwidth
    \ifcase #1 
       \vskip 0.6em \hskip 0em 
       \or
       \or \hskip 0em 
       \or \hskip 1em 
    \fi%
    %
    #6
    %
    \nobreak\relax{\leavevmode\leaders\hbox{\,.}\hfill}
    \hbox to\@pnumwidth {\@tocpagenum{#7}}
  \par
\endgroup
}
 \def\l@section{\@tocline{0}{0pt}{0pc}{}{}}
\renewcommand{\tocsection}[3]{%
  \indentlabel{\@ifnotempty{#2}{ 
    \ignorespaces\bfseries{#2. #3}}}
  \indentlabel{\@ifempty{#2}{\ignorespaces\bfseries{#3}}{}} 
    \vspace{1.5pt}}
\renewcommand{\tocsubsection}[3]{%
  \indentlabel{\@ifnotempty{#2}{
    \ignorespaces#2. #3}}
  \indentlabel{\@ifempty{#2}{\ignorespaces #3}{}}
    \vspace{1.5pt}}
\renewcommand{\tocsubsubsection}[3]{%
  \indentlabel{\@ifnotempty{#2}{
    \ignorespaces#2. #3}}
  \indentlabel{\@ifempty{#2}{\ignorespaces #3}{}}
    \vspace{1.5pt}}
\def\@nomenstarted{0}
\newlength{\@nomenoldtabcolsep}
\newcommand{\nomenstart}
  {%
    \def\@nomenstarted{1}%
    \setlength{\@nomenoldtabcolsep}{\tabcolsep}%
    \setlength{\tabcolsep}{3.5pt}%
    \begin{longtable}{p{0.11\textwidth} p{0.86\textwidth}}
  }
\newcommand{\nomenitem}[2]{%
    \ifcase\@nomenstarted%
      \or 
      \or \\ 
    \fi%
    #1\,{\leavevmode\leaders\hbox{\,.}\hfill} & #2%
    \def\@nomenstarted{2}%
  }%
\newcommand{\nomenend}
  {\\%
      \end{longtable}%
      \setlength{\tabcolsep}{\@nomenoldtabcolsep}%
      \def\@nomenstarted{0}%
  }
\newcommand{\vast}{\bBigg@{4}}
\newcommand{\Vast}{\bBigg@{5}}
\numberwithin{equation}{section}
\numberwithin{figure}{section}
\newtheorem{thm}{Theorem}[section]
\newtheorem{prop}[thm]{Proposition}
\newtheorem{lemma}[thm]{Lemma}
\newtheorem*{thm*}{Theorem}
\newtheorem*{prop*}{Proposition}
\newtheorem*{cor*}{Corollary}
\newtheorem*{conj*}{Conjecture}
\theoremstyle{definition}
\newtheorem{definition}[thm]{Definition}
\theoremstyle{remark}
\newtheorem{rmk}[thm]{Remark}
\newcommand{\mc}{\mathcal}
\newcommand{\cA}{\mc A}
\newcommand{\cC}{\mc C}
\newcommand{\cF}{\mc F}
\newcommand{\cH}{\mc H}
\newcommand{\cK}{\mc K}
\newcommand{\cL}{\mc L}
\newcommand{\cR}{\mc R}
\newcommand{\cV}{\mc V}
\newcommand{\cX}{\mc X}
\newcommand{\ms}{\mathscr}
\newcommand{\scri}{\ms I}
\newcommand{\sR}{\ms R}
\newcommand{\C}{\mathbb{C}}
\newcommand{\N}{\mathbb{N}}
\newcommand{\R}{\mathbb{R}}
\newcommand{\Sph}{\mathbb{S}}
\newcommand{\sfG}{\mathsf{G}}
\newcommand{\bfa}{\mathbf{a}}
\newcommand{\bfq}{\mathbf{q}}
\newcommand{\fc}{\mathfrak{c}}
\newcommand{\fm}{\mathfrak{m}}
\newcommand{\ft}{\mathfrak{t}}
\newcommand{\sld}{\slashed{d}{}}
\newcommand{\slstar}{\slashed{\star}}
\newcommand{\scal}{\mathsf{S}}
\newcommand{\scalspace}{\mathbf{S}}
\newcommand{\vect}{\mathsf{V}}
\newcommand{\vectspace}{\mathbf{V}}
\newcommand{\ran}{\operatorname{ran}}
\newcommand{\ann}{\operatorname{ann}}
\renewcommand{\Re}{\operatorname{Re}}
\renewcommand{\Im}{\operatorname{Im}}
\newcommand{\mathspan}{\operatorname{span}}
\newcommand{\supp}{\operatorname{supp}}
\newcommand{\dd}{{\rm d}}
\newcommand{\tr}{\operatorname{tr}}
\newcommand{\Poly}{{\mathrm{Poly}}}
\newcommand{\CD}{{\mathrm{CD}}}
\newcommand{\cd}{\fc}
\newcommand{\Ups}{\Upsilon}
\newcommand{\eps}{\epsilon}
\newcommand{\hra}{\hookrightarrow}
\newcommand{\la}{\langle}
\newcommand{\ol}{\overline}
\newcommand{\pa}{\partial}
\newcommand{\ra}{\rangle}
\newcommand{\ul}[1]{\underline{#1}{}}
\newcommand{\wh}{\widehat}
\newcommand{\wt}{\widetilde}
\newcommand{\bop}{{\mathrm{b}}}
\newcommand{\cp}{{\mathrm{c}}}
\newcommand{\scl}{{\mathrm{sc}}}
\newcommand{\Diff}{\mathrm{Diff}}
\newcommand{\Vb}{\cV_\bop}
\newcommand{\Diffb}{\Diff_\bop}
\newcommand{\Vsc}{\cV_\scl}
\newcommand{\Diffsc}{\Diff_\scl}
\newcommand{\Tb}{{}^{\bop}T}
\newcommand{\Tsc}{{}^{\scl}T}
\newcommand{\half}{{\tfrac{1}{2}}}
\newcommand{\sub}{{\mathrm{sub}}}
\newcommand{\CI}{\cC^\infty}
\newcommand{\CIc}{\cC^\infty_\cp}
\newcommand{\Hb}{H_{\bop}}
\newcommand{\Hbext}{\bar H_{\bop}}
\newcommand{\Hbsupp}{\dot H_{\bop}}
\newcommand{\Hbh}{H_{\bop,h}}
\newcommand{\Hbhext}{\bar H_{\bop,h}}
\newcommand{\Ric}{\mathrm{Ric}}
\newcommand{\bhm}{\fm}
\newcommand{\bha}{\bfa}
\newcommand{\bhq}{\bfq}
\newcommand{\openbigpmatrix}[1]
  {%
    \def\@bigpmatrixsize{#1}%
    \addtolength{\arraycolsep}{-#1}%
    \begin{pmatrix}%
  }
\newcommand{\closebigpmatrix}
  {%
    \end{pmatrix}%
    \addtolength{\arraycolsep}{\@bigpmatrixsize}%
  }
\newlength{\enummargin}\setlength{\enummargin}{1.5em}
\newcommand{\usref}[1]{{\upshape\ref{#1}}}
\newcommand{\inclfig}[1]{\includegraphics{#1}}
\newcommand*{\fwbw}[1]{\expandafter\@fwbw\csname c@#1\endcsname}
\newcommand*{\@fwbw}[1]{\ifcase #1 \or {\rm fw}\or {\rm bw}\fi}
\AddEnumerateCounter{\fwbw}{\@fwbw}
\begin{document}

\title{Linear stability of Kerr black holes in the full subextremal range}

\date{\today}

\subjclass[2010]{}
\keywords{}

\author{Dietrich H\"afner}
\address{Universit\'e Grenoble Alpes, Institut Fourier, 100 rue des maths, 38402 Gi\`eres, France}
\email{dietrich.hafner@univ-grenoble-alpes.fr}
\author{Peter Hintz}
\address{Department of Mathematics, ETH Z\"urich, R\"amistrasse, 8092 Z\"urich, Switzerland}
\email{peter.hintz@math.ethz.ch}
\author{Andr\'as Vasy}
\address{Department of Mathematics, Stanford University, Stanford, California 94305-2125, USA}
\email{andras@math.stanford.edu}

\date{\today}

\begin{abstract}
  We prove, unconditionally, the linear stability of the Kerr family in the full subextremal range. On an analytic level, our proof is the same as that of our earlier paper in the slowly rotating case \cite{HHV21}. The additional ingredients we use are, firstly, the mode stability result \cite{AHW22} proved by Andersson, Whiting, and the first author and, secondly, computations related to the zero energy behavior of the linearized gauge-fixed Einstein equation by the second author \cite{Hi24}.
 \end{abstract}
 \maketitle

\section{Introduction}

Stability questions for families of black hole solutions of Einstein's field equations have been intensely studied in the last decades. For small angular momenta, the nonlinear stability of the Kerr--de~Sitter family was shown by the second and third author in \cite{HiVa18} (see also \cite{FangKdS}). The nonlinear stability of the Kerr family with small angular momenta was shown in a series of papers by Klainerman--Szeftel \cite{KS23}, \cite{KS22a}, \cite{KS22b}, Giorgi--Klainerman--Szeftel \cite{GKS22} and Shen \cite{Sh23}.  We also refer to the work of Dafermos, Holzegel, Rodnianski  and Taylor \cite{DaHoRoTa22} for the codimension $3$ nonlinear stability of the Schwarzschild family. As many black holes are rapidly spinning, see e.g.\ Thorne \cite{Th74}, it is an important problem to remove the restriction to small angular momenta in the results proved thus far.

A first step in proving nonlinear stability is linear stability. In \cite{HHV21,HaefnerHintzVasyKerrErratum} we proved the linear stability of slowly rotating Kerr black holes. In the present paper we establish the same result for the full subextremal range of specific angular momenta $a$, i.e.\ $|a|<\bhm$ where $\bhm$ is the mass of the black hole. Let $(M_b^\circ, g_b)$ be the Kerr solution and $\ft$ be a time function on the manifold $M^\circ_b$ as represented in Figure \ref{FigK0Time}.  Let $\Sigma_0=\ft^{-1}(0)$. For a Lorentzian metric $g$ on $M^\circ_b$ we denote by $\tau(g)=(\gamma,k)$ the data on $\Sigma_0$, where $\gamma$ is the pullback of $g$ to $\Sigma_0$ and $k$ is the second fundamental form of $\Sigma_0$ in $(M_b^\circ,g)$. Then a rough version of our theorem reads as follows:

\begin{thm}[Linear stability of subextremal Kerr]
\label{thm3.1}
  Let $\dot{\gamma},\, \dot{k}$ be symmetric $2$-tensors on $\Sigma_0=\ft^{-1}(0)$ satisfying the {linearized constraint equations} and 
  \[
    |\dot{\gamma}| \lesssim r^{-1-\alpha}, \quad |\dot{k}| \lesssim r^{-2-\alpha}
  \]
  for some $\alpha\in(0,1)$ (and similar bounds for derivatives of $\dot\gamma,\dot k$ along $\partial_{\omega},\, r\partial_r$ up to order $8$). Let $h$ denote a solution of the linearized Einstein vacuum equation
  $D_{g_b}\Ric(h) = 0$ on $M_b^\circ$ which attains the initial data $(\dot{\gamma}, \dot{k})$ on $\Sigma_0$, i.e.\ $ D_{g_b}\tau(h)=(\dot{\gamma},\dot{k})$.\footnote{Solutions always exist, as follows from a standard reduction to a linear wave equation using gauge fixing. Solutions are however not unique, as $D_{g_b}\Ric(\cL_V g_b)=0$ for all vector fields $V$.} Then there exist linearized black hole parameters $\dot{b}=(\dot{\bhm},\dot{\bha})\in \R\times\R^3$ and a vector field $V$ on $M_b^\circ$ such that 
  \begin{align*}
  h=\dot{g}_{b}(\dot{b})+\cL_Vg_b+\tilde{h},
  \end{align*}
  where for bounded $r$ the tail $\tilde{h}$ satisfies the bound $\vert \tilde{h}\vert \le C_{\eta}\ft^{-1-\alpha+\eta}$ for all $\eta>0$. Here $\dot g_b(\dot b):=\frac{\rm d}{{\rm d}s}g_{b+s\dot b}|_{s=0}$ is a linearized Kerr metric.
\end{thm}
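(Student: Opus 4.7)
The plan is to follow closely the analytic framework of \cite{HHV21}, replacing the two ingredients that depended on $|a|/\bhm \ll 1$: mode stability (now supplied by \cite{AHW22}) and the zero-energy analysis of the gauge-fixed linearized Einstein operator (now supplied by \cite{Hi24}). First, one fixes a suitable wave map gauge so that $D_{g_b}\Ric(h)=0$ becomes a linear wave equation $\wtBoxCP h = 0$ on symmetric $2$-tensors; the linearized constraint equations on $\Sigma_0$, together with the standard choice of initial gauge, supply Cauchy data for this wave equation, and any solution of it differs from the original $h$ by a pure gauge term $\cL_W g_b$.

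One then studies the resolvent of $\wtBoxCP$ via Mellin transform in $\ft$. The microlocal/high-energy and radial-point estimates at nonzero real $\sigma$ transplant from \cite{HHV21} essentially verbatim, as they are insensitive to the size of $a$. To exclude modes with $\Im\sigma \ge 0$, one shows that any such mode of $\wtBoxCP$ yields, after dealing with the gauge one-form, a mode of the physical linearized Einstein equation; \cite{AHW22} then shows that for $\sigma\neq 0$ and $\Im\sigma\ge 0$ such modes are necessarily pure gauge. A bookkeeping identical to that in \cite{HHV21} converts this into the holomorphy of $\wtBoxCP(\sigma)^{-1}$ on $\{\Im\sigma \ge 0\}\setminus\{0\}$.

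The crucial and most delicate point is the structure at $\sigma=0$. The computations of \cite{Hi24} identify the generalized zero-energy kernel of $\wtBoxCP$ on appropriate $\bop$-weighted function spaces and show that it is spanned precisely by the linearized Kerr variations $\dot g_b(\dot b)$, $\dot b\in\R\times\R^3$, together with Lie derivatives $\cL_V g_b$ of suitable asymptotically translation- or rotation-like vector fields $V$. Consequently the singular part of $\wtBoxCP(\sigma)^{-1}$ at $\sigma = 0$, applied to the forcing produced by the initial data, projects onto this finite-dimensional space: it delivers exactly a linearized Kerr contribution plus a pure gauge contribution, from which one reads off the parameters $\dot b$ and the leading part of $V$.

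Finally, a contour deformation of the inverse Mellin transform, pushing the contour a distance $1+\alpha-\eta$ below the real axis, yields the decomposition
\[
  h = \dot g_b(\dot b) + \cL_V g_b + \tilde h,
\]
with $|\tilde h| \le C_\eta \ft^{-1-\alpha+\eta}$ on bounded-$r$ regions; the loss $\eta$ is the standard Sobolev cost in the inverse transform, and the use of eight derivatives in the hypothesis matches the orders needed for the radial-point and b-propagation estimates. The main obstacle, relative to \cite{HHV21}, is the zero-energy step: no perturbative argument off Schwarzschild is available, and the explicit identification of the resolved zero-energy kernel with linearized Kerr plus pure gauge, as carried out in \cite{Hi24}, is essential both for the dimension count and for pinning down the precise structure of the residue.
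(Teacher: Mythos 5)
Your overall architecture (Fredholm/microlocal framework of \cite{HHV21}, mode stability from \cite{AHW22}, zero-energy input from \cite{Hi24}) matches the paper's, but two points are genuinely wrong or missing. First, the decay mechanism: the paper works with the Fourier transform in $t_*$ (not a Mellin transform in $\ft$), and the contour is shifted only from $\Im\sigma=C\gg 1$ \emph{down to the real axis}, never below it. In the asymptotically flat setting there is no strip of holomorphy of width $1+\alpha$ below $\R$ to deform into: besides quasinormal modes, the resolvent has a branch-type singularity at $\sigma=0$ coming from the asymptotically Euclidean end, and the rate $\ft^{-1-\alpha+\eta}$ is \emph{not} a resonance-gap statement. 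It comes instead from the precise low-energy structure of $\wh{L_b}(\sigma)^{-1}$: a principal part which is a quadratic polynomial in $\sigma^{-1}$ with finite-rank coefficients tied to the (generalized) zero modes, plus a regular part with only finite conormal/Sobolev regularity in $\sigma$ at $0$ (the $H^{\frac12-\epsilon-}$ and $H^{\frac32-\epsilon-}$ statements of \S\ref{Sec9}), whose Fourier inversion, combined with the spatial weight $\alpha$ of the data, produces the polynomial decay. Your final step, "push the contour a distance $1+\alpha-\eta$ below the real axis," would fail.

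Second, the treatment of growing modes is absent, and this is precisely where the choice of gauge and constraint damping enter. The vector field $V$ in the theorem contains asymptotic \emph{boosts}, which appear as \emph{linearly growing} generalized zero modes $\hat h_{{\rm s}1}(\scal)=\delta_{g_b}^*\hat\omega_{{\rm s}1}(\scal)$ of $L_b$ (Proposition~\ref{prop6.4}); your description of the zero-energy kernel as "linearized Kerr plus Lie derivatives along translation- or rotation-like fields" misses these (rotations are already inside linearized Kerr, the stationary pure-gauge modes are translations). More seriously, for the unmodified wave-map/DeTurck implementation there exist \emph{quadratically} growing generalized modes, which would destroy any decomposition with a decaying tail; the paper excludes them (Lemma~\ref{lemma7.8}) only after modifying $\delta_{g_b}$ and $\delta_{g_b}^*$ by the compactly supported terms $\gamma_\Ups E_\Ups$, $\gamma_C E_\CD$ so that $\wh\Box^\Ups_{g_b,\gamma_\Ups}(\sigma)$ and $\wh\Box^\CD_{g_b,\gamma_C}(\sigma)$ are invertible for all $\Im\sigma\ge 0$ (Theorem~\ref{thm6.1}) and by using the pairing computations of Lemma~\ref{Hi3.17}. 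One also needs the nondegeneracy of these pairings to pin down the $\sigma^{-2}$, $\sigma^{-1}$ coefficients of the resolvent (Theorem~\ref{ThmR}) and hence the leading terms $\dot g_b(\dot b)$ and $\cL_V g_b$ in the theorem; none of this is supplied by a dimension count of the zero-energy kernel alone.
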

    
  In a fixed generalized harmonic gauge and replacing $\dot{g}_b(\dot{b})$ by its gauge-fixed version we get $\ft^{-\alpha+\eta}$ decay and $V$ can be chosen in a fixed $6$-dimensional space. If we permit in addition a gauge source term, we can obtain the stated $\ft^{-1-\alpha+\eta}$ decay. We refer to \S\ref{Sec8} for precise versions of these results.

In \cite{HHV21} the main object of study was the forward forcing problem for the gauge-fixed linearized Einstein operator around a given Kerr metric $g_b$, which we write as 
\begin{align}
\label{LEI}
L_bh=f.
\end{align}
We show that for a well-chosen gauge, the solution can be expressed as $h=g^{\prime\Ups}_b(\dot{b})+\cL_Vg_b+\tilde{h}$, where $g_b^{\prime\Ups}(\dot{b})$ is a linearized Kerr solution (in the chosen gauge), the vector field $V$ lies in a fixed $6$ dimensional space consisting of asymptotic translations and asymptotic boosts and $\tilde{h}$ decays in time. In \cite{HHV21} we have chosen a generalized harmonic gauge. To obtain our result we Fourier transform \eqref{LEI} and study the Fourier transformed gauge-fixed linearized Einstein operator $\wh L_b(\sigma)$ (obtained from $L_b$ by replacing time derivatives by multiplication by $-i\sigma$).

The main ingredients of our proof of the result in \cite{HHV21} were the following:
\begin{enumerate}
\item A \textit{robust Fredholm framework} for the operators $\wh{L_b}(\sigma)$. In \cite{HHV21}, we construct suitable function spaces such that the operator $\wh{L_b}(\sigma)$, $\Im\sigma\geq 0$, acts as a Fredholm operator of index $0$ between them. 
\item \textit{High energy estimates}, i.e.\ estimates for the resolvent $\wh{L_b}(\sigma)^{-1}$ for large $\vert\Re\sigma\vert$ and bounded $\Im\sigma$. These estimates only use the structure of the trapping which is $r$-normally hyperbolic for every $r$, as shown in \cite{WZ11} (small $|a|/\bhm$), \cite{Dy15} (full subextremal range). 
\item \textit{Uniform Fredholm estimates down to $\sigma=0$} and \textit{regularity of the resolvent} at $\sigma=0$. These estimates mainly use the asymptotic behavior of the metric at infinity. 
We rely on the estimates proved in \cite{Va21a,Va21b}. 
\item \textit{Mode stability} for $L_b$, i.e.\ invertibility of $\wh{L_b}(\sigma)$ for $\sigma\neq 0,\, \Im\, \sigma\ge 0$ and a precise understanding of the $0$ modes (elements of the kernel of $\wh{L_b}(0)$).
\end{enumerate}  
Points (1)--(3) depend solely on structural properties of the Kerr metric: they use only the asymptotic behavior of the metric or the dynamical properties of the null-geodesic flow (in particular: horizons, trapping). The important fact is that these features remain unchanged in the large $a$ case. This is obvious from the form of the metric for the asymptotic behavior at infinity, and has been checked by Dyatlov \cite{Dy15} concerning the $r$-normally hyperbolic trapping.

Point (4) above was, to a large extent, carried out by Andersson, Whiting, and the first author in \cite{AHW22} in the large $a$ case. However, we need more here: \cite{AHW22} does not concern generalized modes (i.e.\ solutions of $L_b h=0$ which are polynomials in time); and \cite{AHW22} uses a standard wave map gauge with standard implementation, which is not suitable for our present needs. The first reason is that the standard implementation allows quadratically growing (in time) modes.\footnote{Note that boosts are linearly growing, so linearly growing modes must be permitted.} These modes have been eliminated in \cite{HHV21} by means of constraint damping; we use a similar procedure in this paper. Secondly, linearized Kerr solutions with respect to the mass cannot be put into the gauge while retaining stationarity; they instead showed up in \cite{HHV21} as linearly growing generalized mode solutions. We therefore follow the strategy of the second author in \cite{Hi24}, where he performs both a gauge modification and constraint damping. This does not only solve the aforementioned problems, but also eliminates the stationary Coulomb type mode solutions which still appeared in \cite{HHV21}.

We mention here only a few previous results concerning the linearized Einstein and the Teukolsky equations; we refer the reader to \cite{HHV21} for a more detailed literature review. Decay estimates for the Teukolsky equation on slowly rotating Kerr black holes were shown by Dafermos--Holzegel--Rodnianski \cite{DaHoRo19a}. In the same context, Ma--Zhang \cite{MaZh23} obtain optimal decay estimates with computation of the first term. Millet \cite{Mi23} obtains equivalent results to those of Ma--Zhang in the large $a$ case. Shlapentokh-Rothman and Teixeira da Costa \cite{SRTdC20,SRTdC23} obtain energy boundedness and integrated local energy decay results in this context. 

Linear stability was shown for the Schwarzschild metric by Dafermos, Holzegel, and Rodnianski \cite{DaHoRo19} and Hung--Keller--Wang \cite{HKW20}. Linear stability was shown for slowly rotating Kerr black holes by Andersson--B\"ackdahl--Blue--Ma  \cite{ABBM19} and the authors \cite{HHV21}. The result \cite{ABBM19} in fact gives a \emph{conditional} linear stability proof for the full subextremal range which \emph{assumes} certain integrated local energy decay estimates for the Teukolsky equations. (These estimates were proved for small $a$ by Ma \cite{MaGravityKerr}.)
 
In contrast to the Kerr case, the mode analysis has not been carried out in the Kerr--de~Sitter case in the full subextremal range. The mode analysis \cite{AHW22} in the Kerr case makes decisive use of the mode stability result of Whiting \cite{Wh89} for the Teukolsky equation and its sharpenings \cite{ShlapentokhRothmanModeStability,TeixeiradCModes,AMPW17}. An equivalent result is not known even for the wave equation on Kerr--de~Sitter: in this case mode stability was shown using perturbative arguments for small angular momenta \cite{Dy11a} and for small masses \cite{HintzKdSMS}. Conditional on mode stability for the (ungauged) linearized Einstein equation, the second and third author with Petersen prove the nonlinear stability of Kerr--de~Sitter in the full subextremal range in~\cite{HintzPetersenVasyKdS}.
 
 The paper is organized as follows. In~\S\ref{Sec2} we give a description of the Kerr metric. Gauge issues and the strategy of the proof are presented in~\S\ref{Sec3}. We revisit b- and scattering structures in~\S\ref{Sec4}. Our Fredholm setting is presented in~\S\ref{Sec6}. The mode analysis is performed in~\S\ref{Sec7}. \S\ref{Sec9} is devoted to the analysis of the regularity of the resolvent. Precise versions of our main result are given in~\S\ref{Sec8}.  
 
 \noindent
{\bf Acknowledgements.} D.H.~gratefully acknowledges support from the ANR funding  ANR-20-CE40-0018-01. P.H.~is grateful to the Swiss National Science Foundation under grant number TMCG-2\textunderscore{}223167 / 1. A.V.~gratefully acknowledges support from the National Science Foundation under grant number DMS-2247004.

\section{The Kerr family}
\label{Sec2}

We parameterize the Kerr family of solutions to the Einstein vacuum equations by $b=(\bhm,\bha)\in\R^+\times\R^3$, with $\bhm$ denoting the mass and $\bha$ the specific angular momentum. We restrict to the subextremal case $a=\vert \bha\vert< \bhm$. Let 
\begin{equation*}
r_\pm:=\bhm \pm\sqrt{\bhm^2-a^2}. 
\end{equation*}
Let $(\theta, \phi) \in [0,\pi) \times [0,2\pi)$ be coordinates on $\Sph^2$ such that $\theta=0$ points in the direction of $\bha$ when $\bha\neq 0$. The Kerr metric $g_b^{\rm BL}$ in Boyer-Lindquist coordinates $(t,r,\theta,\varphi) \in \mathbb{R} \times (r_+ , \infty) \times \Sph^2$ is then
\begin{subequations}
\label{EqKaMetric}
\begin{align}
  g_b^{\rm BL} &= \frac{\Delta_b}{\varrho_b^2}(\dd t-a\sin^2\theta\,\dd\varphi)^2 - \varrho_b^2\Bigl(\frac{\dd r^2}{\Delta_b}+\dd\theta^2\Bigr) - \frac{\sin^2\theta}{\varrho_b^2}\bigl(a\,\dd t-(r^2+a^2)\dd\varphi\bigr)^2, \\
  \intertext{with inverse} 
G_b^{\rm BL} &= \frac{1}{\Delta_b\varrho_b^2}\bigl((r^2+a^2)\pa_t+a\pa_\varphi\bigr)^2 - \frac{\Delta_b}{\varrho_b^2}\pa_r^2 - \frac{1}{\varrho_b^2}\pa_\theta^2 - \frac{1}{\varrho_b^2\sin^2\theta}(\pa_\varphi+\bha\sin^2\theta\,\pa_t)^2, \\
  \quad \Delta_{b} &= r^2-2\bhm r+a^2, \ \ 
  \varrho_{b}^2 = r^2+a^2\cos^2\theta. 
  \end{align}
\end{subequations}
Note that our signature convention is $(+,-,-,-)$. $g_b^{\rm BL}$ is the metric of an isolated, rotating, stationary black hole with event horizon $\{ r=r_+ \}$, and the domain of outer communications is $\{r > r_+\}$. Further, $\{r=r_-\}$ is the inner horizon.\footnote{Strictly speaking, the event horizon and the inner horizon are $r=r_{\pm}$ in another coordinate system, see below.} We have  $\sqrt{|\det g_b|} = \varrho_b^2 \sin\theta$. The Kerr metric is a solution of the Einstein vacuum equation:
\begin{equation}
\label{EqKaEinstein}
  \Ric\bigl(g_{b}^{\rm BL}\bigr)=0.
\end{equation}
The form~\eqref{EqKaMetric} of the metric breaks down at $r=r_{\pm}$ which are the roots of $\Delta_b$. Let 
\begin{equation*}
r_*=\int \frac{r^2+a^2}{\Delta_b} \dd r\quad \mbox{with}\quad r_*(3\bhm)=0. 
\end{equation*}
Consider coordinates $(t_*,r,\varphi_*,\theta)$, where $t_*(t,r)=t+F(r)$ and $\varphi_*(\varphi,r)=\varphi+T(r)$ are smooth functions such that 
\begin{align*}
F(r)&=\begin{cases}%
r_*&\mbox{for } r\le 3\bhm,\\
-r_*&\mbox{for } r\ge 4\bhm, 
\end{cases}\\
T(r)&=\begin{cases}%
\int \frac{a}{\Delta_b}&\mbox{for } r\le 3\bhm,\, T(3\bhm)=0,\\
0&\mbox{for } r\ge 4\bhm. 
\end{cases}%
  \end{align*}
For $r\le 3\bhm$, the metric then takes the form
\begin{equation}
\label{EqKerrMetric}
\begin{split}
g_b&=\frac{\Delta_b}{\varrho_b^2}(\dd t_*-a\sin^2\theta \dd\varphi_*)^2-2(\dd t_*-a\sin^2\theta \dd\varphi_*)\dd r-\varrho_b^2\dd\theta^2\\
&\quad -\frac{\sin^2\theta}{\varrho_b^2}(a \dd t_*-(r^2+a^2)\dd\varphi_*)^2,
\end{split}
\end{equation}
which is clearly smooth up to $r_+$.

Given
\begin{equation}
\label{eqr0}
r_- < r_0 < r_+ ,
\end{equation}
let
\[
X^\circ_b = [r_0,\infty)\times \Sph^2 ,
\]
and consider 
\begin{align}
\label{MX}
M^\circ_b = \R \times X^\circ_b, 
\end{align}
with coordinates $t_*, r, \theta, \varphi_*$. We choose the dependence of the functions $T,F$ above on $b$, near a fixed value of $b_0=(\bhm,\bha)$, such that $g_b$ is a smooth family of stationary metrics on $M^\circ_{b_0}$.

We compactify $X^\circ_b$ as follows: we set
\begin{equation*}
X_b:=\ol{X^\circ_b}\subset\ol{\R^3},\quad \rho:=\frac{1}{r},
\end{equation*}
where $\ol{\R^3}$ denotes the radial compactification (see~\S\ref{Sec4} for details). Thus, $X_b=\ol{\{r\geq r_0\}}$, and we let $\pa_-X_b=r^{-1}(r_0)$, $\pa_+X_b=\pa\ol{\R^3}\subset X$. Within $X_b$, the topological boundary of the domain of outer communications $\cX=(r_+,\infty)\times\Sph$ has two components,
\begin{equation}
\label{EqK0StaticBdy}
  \pa\cX = \pa_-\cX \sqcup \pa_+\cX,\qquad
  \pa_-\cX := r^{-1}(r_+), \quad
  \pa_+\cX := \rho^{-1}(0) = \pa_+ X.
\end{equation}
Note that $\pa_-X_b$ is distinct from $\pa_-\cX$, and is indeed a hypersurface lying \emph{beyond} the event horizon. Note that $\R \times\pa X$ has two components,
\begin{equation}
\label{EqK0SurfFin}
  \Sigma_{\rm fin} := \R \times\pa_- X_b
\end{equation}
(which is a spacelike hypersurface inside of the black hole) and $\R_{t_*}\times\pa_+ X_b=\R_{t_*}\times\pa_+\cX$ (which is future null infinity, typically denoted $\scri^+$); moreover, the future event horizon, $\cH^+$, is $\R_{t_*}\times\pa_-\cX$. 

We will also need a function $\ft\in\CI(M^\circ_b)$ of the form $\ft=t_*+G(r),$ which satisfies $\ft=t$ for $r\geq 4\bhm$, and such that $\dd\ft$ is future timelike on $M_b^\circ$ with respect to $g_{b}$.

\begin{figure}[!ht]
  \centering
    \inclfig{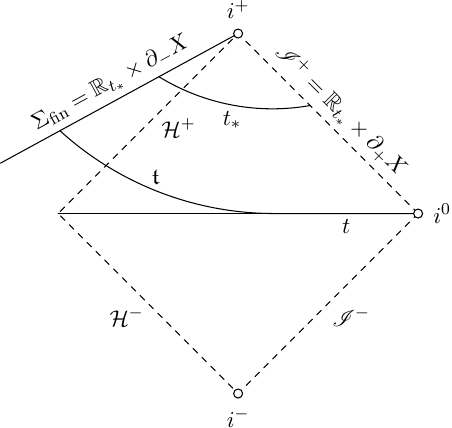}
  \caption{Illustration of time functions on $M^\circ_b$ in the Penrose diagram of the Kerr metric (including future/past null infinity $\scri^\pm$, the future/past event horizon $\cH^\pm$, spacelike infinity $i^0$, and future/past timelike infinity $i^\pm$). Shown are level sets of  the functions $\ft$ and $t_*$. We also indicate the boundaries $\R_{t_*}\times\pa_+ X$ (future null infinity again) and $\R_{t_*}\times\pa_- X$ (a spacelike hypersurface beyond the future event horizon). }  
\label{FigK0Time}
\end{figure}

Finally, we recall the that linearized Kerr metrics
\begin{equation}\label{gdoteqn}
\dot{g}_b(\dot{b})=\left.\frac{d}{ds}\right\vert_{s=0}g_{b+s\dot{b}},\quad \dot{b}\in \R^4,
\end{equation}
are solutions of the linearized Einstein equations 
\begin{align*}
D_{g_b}\Ric(\dot{g}_b(\dot{b}))=0.
\end{align*}

\section{Gauge issues and strategy of the proof}
\label{Sec3}

\subsection{Gauge issues}

To eliminate the diffeomorphism the invariance (if $\Ric(g)=0$, then also $\Ric(\phi^*g)=0$ for all diffeomorphisms $\phi$), we impose an extra condition on $g$. We introduce a gauge $1$-form $\Upsilon(g;g_b)$ which vanishes if and only if $\mathbf{1}: (M_b^\circ,g)\rightarrow (M_b^\circ,g_b)$ is a wave map: 
\[
  \Upsilon(g;g_b) = 0\iff \Box_{g,g_b}\mathbf{1} = 0.
\]
Explicitly, $\Upsilon(g;g_b)$ is given by
\[\Upsilon(g;g_b)=gg_b^{-1}\delta_g\sfG_gg_b.\]
Here $(\delta_g h)_\mu=-h_{\mu\nu}{}^{;\nu}$ is the adjoint of the symmetric gradient $(\delta_g^*\omega)_{\mu\nu}=\frac12(\omega_{\mu;\nu}+\omega_{\nu;\mu})$, and $\sfG_g=\mathbf{1}-\frac{1}{2}g\tr_g$ is the trace reversal operator in $4$ spacetime dimensions. The gauge $\Ups(g;g_b)=0$ is often called a {\it wave map gauge} (or generalized harmonic gauge). The {\it DeTurck trick} then consists in the realization of an equivalence between
\begin{align*}
  &\begin{cases}
    \Ric(g)=0, \\
    \Upsilon(g;g_b) = 0, \\
    \text{initial data satisfying the constraint equations}
  \end{cases}
  \end{align*}
  and an initial value problem (with suitably constructed initial data) for the equation 
  \begin{align}
  \label{P(g)}
  P(g):=\Ric(g)-\delta_g^*\Upsilon(g;g_b) = 0,
\end{align}
such that the initial data  $\Upsilon(g;g_b)$ are zero. The first equation implies the second one. To go from the second to the first one, one applies the second Bianchi identity: this shows that $\Upsilon(g;g_b)$ solves a homogeneous wave equation with zero initial data (the vanishing of the normal derivatives of $\Ups(g;g_b)$ at the initial surface uses the constraint equations) and therefore has to be zero. The advantage of~\eqref{P(g)} over $\Ric(g)=0$ is that the former is a quasilinear wave equation, whereas the latter is not.
 
The linearized version of this is as follows. The system
\begin{align*}
  &\begin{cases}
    D_{g_b}\Ric(h)=0, \\
    D_{g_b}\Upsilon(g;g_b)(h) = 0, \\
    \text{initial data satisfying the linearized constraint equations}
  \end{cases}
  \end{align*}
  is equivalent to an initial value problem for 
  \begin{align*}
    D_{g_b}P(h)=(D_{g_b}\Ric+\delta^*_{g_b}\delta_{g_b}\sfG_{g_b})(h)=0.
  \end{align*}
  We note that $D_{g_b}P(h)=\half\Box_{g_b}h$ plus first order terms: this is thus a wave equation on $(M_b^\circ,g_b)$.

For the purpose of analyzing solutions of $D_{g_b}P(h)=0$ globally and obtaining good decay for $h$, it turns out that one has to modify this operator. To motivate this, let us consider the kinds of solutions we expect. The operator $D_{g_b}P$ has natural mode solutions corresponding to linearized Kerr, asymptotic translations as well as asymptotic boosts at infinity (the asymptotic rotations are contained in the linearized Kerr solutions, see \cite[Remark~9.2]{HHV21} for details). However for the above operator additional mode solutions exist which are not natural. 
\begin{enumerate}
\item There exists an additional mode solution which comes from a Coulomb type solution of the $1$-form wave operator. Moreover, linearizations $\dot g_b(1,0)$ of the Kerr metric with respect to changes only in the mass cannot be put into the gauge $\delta_{g_b}\sfG_{g_b}(\dot g_b(1,0)+\delta_{g_b}^*\omega)=0$ while remaining stationary; see \cite[Proposition~9.4]{HHV21}. 
\item There exist generalized mode solutions which are quadratically growing in time. These are however pathological in the sense that they do not satisfy the (not gauge-fixed) linearized Einstein equations; see \cite[\S{9.3}]{HHV21}.
\end{enumerate}
To overcome the second difficulty, we follow \cite{HHV21}. The first difficulty does not need to be addressed (and indeed was not addressed in \cite{HHV21}) to achieve a stability proof, but we address it here nonetheless, following \cite{Hi24}, as this results in a simpler version of the main linear stability result.

First observe that there is some flexibility in the above DeTurck approach.  
\begin{enumerate}
\item We modify the divergence operator $\delta_{g_b}$ by a zeroth order term,
\begin{align*}
\delta_{g_{b},\gamma_{\Upsilon}}=\delta_{g_b}+\gamma_{\Upsilon}E_{\Upsilon}. 
\end{align*}
By doing so, we change the gauge. We will choose 
\begin{align*}
E_{\Upsilon}=2{i}_{{\cd_{\Upsilon}^{\#}}}h-\cd_{\Upsilon}\tr_{g_b}h
\end{align*}
for a suitable $\cd_{\Upsilon}\in C_c(X_b^\circ;T^*_{X_b^\circ}M^\circ_b)$. Here the vector field $\cd_{\Upsilon}^{\#}$ is defined by $\cd_{\Upsilon}^{\#}=g^{-1}(\cd_{\Upsilon},\cdot)$. 
\item We replace the symmetric gradient $\delta_{g_b}^*$ by 
\begin{align*}
\delta_{g_b,\gamma_C}^*:=\delta_{g_b}^*+\gamma_C E_\CD.
\end{align*}
By doing so, we change the \emph{implementation} of the gauge; such a procedure is called constraint damping \cite{BrodbeckFrittelliHubnerReulaSCP,GundlachCalabreseHinderMartinConstraintDamping}. We choose 
\begin{align}
E_\CD\omega=2\cd_C\otimes_s\omega-\la \cd_C,\omega\ra_{g^{-1}_b}g_b 
\end{align}
for a suitable $\cd_{C}\in C_c(X_b^\circ;T^*_{X_b^\circ}M^\circ_b)$. 
\end{enumerate}

The quantities and $1$-forms $\gamma_{\Upsilon}$, $\gamma_C$, $\cd_{\Upsilon}$, and $\cd_C$ will be fixed later on. For fixed choices, we then write
\begin{align*}
L_b:=2(D_{g_b}\Ric+\delta^*_{g_b,\gamma_{C}}\delta_{g_b,\gamma_{\Upsilon}}\sfG_{g_b}).
\end{align*}
\begin{rmk}
(Thus, $L_b=\Box_{g_b}$ plus lower order terms, acting on symmetric $2$-tensors.) An important point is that $\cd_{C}$ and $\cd_{\Upsilon}$ are compactly supported. Whereas the different operators are changed with respect to \cite{HHV21} and \cite{AHW22}, the normal (model) operators at $r=\infty$ are not changed. A large part of our analysis is based on normal operator arguments and this part is completely unchanged. 
Threshold regularities, which are the minimal regularity required for the underlying estimates at the conormal bundle of the event horizon in the `spatial' manifold $X_b^\circ$ (see e.g.\ \cite[Chapter~11]{HintzMicro}), can in principle change by these modifications. For the unmodified gauge-fixed operator we check in the appendix that the threshold regularities are still the same as in the slowly rotating case.  Once $\cd_{\Upsilon}$ and $\cd_C$ are fixed, the threshold regularity depends continuously on $\gamma_{\Upsilon}, \, \gamma_C$ and these constants can be chosen arbitrary small in the present paper. As a result in our setting, we can still work with the same threshold regularities as in the small $a$ case. 
\end{rmk}

We will now study $L_b h=0$ with general initial data, i.e.\ it is not necessary that the initial data satisfy the linearized constraint equations. (In previous work on Kerr--de~Sitter \cite{HiVa18}, this additional flexibility was important for the proof of \emph{nonlinear} stability.)

\begin{thm}[Asymptotics for the linearized gauge-fixed Einstein equation]
\label{thm3.2}
  Let $\alpha\in(0,1)$, and let $h_0,h_1\in\CI(\Sigma;S^2 T_\Sigma^*M)$,
  \[
    |h_0| \lesssim r^{-1-\alpha},\quad
    |h_1| \lesssim r^{-2-\alpha}
  \]
  for $\alpha\in(0,1)$ (and similar bounds for derivatives). Let
  \[
    \begin{cases}
      L_{b} h = 0, \\
      (h|_\Sigma,\ \cL_{\pa_t}h|_\Sigma) = (h_0,h_1).
    \end{cases}
  \]
  Then there exist {$\dot{b}\in\R\times\R^3$} and vector fields $V_1$, $V_2$ such that
    \[
    h = g_b^{\prime\Ups}(\dot{b}) + \cL_{V_1}g_b + \cL_{V_2}g_b + \tilde h,\quad
    {|\tilde h|\lesssim t_*^{-1-\alpha+},\quad \vert \cL_{V_2} g_b\vert \lesssim t_*^{-\alpha+}.}
  \]
  Here $g_b^{\prime\Ups}(\dot{b})$ is a gauge-fixed version of $\dot{g}_b(\dot{b})$ (i.e.\ $\delta_{g_b,\gamma_\Ups}\sfG_{g_b}g_b^{\prime\Ups}(\dot b)=0$) and $V_1$ can be chosen to lie in a fixed 6-dimensional space consisting of symmetric gradients of suitable asymptotic translations and asymptotic boosts.  
 \end{thm}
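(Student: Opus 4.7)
The approach I would take is to combine the four ingredients outlined in the introduction---robust Fredholm framework, high-energy estimates, regularity of the resolvent at $\sigma=0$, and mode stability---via a Fourier--Laplace transform in $t_*$ together with a contour deformation argument. The general philosophy is that in such a setup the solution of an initial value problem for $L_b$ can be represented as a contour integral in the spectral variable; deforming the contour to $\Im\sigma=-\alpha$ and collecting residues at $\sigma=0$ produces the asymptotic decomposition.

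First I would reduce the initial-value problem to a forcing problem with vanishing past data by cutting off with $\chi\in\CI(\R)$, $\chi\equiv 0$ for $t_*\le 0$ and $\chi\equiv 1$ for $t_*\ge 1$; then $\chi h$ satisfies $L_b(\chi h)=[L_b,\chi]h=:f$, with $f$ compactly supported in $t_*$ and inheriting the spatial decay of $(h_0,h_1)$. Fourier--Laplace transforming in $t_*$ (justified for $\Im\sigma=C\gg 0$ by the forward propagation/high-energy bounds of ingredient~(2)) gives, for $t_*\geq 1$,
\[
  h(t_*) = \frac{1}{2\pi}\int_{\Im\sigma=C}e^{-i\sigma t_*}\wh{L_b}(\sigma)^{-1}\wh f(\sigma)\,d\sigma,
\]
where the spatial decay $r^{-1-\alpha}$ of $f$ translates into conormal regularity of $\wh f$ at $\sigma=0$ of order $\alpha$. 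By the Fredholm theory~(1) and mode stability~(4) (the large-$a$ result of \cite{AHW22}), $\wh{L_b}(\sigma)^{-1}$ is meromorphic in $\Im\sigma\ge 0$ with no poles for $\sigma\neq 0$; ingredient~(2) controls the contour at large $|\Re\sigma|$ uniformly; and the low-energy estimates of \cite{Va21a,Va21b,Hi24} (ingredient~(3)) allow passage through $\sigma=0$ into the strip $-\alpha\le\Im\sigma<0$. The resulting contour integral on $\Im\sigma=-\alpha$ gives the tail $\tilde h$, obeying $|\tilde h|\lesssim t_*^{-1-\alpha+\eta}$ for every $\eta>0$.

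The residue contribution at $\sigma=0$ is a finite sum of (generalized) zero-mode solutions of $L_b$, whose structure is dictated by the (generalized) kernel of $\wh{L_b}(0)$. Using the mode analysis of \cite{AHW22}, together with the refined analysis of $\sigma=0$ modes and the gauge/constraint-damping modifications from \cite{HHV21,Hi24}, one obtains: the constraint-damping term $\gamma_C E_\CD$ eliminates the would-be quadratically growing modes while retaining the physically necessary linear growth coming from boosts; the gauge modification via $\cd_\Ups$ eliminates the Coulomb-type stationary mode and ensures that the mass perturbation $\dot g_b(1,0)$ itself sits inside the gauge as $g_b^{\prime\Ups}(1,0)$. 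This identifies the polar part of the residue with the linearized Kerr term $g_b^{\prime\Ups}(\dot b)$ plus the pure gauge term $\cL_{V_1}g_b$ with $V_1$ in the fixed six-dimensional space of asymptotic translations and boosts. The sub-polar (conormal, non-smooth) piece of the expansion of $\wh{L_b}(\sigma)^{-1}$ at $\sigma=0$ produces after Fourier inversion the intermediate term $\cL_{V_2}g_b$ with decay $t_*^{-\alpha+}$; its pure gauge character is exactly the statement that, beyond the discrete zero modes, all sub-leading singular behavior of $\wh{L_b}(\sigma)^{-1}$ at $\sigma=0$ lies in the range of $\delta^*_{g_b,\gamma_C}$.

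The main obstacle is the last point: extracting the precise Laurent/conormal structure of $\wh{L_b}(\sigma)^{-1}$ near $\sigma=0$ and matching each singular coefficient with its geometric interpretation (linearized Kerr, pure gauge at a boost or translation level, or genuinely decaying pure gauge). This is the content of the zero-energy analysis of \cite{Hi24}, which must be married to the Fredholm and high-energy setup in the large-$a$ regime; in particular one must verify that the threshold regularities remain those of the slowly rotating case under the modifications $\cd_\Ups,\cd_C$ with $\gamma_\Ups,\gamma_C$ chosen sufficiently small, as observed in the remark following the definition of $L_b$. The other ingredients are essentially as in \cite{HHV21}, since they depend only on asymptotic and trapping features that are unchanged in the full subextremal range.
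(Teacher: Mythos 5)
Your outline reproduces the paper's general architecture (reduction to a forcing problem, Fourier transform in $t_*$, Fredholm/high-energy/low-energy/mode-stability inputs, and the role of the gauge modification and constraint damping from \cite{Hi24}), but the central analytic step by which time decay is actually extracted is wrong. You propose to deform the contour into the lower half-plane, to $\Im\sigma=-\alpha$, and to collect "residues at $\sigma=0$". On an asymptotically flat (Kerr) background this is not available: $\sigma=0$ is not a pole of $\wh{L_b}(\sigma)^{-1}$ but a point of merely finite regularity (a branch-point/conormal-type singularity produced by the asymptotically Euclidean end), and none of the inputs you invoke provide a continuation across the real axis --- \cite{Va21b} and Theorem~\ref{thm6.1}/Theorem~\ref{ThmOp} give uniform Fredholm estimates and invertibility only on the closed upper half-plane $\Im\sigma\ge 0$, and the high-energy estimates \eqref{EqOpHigh} are likewise only stated for $\Im\sigma\in[0,C]$. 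Moreover, if such a meromorphic continuation to $\Im\sigma=-\alpha$ with a single pole at $0$ did exist, the tail would decay at a rate set by the width of the pole-free strip, not by the spatial decay rate $\alpha$ of the data; the $t_*^{-1-\alpha+}$ rate in the theorem is inherently a consequence of the \emph{regularity in $\sigma$ on the real axis} of the resolvent applied to data with spatial weight tied to $\alpha$. What the paper actually does (following \cite{HHV21,HaefnerHintzVasyKerrErratum}) is shift the contour only to $C=0$, then use the structure theorem (Theorem~\ref{ThmR}): $\wh{L_b}(\sigma)^{-1}=P_b(\sigma)+\sigma^{-1}\check L_b(\sigma)^{-1}\wt R_{1 0}+L_b^-(\sigma)$, where $P_b(\sigma)$ is an explicit quadratic polynomial in $\sigma^{-1}$ with finite-rank coefficients determined by the pairings of Lemma~\ref{Hi3.17}, and the remaining two pieces have finite Sobolev regularity in $\sigma$ ($H^{\frac12-\epsilon-}$ and $H^{\frac32-\epsilon-}$, respectively, by the Proposition in \S\ref{Sec9}); Fourier inversion of this regularity yields the $t_*^{-\alpha+}$ decay of the $\cK_{b,t}$-valued term (your $\cL_{V_2}g_b$) and the $t_*^{-1-\alpha+}$ decay of $\tilde h$, while the $\sigma^{-2}$ and $\sigma^{-1}$ terms produce the generalized zero mode $t_*h_t+\breve h_t+h_t'+g_b^{\prime\Ups}(\dot b)$, i.e.\ $g_b^{\prime\Ups}(\dot b)+\cL_{V_1}g_b$.

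Two smaller inaccuracies: since $f=[L_b,\chi]h$ is compactly supported in $t_*$, $\wh f(\sigma)$ is entire in $\sigma$; the limited conormal behavior at $\sigma=0$ sits in the resolvent acting on spatially weighted spaces, not in $\wh f$. And the pure-gauge character of the intermediate term is not "the statement that all sub-leading singular behavior lies in the range of $\delta^*_{g_b,\gamma_C}$": pure gauge means the range of the \emph{undamped} symmetric gradient $\delta_{g_b}^*$, and in the paper this is established concretely through the mode analysis of \S\ref{Sec7} (Theorem~\ref{ThmL0} and Proposition~\ref{prop6.4}), which shows that $\cK_{b,t}$ is spanned by $h_{{\rm s}1}(\scal)=\delta_{g_b}^*\omega_{{\rm s}1}(\scal)$ and that the linearly growing generalized modes are $\hat h_{{\rm s}1}(\scal)=\delta_{g_b}^*\hat\omega_{{\rm s}1}(\scal)$, not by any general range statement for the constraint-damped operator.
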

 Using this theorem one can therefore read off the change of black hole parameters and the movement of the black hole in the chosen gauge. As already mentioned in the introduction, one can eliminate the vector field $V_2$ if one changes the gauge further using an appropriate gauge source term (constructible from $V_2$, namely given by $\delta_{g_b,\gamma_\Ups}\sfG_{g_b}\cL_{V_2}g_b$).  We refer to Theorems \ref{thm7.2}, \ref{thm7.3} for a more precise version of this theorem. 

We recall once more the work of Andersson--B\"ackdahl--Blue--Ma  \cite{ABBM19} which is unconditional for slowly rotating Kerr black holes: this establishes linear stability for initial data with fast decay ($\alpha>\frac52$). This eliminates in particular the emergence of linearized Kerr solutions (i.e.\ $\dot{b}=0$, see also \cite[Remark~14.7]{HHV21}).

\subsection{Strategy of the proof}

\label{Sec3.3}
We follow the same strategy as in \cite{HHV21}. We first recast the problem as a forcing problem. The initial value problem can be reduced to a forcing problem 
\begin{align}
\label{4.1}
L_b h = f,
\end{align}
where $\supp f\subset \{t_*>0\}$ and we are looking for a forward solution in the sense that $\supp u\subset\{t_*>-T\}$ for some $T>0$. (See also \cite{Mi23}.) Fourier transforming \eqref{4.1} with respect to $t_*$ gives 
\begin{align*}
\wh{L_b}(\sigma)\hat{h}(\sigma)=\hat{f}(\sigma). 
\end{align*}
We then work on the spectral side:
\[
  h(t_*,x) = \frac{1}{2\pi}\int_{\Im\sigma=C} e^{-i\sigma t_*}{\wh{L_b}(\sigma)^{-1}}\hat f(\sigma,x)\,\dd\sigma.
\]
This formula is valid for $C\gg 1$ (amounting, roughly speaking, to an a priori bound of the forward solution by $e^{C t_*}$ which follows from a simple energy estimate). The idea is now to shift contour to $C=0$. The main ingredients of the proof are then:
\begin{enumerate}
\item Fredholm theory for non-elliptic operators, ultimately based on \cite{Va13}. More precisely, the operator $\wh{L_b}(\sigma)$ is Fredholm (of index $0$) as an operator between suitable function spaces based on weighted Sobolev spaces. Other than \cite{Va13}, this uses scattering (radial point) estimates at infinity \cite{Me95, Va21a} (for non-zero $\sigma$), radial point estimates at the horizons \cite{Va13}, real principal type propagation of regularity \cite{DuHo}, and (for $\sigma=0$) elliptic b-theory \cite{Me94}. We will make our functional setting more precise in~\S\ref{Sec6}.
\item Mode analysis of the gauge-fixed Einstein equations and certain $1$-form wave operators. 
\item Constraint damping as already discussed. 
\item $\wh{L_b}(\sigma)$ satisfies high energy estimates (in particular: is invertible) for $|\Re\sigma|\gg 1$ and bounded $\Im\sigma\geq 0$. This uses semiclassical estimates at the horizons \cite{Va13} and spatial infinity \cite{VZ00,Va21a}, as well as estimates at normally hyperbolic trapping \cite{WZ11,Dy15,Dy16,HintzPsdoInner} (see also \cite{NZ15}).
\item Low frequency resolvent estimates, meaning uniform Fredholm estimates for $\wh{L_b}(\sigma)$ down to $\sigma=0$ \cite{Va21b}; see also \cite{BoHa10,VW13,HintzNonstat}. 
\end{enumerate}

\section{b- and scattering structures}
\label{Sec4}

We recall geometric structures and variants of tangent and cotangent bundles on manifolds with boundary, arising in the present paper in the description of the asymptotic behavior of operators, $1$-forms, and symmetric $2$-tensors near `infinity' in compactifications of the spatial or spacetime manifold.

Let $X$ denote an $n$-dimensional smooth manifold with boundary $\pa X\neq\emptyset$. Let $\rho\in\CI(X)$ be a boundary defining function, which means that $\rho\geq 0$ everywhere, $\pa X=\rho^{-1}(0)$, and $\dd\rho\neq 0$ on $\pa X$. We then define
\[
  \Vb(X) := \{ V\in\cV(X)=\CI(X;T X) \colon X\ \text{is tangent to}\ \pa X \},\quad
  \Vsc(X) := \{ \rho V \colon V\in\Vb(X) \}.
\]
In local coordinates $x\geq 0$, $y\in\R^{n-1}$ near a point in $\pa X$, with $\pa X$ locally given by $x=0$, b-vector fields $V\in\Vb(X)$, resp.\ scattering vector fields $V\in\Vsc(X)$, are precisely those smooth vector fields which are linear combinations, with $\CI(X)$ coefficients, of $x\pa_x$, $\pa_{y^j}$ ($j=1,\ldots,n-1$), resp.\ $x^2\pa_x$, $x\pa_{y^j}$. We define vector bundles
\[
  \Tb X \to X,\quad \Tsc X \to X
\]
to have these vector fields as local frames (down to $\pa X$), while for $p\in X^\circ$ we set $\Tb_p X=\Tsc_p X=T_p X$. The dual bundles are denoted $\Tb^*X,\Tsc^*X\to X$, with local frames $\frac{\dd x}{x}$, $\dd y^j$ and $\frac{\dd x}{x^2}$, $\frac{\dd y^j}{x}$, respectively. The space $\Diffb^m(X)$ consists of all $m$-th order differential operators which are locally finite sums of up to $m$-fold compositions of elements of $\Vb(X)$ (with $0$-fold compositions being multiplication operators by elements of $\CI(X)$); analogously for $\Diffsc^m(X)$. For a weight $a\in\R$, we moreover define the class $\rho^a\Diffb^m(X)=\{\rho^a L\colon L\in\Diffb^m(X)\}$ of weighted b-differential operators.

An important example of this setup arises for $X=\ol{\R^n}$ where
\[
  \ol{\R^n} := \bigl(\R^n \sqcup ([0,1)_\rho\times\Sph^{n-1})\bigr) / \sim,\quad 0\neq x=r\omega \sim (r^{-1},\omega),
\]
is the radial compactification of $\R^n$. A change of variables computation shows that $\Vsc(\ol{\R^n})$ is the space of linear combinations of the coordinate vector fields $\pa_{x^j}$ with $\CI(\ol{\R^n})$ coefficients. Elements of $\CI(\ol{\R^n};S^2\,\Tsc^*\ol{\R^n})$ are thus of the form $g_{i j}(x)\,\dd x^i\otimes_s\dd x^j$ (where $\omega\otimes_s\eta=\frac12(\omega\otimes\eta+\eta\otimes\omega)$) for $g_{i j}\in\CI(\ol{\R^n})$; the Euclidean metric is an example. The space of b-vector fields on $\ol{\R^n}$ is spanned over $\CI(\ol{\R^n})$ by $\pa_{x^j}$ and $x^i\pa_{x^j}$, $1\leq i,j\leq n$. (Outside any neighborhood of the origin, $\pa_{x^j}$ can be dropped.)

In the present paper, we will encounter `spatial' manifolds $X$ arising as hypersurfaces inside of a `spacetime' manifold $\R\times X$. Denoting the first coordinate by $t$, we then define the bundle
\begin{equation}
\label{Eq4wtTsc}
  \wt\Tsc{}^*X \to X
\end{equation}
by $\wt\Tsc{}^*_x X:=\R\oplus\Tsc^*_x X$; a section $v=(v_0,w)\in\CI(X;\wt\Tsc{}^*X)$ is then identified with the stationary $1$-form $v_0\,\dd t+w$ on $\R\times X$.

We next discuss b- and scattering Sobolev spaces on manifolds $X$ with boundary; we now assume $X$ to be compact. We fix a \emph{scattering} density, i.e.\ a positive section of $|\Lambda^n\,\Tsc^*X|$; in local coordinates as above, this is of the form $a(x,y)|\frac{\dd x}{x^2}\frac{\dd y^1}{x}\cdots\frac{\dd y^{n-1}}{x}|$ for $0<a\in\CI$. (On $X=\ol{\R^n}$, an example is the standard Euclidean density.) For $k\in\N_0$, we now define
\[
  \Hb^k(X) := \{ u\in L^2(X) \colon A u\in L^2(X)\ \forall\,A\in\Diffb^k(X) \},\quad
  \Hb^{k,\ell}(X) := \{ \rho^\ell u \colon u\in\Hb^k(X) \}.
\]
Using a finite spanning set $\{A_\alpha\}$ of $\Diffb^k(X)$ and a fixed choice of boundary defining function $\rho$, we can give $\Hb^{k,\ell}(X)$ the structure of a Hilbert space with squared norm $\sum_\alpha \|\rho^{-\ell}A_\alpha u\|_{L^2(X)}^2$. The space $\Hb^{s,\ell}(X)$ can be defined for real $s\in\R$ using duality and interpolation. We moreover set
\[
  \Hb^{\infty,\ell}(X) := \bigcap_{s\in\R} \Hb^{s,\ell}(X),\quad
  \Hb^{-\infty,\ell}(X) := \bigcup_{s\in\R} \Hb^{s,\ell}(X).
\]
For $s\in\R\cup\{\pm\infty\}$, we moreover set
\begin{equation}
\label{EqHbpm}
  \Hb^{s,\ell+}(X) := \bigcap_{\eps>0} \Hb^{s,\ell+\eps}(X),\quad
  \Hb^{s,\ell-}(X) := \bigcup_{\eps>0} \Hb^{s,\ell-\eps}(X).
\end{equation}
Similarly to the $L^2$-based spaces $\Hb^{\infty,\ell}(X)$, we shall frequently encounter the $L^\infty$-based spaces
\[
  \cA^\ell(X) := \{ u\in\CI(X^\circ) \colon \rho^{-\ell}A u\in L^\infty(X^\circ)\ \forall\,A\in\Diffb^k(X),\ k\in\N_0 \}
\]
of \emph{conormal functions}; spaces $\cA^{\ell+}(X)$ and $\cA^{\ell-}(X)$ are defined analogously to~\eqref{EqHbpm}. Sobolev embedding for weighted b-Sobolev spaces on $X=\ol{\R^3}$ reads
\[
  \Hb^{s,\ell}(X) \subset \la x\ra^{-\ell-\frac32}L^\infty(X),\ s>\tfrac32;\quad
  \Hb^{\infty,\ell}(X) \subset \cA^{\ell+\frac32}(X).
\]
The shift of $\frac32$ arises from our use of a scattering density to define $L^2$. Spaces of sections of vector bundles are defined using local trivializations as usual.

For high energy estimates, we will work with semiclassical b-Sobolev spaces: for $h>0$, we put $\Hbh^{s,\ell}(X)=\Hb^{s,\ell}(X)$ as a set, but the norm depends on the semiclassical parameter $h\in(0,1]$: if $V_1,\ldots,V_N\in\Vb(X)$ span $\Vb(X)$ over $\CI(X)$, we let
\[
  \|u\|_{\Hbh^k(X)}^2 := \sum_{\genfrac{}{}{0pt}{}{1\leq i_1,\ldots,i_j\leq N}{0\leq j\leq k}} \|(h V_{i_1})\cdots(h V_{i_j})u\|_{L^2(X)}^2,\quad
  \|u\|_{\Hbh^{k,\ell}(X)} := \|\rho^{-\ell}u\|_{\Hbh^k(X)},
\]
for $k\in\N_0$. For real orders $k$, we use duality and interpolation to define the norms.

In our application, we shall encounter the following situation: we are given a smooth manifold $X'$ with boundary (later taken to be $X'=\ol{\R^3}$), and a smooth top dimensional submanifold $X\subset X'$ whose boundary is a disjoint union
\[
  \pa X = \pa_-X \sqcup \pa_+X,\quad \pa_+X=\pa X';
\]
thus $\pa_+X$ is the boundary `at infinity' and $\pa_-X$ is an interior boundary (later chosen to be a level set of $r=|x|$). By an abuse of notation, we then set $\Vb(X):=\{V|_X\colon V\in\Vb(X')\}$, that is, we require b-behavior of vector fields at $\pa_+X$, but at $\pa_-X$ only require smoothness; similarly for $\Vsc(X)$. In the presence of the interior boundary $\pa_-X$, there are two classes of Sobolev spaces (we only discuss the b-case here, the scattering case being completely analogous): spaces of extendible distributions,
\[
  \bar H_\bop^{s,\ell}(X) := \{ u|_{X^\circ} \colon u\in\Hb^{s,\ell}(X') \},
\]
and spaces of supported distributions
\[
  \dot H_\bop^{s,\ell}(X) := \{ u\in\Hb^{s,\ell}(X') \colon \supp u\subset X \}.
\]
(The terminology comes from \cite[Appendix~B]{Hoe85}.) These spaces are equipped with the quotient, resp.\ subspace topology, and are thus again Hilbert spaces. Semiclassical analogues of these spaces are defined completely analogously.

Finally, if $X$ is again the `spatial' manifold of a `spacetime' manifold $M=\R_\ft\times X$, and if $\cF(X)$ denotes a space of distributions on $X^\circ$, then we let
\[
  \Poly^k(\ft)\cF(X) := \Biggr\{ \sum_{j=0}^k \ft^j a_j \colon a_j\in\cF(X) \Biggr\},\quad
  \Poly(\ft)\cF(X) := \bigcup_{k\in\N_0} \Poly^k(\ft)\cF(X).
\]


\section{Fredholm setting and resolvent estimates}
\label{Sec6}

In addition to the linearized Einstein operator there appear, in the analysis of $L_b$, two further wave operators acting on $1$-forms:
\begin{enumerate}
\item The {\it gauge potential wave operator}
\begin{align*}
\Box^{\Upsilon}_{g_b,\gamma_{\Upsilon}}:=2\delta_{g_b,\gamma_{\Upsilon}}\sfG_{g_b}\delta^*_{g_b}.
\end{align*}
(If $\Box_{g_b,\gamma_\Ups}^\Ups\omega=0$, then $\delta_{g_b}^*\omega$ satisfies the linearized gauge condition.)
\item The {\it constraint propagation wave operator}
\begin{align*}
\Box^\CD_{g_b,\gamma_C}=2\delta_{g_b}\sfG_{g_b}\delta^*_{g_b,\gamma_C}.
\end{align*} 
(If $L_b h=0$, then the gauge $1$-form $\eta=\delta_{g_b,\gamma_\Ups}\sfG_{g_b}h$ satisfies $\Box^\CD_{g_b,\gamma_C}\eta=0$.)
\end{enumerate}
The Fourier transformed operators are operators on $X_b^\circ$ defined by conjugation with $e^{-i\sigma t_*}$, so
\begin{align*}
\wh \Box^{\Upsilon}_{g_b,\gamma_{\Upsilon}}(\sigma)=e^{i\sigma t_*}\Box^{\Upsilon}_{g_b,\gamma_{\Upsilon}}e^{-i\sigma t_*},
\end{align*} 
similarly for $\wh \Box^\CD_{g_b,\gamma_C}(\sigma)$ and $\wh L_b(\sigma)$. Let us recall \cite[Theorem 3.5, Proposition 3.7, Proposition 3.12]{Hi24}:

\begin{thm}[Gauge modification and constraint damping: spectral theory]
\label{thm6.1}
 Let $\Box_{g_b}$ be one of the operators $\Box^{\Upsilon}_{g_b,\gamma_{\Upsilon}}$, $\Box^\CD_{g_b,\gamma_C}$.   
 \begin{enumerate}
  \item For $\Im\sigma\geq 0$, $\sigma\neq 0$, the operator
    \[
      \wh{\Box_{g_b}}(\sigma)\colon\left\{\omega\in \Hbext^{s,\ell}(X;\wt\Tsc{}^*X)\colon\wh{\Box_g}(\sigma)\omega\in\Hbext^{s,\ell+1}(X;\wt\Tsc{}^*X)\right\} \to\Hbext^{s,\ell+1}(X;\wt\Tsc{}^*X)
    \]
    is invertible when $s>\frac32$, $\ell<-\half$, $s+\ell>-\half$.
  \item For $s>\frac32$ and $\ell\in(-\tfrac32,-\half)$, the operator
    \begin{equation}
    \label{Eq1StatOp}
    \begin{split}
      \wh{\Box_{g_b}}(0)\colon&\left\{\omega\in\Hbext^{s,\ell}(X;\wt\Tsc{}^*X)\colon\wh{\Box_g}(0)\omega\in\Hbext^{s-1,\ell+2}(X;\wt\Tsc{}^*X)\right\} \\
        &\qquad\to\Hbext^{s-1,\ell+2}(X;\wt\Tsc{}^*X)
    \end{split}
    \end{equation}
    is a Fredholm operator of index $0$. 
  \end{enumerate}
  Moreover there exist $\cd_{C}, \, \cd_{\Upsilon}\in C_c(X_b^\circ;T^*_{X_b^\circ}M^\circ_b)$ such that for sufficiently small $\gamma_C,\, \gamma_{\Upsilon}$ the above operators are invertible on the stated spaces for all ${\rm Im}\, \sigma\ge 0$. 
  \end{thm}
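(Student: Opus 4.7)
My plan is to follow the structure already developed in \cite{Hi24} and \cite{HHV21}, namely: establish the Fredholm framework via microlocal analysis, then eliminate the kernel via constraint damping / gauge modification.

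\textbf{Fredholm framework (parts (1) and (2)).} Both $\Box^\Ups_{g_b,\gamma_\Ups}$ and $\Box^\CD_{g_b,\gamma_C}$ are principally scalar wave operators on $\wt\Tsc{}^*X$ of the form $\Box_{g_b}$ plus compactly supported lower order terms (since $\cd_\Ups, \cd_C$ are compactly supported in $X_b^\circ$). Consequently, the Fredholm theory is identical to that for $\Box_{g_b}$ acting on $1$-forms. For $\sigma\neq 0$ with $\Im\sigma\ge 0$, I would assemble the standard estimate from: interior elliptic regularity away from the characteristic set, real principal type propagation of $\bar H_\bop^{s,\ell}$ regularity along the null-bicharacteristic flow, radial point estimates at the (generalized) radial set over the event horizon (producing the threshold condition $s>\tfrac32$), radial point estimates at the scattering face $\pa_+ X$ encoding outgoing behavior at infinity (producing $\ell<-\tfrac12$ and $s+\ell>-\tfrac12$ for the outgoing radial source/sink structure), and extendibility at the artificial inner boundary $\pa_-X$ (which sits beyond the event horizon, where the flow is causal and no estimates are needed). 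These together give the estimate $\|\omega\|\lesssim \|\wh{\Box_{g_b}}(\sigma)\omega\|+\|\omega\|_{-N}$, and the adjoint estimate in the dual spaces gives solvability by the usual duality. For $\sigma=0$, the scattering principal symbol degenerates at $\pa_+X$, but the operator is b-elliptic with indicial roots on the line $\Im\lambda=-\tfrac12-\ell_0$ for a specific $\ell_0$; choosing $\ell\in(-\tfrac32,-\tfrac12)$ avoids these indicial roots, and one uses b-elliptic estimates at $\pa_+X$ in place of scattering radial point estimates. This yields \eqref{Eq1StatOp} as Fredholm of index $0$.

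\textbf{Invertibility via constraint damping.} The main task is to rule out kernel elements. For $\Im\sigma>0$ an energy estimate for $\Box_{g_b}$ (whose principal part is unchanged by the lower order modifications) forces any element of $\ker\wh{\Box_{g_b}}(\sigma)$ in the specified space to vanish; hence invertibility for $\Im\sigma>0$ follows from the index-$0$ Fredholm property. For $\Im\sigma=0$, $\sigma\neq 0$, the radial point analysis forces elements of the kernel to have additional decay and regularity at $\pa_+X$ corresponding to outgoing solutions; a boundary pairing/Green identity argument at $\pa_+X$ combined with the horizon radial set analysis (which forces regularity across $\cH^+$) reduces the problem to mode stability for a wave-type operator on $1$-forms on the Kerr background, which holds in this regime. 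At $\sigma=0$ the operator is not automatically injective: the unperturbed $\Box^\Ups_{g_b,0}$ admits Coulomb-type stationary $1$-form solutions, and $\Box^\CD_{g_b,0}$ may have its own zero modes related to asymptotic symmetries. The role of the zeroth order terms $\gamma_\Ups E_\Ups$ and $\gamma_C E_\CD$ with compactly supported $\cd_\Ups, \cd_C$ is to produce a compact perturbation that exactly kills these finitely many exceptional modes. Concretely, identify a basis of the kernel at $\sigma=0,\gamma_\bullet=0$ and a complementary subspace of the cokernel (using $\ind=0$), then choose $\cd_\bullet$ so that the pairing matrix $\langle E_\bullet\psi_i,\psi_j^*\rangle$ is nondegenerate; a standard Grushin/perturbation argument then shows invertibility for all small $\gamma_\bullet\neq 0$.

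\textbf{Passing from $\sigma=0$ invertibility to all $\Im\sigma\ge 0$.} Once the two endpoints $\sigma=0$ and $\sigma\neq 0$ are handled, the invertibility statement for all $\Im\sigma\ge 0$ follows by combining: invertibility for large $|\Re\sigma|$ with $\Im\sigma\ge 0$ bounded (from semiclassical/high-energy estimates, compatible with the compactly supported perturbation since it is relatively bounded in the semiclassical calculus), invertibility for $\Im\sigma>0$ by the energy argument, continuity of the family $\sigma\mapsto\wh{\Box_{g_b}}(\sigma)$ as a Fredholm family, and the fact that the set of $\sigma$ for which invertibility fails is discrete (by analytic Fredholm theory). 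Choosing $\gamma_\bullet$ small then ensures nothing new fails on the compact set of remaining $\sigma$ values by continuity of the index-$0$ family.

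\textbf{Main obstacle.} The delicate point is the third paragraph-level task: pinning down the kernel and cokernel of the unperturbed stationary operator precisely enough so that one can certify that a \emph{compactly supported} $\cd_\bullet$ exists whose associated pairing is nondegenerate. The threshold regularity and the weight window $\ell\in(-\tfrac32,-\tfrac12)$ must accommodate the explicit form of these exceptional zero modes (which are $1$-forms of asymptotic order $r^{-1}$ or $r^{-2}$ built from the Kerr metric's asymptotic symmetries), and one must verify that the added terms $E_\Ups, E_\CD$ do not disturb the radial point analysis at the event horizon; this is exactly the threshold regularity check alluded to in the remark.
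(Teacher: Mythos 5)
The paper does not prove this theorem: the statement is introduced by \emph{``Let us recall \cite[Theorem 3.5, Proposition 3.7, Proposition 3.12]{Hi24}''}, so it enters as a citation. Your reconstruction tracks the strategy underlying \cite{Hi24} (building on \cite{HHV21}, \cite{Va21a}, \cite{Va21b}): a microlocal Fredholm framework---radial point estimates at $N^*\cH^+$ and at $\pa_+X$, b-elliptic theory at $\sigma=0$---together with constraint damping/gauge modification to eliminate the zero-energy kernel. The outline is sound in its broad strokes, but two steps you pass through quickly are in fact the essential content of the cited result.

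First, invertibility for $\sigma\neq0$, $\Im\sigma\ge0$. You ``reduce the problem to mode stability for a wave-type operator on $1$-forms on the Kerr background, which holds in this regime.'' In the full subextremal range this is precisely the nontrivial input, resting ultimately on Whiting's theorem and its sharpenings via the mode analysis of \cite{AHW22}; it is not supplied by the Fredholm machinery, nor by your third paragraph's appeal to analytic Fredholm theory and discreteness of the failure set (a discrete set of bad $\sigma$ is not the same as an empty one). What your third paragraph actually accomplishes, once mode stability of the \emph{un}modified operator is in hand, is just the observation that invertibility of a continuous Fredholm family is open on a compact $\sigma$-range and that semiclassical high-energy estimates persist under compactly supported perturbations.

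Second, the existence of \emph{compactly supported} $\cd_\Ups,\cd_C$ for which the pairing matrix of $E_\Ups$, $E_\CD$ against the zero-energy kernel and dual states is nondegenerate must actually be verified against the explicit structure of those modes; nothing a priori rules out the scenario in which the dual states annihilate every perturbation of the permitted compactly supported form. You correctly flag this as the sticking point, but the ``standard Grushin/perturbation argument'' is not an existence proof for $\cd_\Ups,\cd_C$. That existence is exactly what \cite[Sections 3.1 and 3.2]{Hi24} establish, and is what the paper's text following the theorem directs the reader to.
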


We recall that we use the scattering density (i.e.\ the spatial part of the metric density $|\dd g_b|$) throughout. The spaces used for $\sigma\neq 0$ in Theorem~\ref{thm6.1} and \cite{Hi24} are discussed in the proof of Theorem~\ref{ThmOp} below.

We will fix in the following $\cd_{C}, \, \cd_{\Upsilon}\in C_c(X_b^\circ;T^*_{X_b^\circ}M^\circ_b)$ and $\gamma_C,\, \gamma_{\Upsilon}$ such that $\wh \Box^\CD_{g_b,\gamma_C}(\sigma)$, $\wh  \Box^{\Upsilon}_{g_b,\gamma_{\Upsilon}}(\sigma)$ are invertible on the above spaces for all ${\rm Im}\, \sigma\ge 0$. We note that $\gamma_C,\, \gamma_{\Upsilon}$ can be chosen arbitrary small; see \cite[\S\S{3.1}, {3.2}]{Hi24} for details. 
 
  \begin{thm}[Linearized gauge-fixed Einstein operator: spectral theory]
\label{ThmOp}
    Suppose that $s>\frac52$ and $\ell<-\half$ with $s+\ell>-\half$. Then for $\gamma_C,\, \gamma_{\Upsilon}$ sufficiently small we have:
  \begin{enumerate}
  \item\label{ItOpLow} {\rm (Uniform estimates for finite $\sigma$.)} For any fixed $C>1$, and $s_0<s$, $\ell_0<\ell$, there exists a constant $C'>0$ such that
    \begin{equation}
    \label{EqOpFinite}
      \Vert u\Vert_{\Hbext^{s,\ell}} \leq C'\Bigl(\bigl\Vert\wh{L_b}(\sigma)u\bigr\Vert_{\Hbext^{s-1,\ell+2}} + \Vert u\Vert_{\Hbext^{s_0,\ell_0}} \Bigr)
    \end{equation}
    for all $\sigma\in\C$, $\Im\sigma\in[0,C]$, satisfying $C^{-1}\leq |\sigma|\leq C$. If $\ell\in(-\tfrac32,-\half)$, then this estimate holds uniformly down to $\sigma=0$, i.e.\ for $|\sigma|\leq C$.
  \item\label{ItOpHigh} {\rm (High energy estimates in strips.)} For any fixed $C>0$, there exist $C_1>1$ and $C'>0$ such that for $\sigma\in\C$, $\Im\sigma\in[0,C]$, $|\Re\sigma|>C_1$, and $h:=|\sigma|^{-1}$, we have
    \begin{equation}
    \label{EqOpHigh}
      \Vert u\Vert_{\Hbhext^{s,\ell}} \leq C'\bigl\Vert\wh{L_b}(\sigma)u\bigr\Vert_{\Hbhext^{s,\ell+1}}.
    \end{equation}
  \end{enumerate}
  Moreover, the operators
  \begin{subequations}
  \begin{alignat}{3}
  \label{EqOpFredS}
    \wh{L_b}(\sigma)&\colon\{u\in\Hbext^{s,\ell}(X)\colon\wh{L_b}(\sigma)u\in\Hbext^{s,\ell+1}(X)\}&&\to\Hbext^{s,\ell+1}(X), \quad \Im\sigma\geq 0,\ \sigma\neq 0, \\
  \label{EqOpFred0}
    \wh{L_b}(0) &\colon \{u\in\Hbext^{s,\ell}(X)\colon\wh{L_b}(0)u\in\Hbext^{s-1,\ell+2}(X)\}&&\to \Hbext^{s-1,\ell+2}(X)
  \end{alignat}
  \end{subequations}
  are Fredholm operators of index $0$.
\end{thm}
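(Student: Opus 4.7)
The plan is to follow the microlocal scheme used in \cite{HHV21} for small $|a|$, substituting the full-subextremal-range inputs at the appropriate places: Dyatlov's normally hyperbolic trapping result \cite{Dy15}, the verification of radial-point threshold regularities deferred to the appendix, and the gauge-plus-constraint modification of \cite{Hi24}. Since $\fc_C$ and $\fc_\Ups$ are compactly supported in $X_b^\circ$, the principal symbol of $L_b$ coincides with that of the unmodified operator --- namely, $\tfrac12$ times the scalar wave principal symbol acting on symmetric $2$-tensors --- and so are the null-bicharacteristic flow, the radial-point structure at the event horizon and at $\partial_+ X$, the indicial family of $\wh{L_b}(0)$ at $\rho=0$, and the normally hyperbolic trapped set. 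None of these features depend on $a$ beyond the subextremal condition $a<\fm$.

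For part~\ref{ItOpLow}, I would assemble~\eqref{EqOpFinite} by gluing elliptic regularity off the characteristic set with Duistermaat--H\"ormander propagation of singularities inside it; radial point estimates at the event horizon $\partial_-\cX$ and at the artificial interior boundary $\partial_- X$ (which require $s > 5/2$ in the tensorial setting, as verified in the appendix); scattering radial point estimates at $\partial_+ X$ for $\sigma\neq 0$, which force $\ell < -\tfrac12$ and $s+\ell > -\tfrac12$; and normally hyperbolic trapping estimates of Wunsch--Zworski, Dyatlov, and Hintz. For $\sigma=0$ and $\ell\in(-\tfrac32,-\tfrac12)$, the scattering analysis at $\partial_+ X$ is replaced by Melrose's b-elliptic theory, whose indicial roots coincide with those of $\Box_{g_b}$ since the modification is compactly supported; this produces the uniform-down-to-$\sigma=0$ estimate stated in the theorem. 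Because $\gamma_C,\gamma_\Ups$ can be taken arbitrarily small, the subprincipal correction they introduce does not move the threshold past $s=5/2$ nor alter the indicial roots at $\partial_+ X$.

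For part~\ref{ItOpHigh}, the same building blocks have semiclassical analogues with $h=|\sigma|^{-1}$: semiclassical radial points at the horizons and at scattering infinity, semiclassical normally hyperbolic trapping, and semiclassical elliptic and real-principal-type propagation. The correction from $\gamma_C E_\CD$ and $\gamma_\Ups E_\Ups$ is a compactly supported zeroth-order term, hence subprincipal in $h$ on the region where it acts; for $|\Re\sigma|$ sufficiently large it is absorbed, which is why no error term appears in~\eqref{EqOpHigh}.

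For the Fredholm properties~\eqref{EqOpFredS} and~\eqref{EqOpFred0}, estimate~\eqref{EqOpFinite} together with the compact inclusion $\Hbext^{s,\ell}\hookrightarrow\Hbext^{s_0,\ell_0}$ yields a finite-dimensional kernel and closed range, while the same machinery applied to the formal $L^2$-adjoint in the reversed propagation direction gives a finite-dimensional cokernel; hence Fredholmness. Index zero then follows from homotopy invariance: within the open range $|a|<\fm$ the Fredholm index of $\wh{L_b}(\sigma)$ depends continuously on $b$ (the estimates are uniform under small perturbations of $b$) and is therefore locally constant, and it vanishes for small $|a|$ by \cite{HHV21}. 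The one genuinely new obstacle relative to \cite{HHV21} is ensuring that the tensorial radial-point threshold regularity at the event horizon remains below $5/2$ uniformly as $a\to\fm^-$; unlike in the slowly rotating regime this cannot be handled perturbatively from Schwarzschild and must instead be established by a direct computation of the linearized Hamilton flow on the conormal bundle of the horizon, which is the content of the appendix.
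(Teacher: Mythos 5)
Your outline of the estimates and the Fredholm property coincides with the route the paper takes (which is essentially a pointer to \cite[Theorem~4.3]{HHV21}, \cite[Theorem~7.1]{AHW22} and \cite[Propositions~3.15, 3.16]{Hi24}): the modifications $\gamma_\Upsilon E_\Upsilon$, $\gamma_C E_\CD$ are compactly supported and of lower order, so the symbolic/dynamical ingredients (radial points at $N^*\cH^+$ and at scattering infinity, real principal type propagation, normally hyperbolic trapping via \cite{Dy15}, b-elliptic theory at $\sigma=0$) are unchanged, and the only genuinely new quantitative input is the horizon threshold regularity, computed in the appendix to be $\tfrac52$ for the unmodified operator and continuous in $\gamma_C,\gamma_\Upsilon$. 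Two small corrections: with the normalization $L_b=2(D_{g_b}\Ric+\delta^*_{g_b,\gamma_C}\delta_{g_b,\gamma_\Upsilon}\sfG_{g_b})$ the principal part is $\Box_{g_b}$ itself (the factor $\tfrac12$ is absorbed); and there are no radial points at the artificial boundary $\pa_-X_b$ --- that hypersurface is spacelike, and one uses standard hyperbolic (energy) estimates in the extendible/supported framework there, the radial point estimates being needed only at $N^*\pa_-\cX$ and at $\pa_+X$.

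The place where you genuinely depart from the paper is the index computation, and as written this step has a gap. First, the endpoint of your homotopy is not covered by \cite{HHV21}: the operator there carries neither the gauge modification $\gamma_\Upsilon E_\Upsilon$ nor the constraint damping of \cite{Hi24}, so ``index $0$ for small $|a|$ by \cite{HHV21}'' requires an additional deformation in $(\gamma_\Upsilon,\gamma_C,\cd_\Upsilon,\cd_C)$ along which Fredholmness (in particular the condition $s>$ threshold at $N^*\cH^+$, which depends on the subprincipal symbol and hence on the damping) must be verified uniformly. Second, local constancy of the index along the family in $b$ is asserted but not justified: the domains are the $b$- and $\sigma$-dependent graph spaces $\{u\in\Hbext^{s,\ell}:\wh{L_b}(\sigma)u\in\Hbext^{s,\ell+1}\}$, and the coefficients change non-compactly (at spatial infinity) as $b$ varies, so one must argue that the uniform estimates of part~\usref{ItOpLow} (for $\wh{L_b}(\sigma)$ and its adjoint) hold locally uniformly in $b$ and imply stability of the index in this variable-domain setting; this is doable but is precisely the content one would have to supply. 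Note also that for $\sigma\neq 0$ the homotopy in $a$ is unnecessary: your own part~\usref{ItOpHigh} together with its adjoint analogue gives invertibility for $|\Re\sigma|$ large, and constancy of the index in $\sigma$ on the connected set $\{\Im\sigma\geq 0,\ \sigma\neq 0\}$ already forces index $0$ there --- this is the mechanism used in \cite{HHV21,AHW22}, which also treat the zero energy operator~\eqref{EqOpFred0} directly at fixed $b$ rather than by deforming to the slowly rotating case.
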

\begin{proof}
The proof is essentially the same as for \cite[Theorem 4.3]{HHV21}, \cite[Theorem 7.1]{AHW22}, but, in addition, one has to take into account the modification of the gauge. This can be done following \cite[Proposition 3.15, Proposition 3.16]{Hi24}. The spaces in \cite{Hi24} are slightly different from the above choice: namely, scattering spaces are used in \cite{Hi24} at non-zero energy. The choice of these spaces really comes from \cite{Va21a}, \cite{Va21b}, where scattering-b-Sobolev spaces are used. The spaces in \cite{Hi24} as well as the above spaces then correspond to specializations of these scattering-b-estimates for particular choices of the scattering- and b-regularity and decay orders. We refer to the proof of \cite[Theorem~4.7]{AHW22} for details. The threshold regularity is computed in the appendix to be $\frac52$ for the unmodified operator (initial wave map gauge and usual implementation). Once $\cd_{C}$, $\cd_{\Upsilon}\in C_c(X_b^\circ;T^*_{X_b^\circ}M^\circ_b)$ fixed, the threshold regularity depends continuously on $\gamma_C$ and $\gamma_{\Upsilon}$, which gives the result. 
\end{proof}

\section{Mode analysis}
\label{Sec7}

We next study the (zero) modes for the linearized Einstein operator (with and without gauge) as well as for the two $1$-form wave operators $\Box_{g_b,\gamma^\Ups}^\Ups$, $\Box^\CD_{g_b,\gamma_C}$. We do this by combining the results of \cite{Hi24} and \cite{AHW22}. Let us first recall from \cite[Lemma 3.14]{Hi24} that the linearized Kerr metric can be put into the chosen gauge, now that we have eliminated the zero mode of $\Box_{g_b,\gamma^\Ups}^\Ups$:

\begin{lemma}[Gauged linearized Kerr metrics]
There exists $1$-forms $\omega(\dot{b})\in \cA^{0-}(X_b;T^*_{X_b}M_b)$, depending linearly on $\dot{b}\in \R^4$, such that
\begin{align*}
  g_b^{\prime\Upsilon}(\dot{b}):=\dot g_b(\dot{b})+\delta_{g_b}^*\omega(\dot{b})\in (\ker\delta_{g_b,\gamma_\Ups}\sfG_{g_b}) \cap (\rho \cC^{\infty}+\cA^{2-}).
\end{align*}  
\end{lemma}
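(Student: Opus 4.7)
The plan is to solve the stationary equation
\[
\tfrac12\,\wh{\Box^\Upsilon_{g_b,\gamma_\Upsilon}}(0)\,\omega(\dot b) = -\delta_{g_b,\gamma_\Upsilon}\sfG_{g_b}\dot g_b(\dot b),
\]
for each $\dot b\in\R^4$, which is exactly equivalent to the gauge condition $\delta_{g_b,\gamma_\Ups}\sfG_{g_b}(\dot g_b(\dot b)+\delta_{g_b}^*\omega(\dot b))=0$ defining the kernel in the lemma. Concretely, I would first check that the right-hand side defines an element of the target space of Theorem~\ref{thm6.1}(2): $\dot g_b(\dot b)$ is stationary and smoothly extends across $\pa_-X_b$, with $\dot g_b(\dot b)\in\rho\cC^\infty$ near $\pa_+X_b$ (linearized Kerr decays as $r^{-1}$); the operator $\delta_{g_b,\gamma_\Ups}\sfG_{g_b}$ is a scattering first-order operator at infinity plus a compactly supported zeroth-order term, so it produces a stationary $1$-form in $\rho^2\cC^\infty$ near $\pa_+X_b$, which lies in $\Hbext^{s-1,\ell+2}(X_b;\wt\Tsc{}^*X_b)$ for $\ell\in(-\tfrac32,-\tfrac12)$ and any $s$.

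Next I would invoke the invertibility statement of Theorem~\ref{thm6.1}(2) (with $\cd_\Ups$ fixed and $\gamma_\Ups$ small), which yields a stationary $1$-form $\omega(\dot b)\in\Hbext^{s,\ell}(X_b;\wt\Tsc{}^*X_b)$ depending linearly on $\dot b$. The conormality $\omega(\dot b)\in\cA^{0-}(X_b;T^*_{X_b}M_b)$ then follows by standard b-regularity: since the right-hand side is conormal (in fact polyhomogeneous) at $\pa_+X_b$, and the normal operator of $\wh{\Box^\Upsilon_{g_b,\gamma_\Upsilon}}(0)$ is elliptic in the b-sense at infinity, one may propagate the conormality of the source to the solution. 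Linearity in $\dot b$ is inherited from the linearity of the inverse.

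Finally I would verify the membership $g_b^{\prime\Upsilon}(\dot b)\in\rho\cC^\infty+\cA^{2-}$. The contribution $\dot g_b(\dot b)$ already lies in $\rho\cC^\infty$. For the other summand I would split $\omega(\dot b)=\omega_0(\dot b)+\omega_1(\dot b)$, where $\omega_0(\dot b)$ is an explicit polyhomogeneous leading part (read off from the model operator of $\Box^\Upsilon_{g_b,\gamma_\Upsilon}$ at $\pa_+X_b$, recovered by solving its indicial equation applied to the explicit leading asymptotics of the right-hand side) and $\omega_1(\dot b)\in\cA^{1-}$ is the remainder. Then $\delta_{g_b}^*\omega_0(\dot b)\in\rho\cC^\infty$, while the scattering symmetric gradient maps $\delta_{g_b}^*\colon\cA^{1-}\to\cA^{2-}$, yielding the claimed decomposition.

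The main obstacle, both conceptually and technically, is the existence of a stationary Coulomb-type zero mode of the \emph{unmodified} gauge potential wave operator $\Box^\Upsilon_{g_b}$ (cf.\ \cite[Proposition~9.4]{HHV21}): this mode obstructs solvability of the stationary equation for the mass direction $\dot b=(1,0)$ in the standard harmonic gauge. This is precisely why the gauge modification by $\cd_\Upsilon$ with $\gamma_\Upsilon$ small was introduced in Theorem~\ref{thm6.1}, following \cite{Hi24}: it removes this zero mode and makes $\wh{\Box^\Upsilon_{g_b,\gamma_\Upsilon}}(0)$ invertible, so that the argument above can be carried out uniformly for all $\dot b\in\R^4$.
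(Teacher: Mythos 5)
The paper does not prove this lemma; it is cited from \cite[Lemma~3.14]{Hi24}. Nonetheless, your proposal has a genuine gap in the order of operations, which makes the central Fredholm step inapplicable as written.

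The problem is the decay claim for the right-hand side. The stationary $1$-form $f:=\delta_{g_b,\gamma_\Upsilon}\sfG_{g_b}\dot g_b(\dot b)$ genuinely decays like $r^{-2}$ and no faster: for instance, for a mass change on Schwarzschild, $\dot g_b(1,0)=-\tfrac{2}{r}(\dd t^2+\dd r^2)+O(r^{-2})$ is trace-free to leading order, and a direct computation gives $\delta_{g_b}\sfG_{g_b}\dot g_b(1,0)=2\,\dd(r^{-1})+O(r^{-3})$, which is $\rho^2\cC^\infty$ but not $\cA^{2+}$. For the weights $\ell\in(-\tfrac32,-\tfrac12)$ required by Theorem~\ref{thm6.1}(2), the Sobolev embedding $\Hb^{\infty,\ell'}\subset\cA^{\ell'+3/2}$ means $\Hbext^{s-1,\ell+2}\subset\cA^{(\ell+2)+3/2}$ with $\ell+\tfrac72>2$, and conversely $\rho^2\cC^\infty\subset\Hbext^{s-1,\ell+2}$ would require $\ell+2<\tfrac12$. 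So $f\notin\Hbext^{s-1,\ell+2}$, and you cannot apply the Fredholm/invertibility statement directly. Symmetrically, the eventual solution $\omega$ has a genuinely non-decaying leading piece (a constant multiple of $\chi\,\dd r$ for the mass direction), so $\omega\in\cA^0\subset\cA^{0-}$ but $\omega\notin\cA^{0+}\supset\Hbext^{s,\ell}$; the Fredholm inverse simply cannot produce it. Note also that if your step did go through, you would get $\omega\in\cA^{0+}$, which is \emph{stronger} than the $\cA^{0-}$ asserted in the lemma --- an internal inconsistency signalling that the direct route cannot be right.

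The fix is exactly the device you mention, but it must occur \emph{before} the Fredholm step, not after. One first constructs an explicit polyhomogeneous approximate gauge potential $\omega_0(\dot b)$ (with $\omega_0\in\cA^0$, e.g.\ $\omega_0\sim-\dot\bhm\,\chi\,\dd r$ for the mass direction) by matching the $\rho^2$-leading asymptotics of $f$ through the model/indicial equation at infinity; since the normal operator is essentially the Euclidean Laplacian, one solves $\Delta\omega_0\sim 2\dot\bhm\,\dd(r^{-1})$ termwise. This reduces the problem to $\tfrac12\wh{\Box^\Upsilon_{g_b,\gamma_\Upsilon}}(0)\,\omega_1=-f-\tfrac12\wh{\Box^\Upsilon_{g_b,\gamma_\Upsilon}}(0)\,\omega_0$, whose right-hand side now lies in $\cA^{3-}\subset\Hbext^{s-1,\ell+2}$ for appropriate $\ell\in(-\tfrac32,-\tfrac12)$. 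Only \emph{then} is Theorem~\ref{thm6.1}(2) applicable, giving $\omega_1\in\Hbext^{\infty,\ell}\subset\cA^{1-}$ after conormal bootstrapping, and $\omega:=\omega_0+\omega_1\in\cA^{0-}$; the claimed $g_b^{\prime\Ups}(\dot b)\in\rho\cC^\infty+\cA^{2-}$ then follows as you outline, since $\delta_{g_b}^*$ sends the explicit part $\omega_0$ to $\rho\cC^\infty$ (e.g.\ $\nabla^2 r\in\rho\cC^\infty$) and sends $\cA^{1-}$ to $\cA^{2-}$. Your proposal has all these ingredients --- the role of the gauge modification in removing the Coulomb zero mode, the $\omega_0+\omega_1$ split, the indicial-equation matching --- but deploys the split only as an after-the-fact refinement rather than as the essential preliminary reduction that makes the Fredholm step valid.
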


In the following some of the mode solutions will be parametrized by scalar/vector type spherical harmonics $\scalspace_l$ and $\vectspace_l$. If $Y_{l m}\in\CI(\Sph^2)$ are the usual spherical harmonics, then
\begin{align*}
\scalspace_l := \mathspan\{Y_{l m} \colon |m|\leq l\},\quad \vectspace_l := \slstar\sld\scalspace_l \quad (l\geq 1).
\end{align*}
The slashes indicate that these are operators on the unit 2-sphere, so for example $\sld\colon\CI(\Sph^2)\to\CI(\Sph^2;T^*\Sph^2)$ is the exterior differential and $\slstar$ is the Hodge-star. We refer to \cite[\S5]{HHV21} for details.

\subsection{Mode solutions for the \texorpdfstring{$1$}{1}-form wave operators}

To describe the kernel of the gauge-fixed linearized Einstein operator, we first have to go back to the kernel of $\wh \Box^{\Upsilon}_{g_b,\gamma_{\Upsilon}}(0)$ in weaker spaces. The following result, constructing asymptotic translations as gauge potentials, is close to \cite[Lemmas~3.11 and~3.13]{Hi24}. 

\begin{lemma}[Asymptotic translations]
Let $\chi=\chi(r)\in\CI(\R)$ be equal to $0$ in $r\le 3\bhm$ and $1$ in $r\ge 4\bhm$. Then
\begin{align}
\label{7.1}
\ker\wh{\Box_{g_b,\gamma_{\Upsilon}}^{\Upsilon}}(0)\cap \Hbext^{\infty,-\frac32-}=\la \omega_{{\rm s}0}^{(0)}\ra\oplus\{\omega_{{\rm s}1}(\scal):\, \scal\in \scalspace_1\},\\
\label{7.2}
\ker\wh{\Box_{g_b,\gamma_{C}}^\CD}(0)^*\cap \Hbsupp^{-\infty,-\frac32-}=\la \omega_{{\rm s}0}^*\ra\oplus\{\omega_{{\rm s}1}^*(\scal):\, \scal\in \scalspace_1\},
\end{align}
where $V^{\flat}:=g_b(V,\cdot)$ and
\begin{alignat*}{2}
\omega_{{\rm s}0}^{(0)}&=\partial_t^{\flat}, &\quad \omega_{{\rm s}1}(\scal)-\dd(r \scal)&\in \cA^{1-},\\
\omega_{{\rm s}0}^*-\chi \partial_t^\flat &\in\Hbsupp^{-\infty,-\frac12-}, &\quad \omega_{{\rm s}1}^*(\scal)-\chi \dd(r\scal)&\in \Hbsupp^{-\infty,-\frac12-}.  
\end{alignat*}
\end{lemma}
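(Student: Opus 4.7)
This lemma is an adaptation of \cite[Lemmas~3.11 and~3.13]{Hi24} from the slow rotation regime to the full subextremal range. My plan is to (i) exhibit the listed kernel elements explicitly, (ii) show they exhaust the kernel via an indicial analysis at $\pa_+X_b$ combined with Theorem~\ref{thm6.1}(2), and (iii) transfer the identification to the adjoint kernel via Fredholm duality.

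For $\omega_{s0}^{(0)}=\pa_t^\flat$: since $\pa_t$ is a Killing vector field for $g_b$, one has $\delta_{g_b}^*\pa_t^\flat=\tfrac12\cL_{\pa_t}g_b=0$, whence $\wh{\Box_{g_b,\gamma_\Upsilon}^\Upsilon}(0)\pa_t^\flat=2\wh{\delta_{g_b,\gamma_\Upsilon}}(0)\sfG_{g_b}\wh{\delta_{g_b}^*}(0)\pa_t^\flat=0$, and stationary boundedness places $\pa_t^\flat$ in $\Hbext^{\infty,-\frac32-}$. For $\omega_{s1}(\scal)$ with $\scal\in\scalspace_1$: the functions $r\scal$ agree with Cartesian coordinates on $\R^3$ up to $O(r^{-1})$ Kerr corrections, so $\dd(r\scal)$ is a bounded Cartesian $1$-form at $\pa_+X_b$ whose leading part is annihilated by the normal operator of $\wh{\Box_{g_b,\gamma_\Upsilon}^\Upsilon}(0)$ (the flat $1$-form Laplacian $-\Delta_{\R^3}$). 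This gives enough decay of $\wh{\Box_{g_b,\gamma_\Upsilon}^\Upsilon}(0)\dd(r\scal)$ that the Fredholm invertibility of Theorem~\ref{thm6.1}(2) on $\Hbext^{s,\ell}$ with $\ell\in(-\tfrac32,-\tfrac12)$ produces a correction $\omega'\in\cA^{1-}$ solving $\wh{\Box_{g_b,\gamma_\Upsilon}^\Upsilon}(0)\omega'=-\wh{\Box_{g_b,\gamma_\Upsilon}^\Upsilon}(0)\dd(r\scal)$; one then sets $\omega_{s1}(\scal):=\dd(r\scal)+\omega'$.

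To show exhaustion, let $\omega\in\ker\wh{\Box_{g_b,\gamma_\Upsilon}^\Upsilon}(0)\cap\Hbext^{\infty,-\frac32-}$. By b-ellipticity $\omega$ is conormal at $\pa_+X_b$ with leading asymptotics controlled by the indicial roots of the normal operator. On $\wt\Tsc{}^*X$-valued sections the indicial space at the threshold root $r^0$ is exactly four-dimensional: one dimension of constants in the time component (realized by $\omega_{s0}^{(0)}$) and three dimensions of constant Cartesian spatial $1$-forms $\dd x^j$ (realized by the $\omega_{s1}(\scal)$ for $\scal\in\scalspace_1$). Subtracting an appropriate linear combination improves the weight to $\Hbext^{\infty,-\frac12-}$, where Theorem~\ref{thm6.1}(2) forces the remainder to vanish.

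For the adjoint statement~\eqref{7.2}: $\wh{\Box_{g_b,\gamma_C}^\CD}(0)$ shares the structural features of $\wh{\Box_{g_b,\gamma_\Upsilon}^\Upsilon}(0)$ (both differ from $\Box_{g_b}$ by compactly supported and lower-order modifications), so the above argument yields a four-dimensional primal kernel on $\Hbext^{\infty,-\frac32-}$, and Fredholm duality under the scattering $L^2$-pairing produces a four-dimensional kernel for $\wh{\Box_{g_b,\gamma_C}^\CD}(0)^*$ on the corresponding space of supported distributions. To exhibit representatives, the cutoffs $\chi\pa_t^\flat$ and $\chi\,\dd(r\scal)$ lie in supported spaces, and $\wh{\Box_{g_b,\gamma_C}^\CD}(0)^*$ applied to either produces only a compactly supported error (arising from the commutator with $\chi$ together with the compactly supported $\gamma_C$-contribution), which can be corrected via the dual Fredholm solvability to yield $\omega_{s0}^*$ and $\omega_{s1}^*(\scal)$. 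The main obstacle will be to confirm that the compactly-supported modifications $\gamma_\Upsilon E_\Upsilon$, $\gamma_C E_\CD$ together with the Kerr deviation from Euclidean geometry neither shift the critical indicial root away from $r^0$ nor enlarge the four-dimensional indicial space; this should follow from the smallness of $\gamma_\Upsilon,\gamma_C$ and the compact support of $\cd_\Upsilon,\cd_C$ combined with the asymptotic flatness of Kerr.
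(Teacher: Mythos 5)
Your overall architecture (exhibit the elements explicitly following \cite[Lemmas~3.11, 3.13]{Hi24}, then prove exhaustion by a normal operator/indicial argument at $\pa_+X_b$ combined with the invertibility of Theorem~\ref{thm6.1} on weights in $(-\tfrac32,-\tfrac12)$) is exactly the route the paper takes, and the treatment of \eqref{7.1} is fine: the compact support of $\cd_\Ups,\cd_C$ means the normal operator is the flat componentwise Laplacian, the degree-$0$ indicial space is four-dimensional and is realized by $\pa_t^\flat$ and $\omega_{{\rm s}1}(\scal)$, and the remainder after subtraction lies in $\Hbext^{\infty,-\frac12-}$ where the operator is injective.

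The gap is in your treatment of \eqref{7.2}. ``Fredholm duality'' does not convert a four-dimensional kernel of $\wh{\Box^\CD_{g_b,\gamma_C}}(0)$ in $\Hbext^{\infty,-\frac32-}$ into a four-dimensional kernel of the adjoint in $\Hbsupp^{-\infty,-\frac32-}$: under the scattering $L^2$ pairing, $\ker\wh{\Box^\CD_{g_b,\gamma_C}}(0)^*\cap\Hbsupp^{-\infty,-\frac32-\eps}$ is the annihilator of the range of $\wh{\Box^\CD_{g_b,\gamma_C}}(0)$ acting on \emph{extendible} spaces with weight $\ell=-\tfrac12+\eps$, i.e.\ in the weight window $(-\tfrac12,\tfrac12)$ one level \emph{above} the one you analyzed; its dimension is governed by the index jump across the indicial root at $\ell=-\tfrac12$ (the $r^{-1}$-homogeneous solutions of the normal operator), not by the root at $-\tfrac32$. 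The two counts happen to both equal $4$, so your conclusion is true, but the inference as written is not justified; you would need the relative index theorem at the $-\tfrac12$ crossing together with injectivity, or --- more directly, and this is what the paper does following \cite[Proposition~5.2]{AHW22} --- run the normal operator argument on the adjoint itself: $\wh{\Box^\CD_{g_b,\gamma_C}}(0)^*$ (taken with respect to the scattering density) has the same flat normal operator, so any kernel element in $\Hbsupp^{-\infty,-\frac32-}$ has leading term a constant combination of $\dd t,\dd x^i$ near $\pa_+X_b$; subtracting your corrected elements $\omega_{{\rm s}0}^*$, $\omega_{{\rm s}1}^*(\scal)$ leaves a kernel element in $\Hbsupp^{-\infty,-\frac12-}$, which vanishes by the dual of the invertibility statement in Theorem~\ref{thm6.1}. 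Two smaller inaccuracies: applying the adjoint operator to $\chi\,\dd(r\scal)$ does \emph{not} produce only a compactly supported error --- the Kerr Christoffel symbols contribute a non-compactly-supported tail of order $r^{-3}$ --- but this error still lies in the space where the dual solvability applies, so your construction of $\omega_{{\rm s}1}^*(\scal)$ survives; and the preservation of the normal operator is due to the compact support of $\cd_\Ups,\cd_C$ alone, not to the smallness of $\gamma_\Ups,\gamma_C$.
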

\begin{proof}
  The proof follows \cite[Lemma 3.11]{Hi24} closely; this shows that the right hand side of the equations~\eqref{7.1} and \eqref{7.2} is contained in the left hand side. For the reverse inclusion, we follow \cite[Proposition~5.2]{AHW22}. Note that the operators considered here are slightly different from those considered in \cite{AHW22}; however, the argument we need to repeat here from the proof of \cite[Proposition~5.2]{AHW22} is a normal operator argument, and the relevant normal operators here and in \cite{AHW22} are the same.
\end{proof}

We now define $h_{{\rm s}1}(\scal):=\delta_{g_b}^*\omega_{{\rm s}1}(\scal)\in\Hbext^{\infty,\frac12-}$. 
\begin{lemma}[An auxiliary gauge potential]
\label{Lemma10Gens12}
  There exists
  \[
    \omega_{{\rm s}1}^{(1)}(\scal) \in \ker\wh{\Box_{g_b,\gamma_{\Upsilon}}^{\Upsilon}}(0) \cap \Hbext^{\infty,-\frac52-},\quad
    \omega_{{\rm s}1}^{(1)}(\scal)-r\scal\,\dd \ft\in\Hbext^{\infty,-\frac32-},
  \]
  depending linearly on $\scal\in\scalspace_1$.
\end{lemma}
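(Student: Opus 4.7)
The plan is to construct $\omega_{{\rm s}1}^{(1)}(\scal)$ as an explicit leading-order term plus a correction, following the strategy of \cite[Lemma~3.13]{Hi24}. First, fix a cutoff $\chi=\chi(r)\in C^\infty(\R)$ which vanishes for $r\le 4\bhm$ and equals $1$ for $r$ large, and set
\[
\omega^{(0)} := \chi(r)\,r\,\scal\,\dd\ft.
\]
Since $\chi=0$ near $r_0$, $\dd\ft=\dd t$ for $r\ge 4\bhm$, and $r\,\scal$ grows linearly at infinity, $\omega^{(0)}\in\Hbext^{\infty,-\frac52-}(X_b;\wt\Tsc{}^*X_b)$, and $\scal\mapsto\omega^{(0)}$ is manifestly linear.

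The key computational step is to verify that the error
\[
f := -\wh{\Box^\Upsilon_{g_b,\gamma_\Upsilon}}(0)\,\omega^{(0)}
\]
lies in $\Hbext^{\infty,\frac12+}(X_b;\wt\Tsc{}^*X_b)$. The reason is that $r\,\scal$ is a homogeneous degree-$1$ harmonic polynomial on $\R^3$ (as $\scal\in\scalspace_1$), hence is annihilated by the flat Euclidean Laplacian. Consequently the b-normal operator of $\wh{\Box^{\Upsilon}_{g_b,\gamma_\Upsilon}}(0)$ at $\pa_+X_b$, which reduces to the flat zero-frequency $1$-form wave operator (since the asymptotically Minkowskian leading behavior of $g_b$ is independent of $\bha$ and the gauge modification $\gamma_\Upsilon E_\Upsilon$ is compactly supported), annihilates $r\,\scal\,\dd t$. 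Unpacking $\delta_{g_b,\gamma_\Upsilon}\sfG_{g_b}\delta^*_{g_b}$ on $\omega^{(0)}$ and tracking the $O(\bhm/r)$ Schwarzschild and $O(a^2/r^2)$ angular-momentum corrections then yields the claimed decay of $f$. This is the same normal-operator analysis as in \cite[Lemma~3.13]{Hi24}, and is the main obstacle.

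With $f$ in hand, Theorem~\ref{thm6.1}(2), applied with our choice of $\cd_\Upsilon,\gamma_\Upsilon$, produces a solution $v\in\Hbext^{\infty,-\frac32+}\subset\Hbext^{\infty,-\frac32-}$ of $\wh{\Box^\Upsilon_{g_b,\gamma_\Upsilon}}(0)\,v=f$: indeed, taking $\ell\in(-\tfrac32,-\tfrac12)$ slightly above $-\tfrac32$, the invertibility of $\wh{\Box^\Upsilon_{g_b,\gamma_\Upsilon}}(0)\colon\Hbext^{s,\ell}\to\Hbext^{s-1,\ell+2}$ combined with $f\in\Hbext^{\infty,\ell+2}$ yields the desired $v$. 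Setting $\omega_{{\rm s}1}^{(1)}(\scal):=\omega^{(0)}+v$ gives an element of $\ker\wh{\Box^\Upsilon_{g_b,\gamma_\Upsilon}}(0)\cap\Hbext^{\infty,-\frac52-}$, and
\[
\omega_{{\rm s}1}^{(1)}(\scal)-r\,\scal\,\dd\ft = (\chi-1)\,r\,\scal\,\dd\ft + v \in \Hbext^{\infty,-\frac32-},
\]
because $(\chi-1)\,r\,\scal\,\dd\ft$ is compactly supported in $X_b^\circ$ and $v\in\Hbext^{\infty,-\frac32-}$. Linear dependence on $\scal$ is inherited from each step of the construction.
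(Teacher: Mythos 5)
Your overall strategy (explicit leading term $\chi\,r\scal\,\dd\ft$ plus a correction obtained by solving away the zero-energy error) is the same as the paper's, but there is a genuine gap at the decisive step: the claimed membership $f=-\wh{\Box^\Upsilon_{g_b,\gamma_\Upsilon}}(0)\,\omega^{(0)}\in\Hbext^{\infty,\frac12+}$ is unjustified and is false in general. The annihilation of $r\scal\,\dd t$ by the flat b-normal operator buys exactly one order of decay, not two: since $\wh{\Box^\Upsilon_{g_b,\gamma_\Upsilon}}(0)\in\rho^2\Diffb^2(X)$ and its coefficients differ from those of its normal operator by terms of relative size $\rho$ (the $O(\bhm/r)$ deviation of $g_b$ from Minkowski), applying the difference to $r\scal\,\dd\ft\in\rho^{-1}\CI+\cA^{0-}$ yields only $f\in\rho^2\CI(X)+\cA^{3-}\subset\Hbext^{\infty,\frac12-}$, with the $\rho^2$ coefficient generically non-vanishing (already for Schwarzschild; "tracking the $O(\bhm/r)$ corrections" produces rather than kills these terms). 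Hence $f\notin\Hbext^{\infty,\ell+2}$ for any $\ell\in(-\frac32,-\frac12)$, and Theorem~\ref{thm6.1}(2) cannot be applied as you do. A consistency check: if your decay claim were true, your construction would give $\omega_{{\rm s}1}^{(1)}(\scal)-r\scal\,\dd\ft\in\Hbext^{\infty,-\frac32+}$, strictly stronger than the lemma's stated $-\frac32-$, and the paper's footnote about the adjoint kernel at weight $-\frac12+$ would be superfluous.

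The correct continuation—and the point of that footnote—is to solve $\wh{\Box^\Upsilon_{g_b,\gamma_\Upsilon}}(0)v=f$ with $v$ in the slightly larger space $\Hbext^{\infty,-\frac32-}$, just outside the invertibility range of Theorem~\ref{thm6.1}(2). At this weight the operator is no longer injective (its kernel is the $4$-dimensional space in \eqref{7.1}), but it \emph{is} surjective onto $\Hbext^{\infty,\frac12-}$, because the relevant cokernel is $\ker\wh{\Box^\Upsilon_{g_b,\gamma_\Upsilon}}(0)^*\cap\Hbsupp^{-\infty,-\frac12+}$, which vanishes for the modified gauge by the choice of $\cd_\Upsilon,\gamma_\Upsilon$; this is exactly the mechanism invoked again at the end of the proof of Proposition~\ref{prop7.5}, and it is what replaces the more delicate orthogonality verification needed in \cite[Lemma~7.11]{HHV21}. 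With this substitution (and your cutoff and linearity observations, which are fine), the construction $\omega_{{\rm s}1}^{(1)}(\scal)=\omega^{(0)}+v$ gives precisely the stated conclusion $\omega_{{\rm s}1}^{(1)}(\scal)-r\scal\,\dd\ft\in\Hbext^{\infty,-\frac32-}$.
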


In the proof one corrects $r\scal\, \dd \ft$ to an actual solution by normal operator arguments. It works in the same way as in the small $a$ case;\footnote{It is even easier because $\ker( \wh \Box^{\Upsilon}_{g_b,\gamma_{\Upsilon}})^*\cap \Hbsupp^{-\infty,-\frac12+}=\{0\}$ in the present case} see \cite[Lemma 7.11]{HHV21} for details. 

\begin{lemma}[No generalized zero modes with spatial decay]
  For, $d\ge 0$ there does not exist a $1$-form of the form $\omega=\sum_{j=0}^d \omega_j t_*^j$ with $\omega_j\in \Hbext^{\infty,-\frac32+},\, \omega_d\neq 0$ such that $\wh{\Box_{g_b,\gamma_{\gamma_{\Upsilon}}}^{\Upsilon}}(0)\omega=0$. 
\end{lemma}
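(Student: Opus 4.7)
The plan is to reduce this to the zero-energy invertibility of $\wh{\Box^\Ups_{g_b,\gamma_\Ups}}(0)$ (available from Theorem~\ref{thm6.1} once $\cd_\Ups$ and a sufficiently small $\gamma_\Ups$ have been fixed) by extracting the top-order-in-$t_*$ coefficient. Since the Kerr metric is stationary in the coordinates $(t_*,x)$, the spacetime operator $\Box^\Ups_{g_b,\gamma_\Ups}$ has $t_*$-independent coefficients and decomposes as $A_0 + A_1\partial_{t_*} + A_2\partial_{t_*}^2$ with each $A_k$ a spatial differential operator, and with $A_0 = \wh{\Box^\Ups_{g_b,\gamma_\Ups}}(0)$.

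Applying this decomposition to $\omega = \sum_{j=0}^d t_*^j \omega_j$, I would read off the coefficient of the top power $t_*^d$: the pieces $A_1\partial_{t_*}$ and $A_2\partial_{t_*}^2$ applied to $t_*^d \omega_d$ produce only powers $t_*^{d-1}$ and $t_*^{d-2}$, while the lower terms $t_*^j \omega_j$ with $j<d$ yield nothing at order $t_*^d$. Hence the vanishing of the equation forces
\[
\wh{\Box^\Ups_{g_b,\gamma_\Ups}}(0)\,\omega_d = 0.
\]

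To finish, I would invoke Theorem~\ref{thm6.1} together with the subsequent choice of $\cd_\Ups$ and a sufficiently small $\gamma_\Ups$: under these choices, $\wh{\Box^\Ups_{g_b,\gamma_\Ups}}(0)$ is invertible on $\Hbext^{s,\ell}$ for $s>\tfrac32$ and $\ell\in(-\tfrac32,-\tfrac12)$. Now $\omega_d \in \Hbext^{\infty,-\frac32+}$ means $\omega_d\in\Hbext^{s,-\frac32+\eps}$ for every $\eps>0$; choosing $\eps>0$ small enough that $-\tfrac32+\eps\in(-\tfrac32,-\tfrac12)$, invertibility forces $\omega_d=0$, contradicting the standing assumption $\omega_d\neq 0$. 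I do not anticipate any real obstacle: the analytic substance is already encoded in Theorem~\ref{thm6.1}, and the argument is essentially the standard algebraic extraction of the leading coefficient of a polynomially-growing mode for a stationary operator.
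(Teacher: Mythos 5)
Your argument is correct and is essentially the paper's own proof: the paper uses that $\pa_{t_*}$ is Killing to conclude $\wh{\Box_{g_b,\gamma_\Ups}^\Ups}(0)\omega_d=0$ (your extraction of the $t_*^d$ coefficient is the same step) and then invokes the zero-energy invertibility from Theorem~\ref{thm6.1} on $\Hbext^{s,\ell}$ with $\ell\in(-\tfrac32,-\tfrac12)$ to force $\omega_d=0$. The only cosmetic difference is that the paper then iterates to kill all $\omega_j$, while you stop at the contradiction with $\omega_d\neq 0$, which suffices for the statement.
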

\begin{proof}
  As $\partial_{t_*}$ is Killing we see that $\omega_d\in \Hbext^{\infty,-\frac32+}$ has to solve $\wh{\Box_{g_b,\gamma_{\Upsilon}}^{\Upsilon}}(0)\omega_d=0$; it is thus zero by Theorem \ref{thm6.1}. Repeating the argument shows that $\omega_j=0$ for all $j$. 
\end{proof}

\begin{prop}[Asymptotic Lorentz boosts as gauge potentials]
\label{prop7.5}
There exists
\begin{align*}
\hat{\omega}_{{\rm s}1}(\scal)\in \ker\Box_{g_b,\gamma_{\Upsilon}}^{\Upsilon}\cap\Poly^1(t_*)\Hbext^{\infty,-\frac52-}
\end{align*}
with linear dependence on $\scal\in \scalspace_1$, which in addition satisfies 
\begin{align*}
\delta_{g_b}^*\wh \omega_{{\rm s}1}\in\Poly^1(t_*)\Hbext^{\infty,-\frac12-}. 
\end{align*}
We have  $\hat{\omega}_{{\rm s}1}(\scal)=t(\dd x^i +\cA^1)-x^i\,\dd t+\cA^{0-}$ when $\scal=r^{-1}x^i$. 
\end{prop}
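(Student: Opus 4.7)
The guiding model is the Minkowski Killing covector $t\,\dd x^i - x^i\,\dd t$. I propose
$$
  \hat\omega_{{\rm s}1}(\scal) := \ft\,\omega_{{\rm s}1}(\scal) - \omega_{{\rm s}1}^{(1)}(\scal) + \tilde\omega(\scal),
$$
with a stationary correction $\tilde\omega(\scal)$ to be determined. Because $\ft = t$ outside a compact $r$-region and $\omega_{{\rm s}1}(\scal) = \dd(r\scal) + \cA^{1-}$, $\omega_{{\rm s}1}^{(1)}(\scal) = r\scal\,\dd\ft + \Hbext^{\infty,-3/2-}$, the leading asymptotic for $\scal = r^{-1}x^i$ reproduces $t\,\dd x^i - x^i\,\dd t$, giving the stated asymptotic form once $\tilde\omega \in \cA^{0-}$ is shown. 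Rewriting $\ft = t_* + G(r)$ with stationary $G(r) \sim r$ at infinity, and using $\omega_{{\rm s}1} \in \Hbext^{\infty,-3/2-}$ and $\omega_{{\rm s}1}^{(1)} \in \Hbext^{\infty,-5/2-}$, the membership $\hat\omega_{{\rm s}1}(\scal) \in \Poly^1(t_*)\Hbext^{\infty,-5/2-}$ is immediate.

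\textbf{Reduction and inversion.} Since $\omega_{{\rm s}1}(\scal), \omega_{{\rm s}1}^{(1)}(\scal) \in \ker\Box^\Ups_{g_b,\gamma_\Ups}$ by the preceding lemmas, applying $\Box^\Ups_{g_b,\gamma_\Ups}$ reduces the kernel condition on $\hat\omega_{{\rm s}1}(\scal)$ to
$$
  \wh{\Box^\Ups_{g_b,\gamma_\Ups}}(0)\,\tilde\omega(\scal) = f(\scal) := -[\Box^\Ups_{g_b,\gamma_\Ups},\ft]\,\omega_{{\rm s}1}(\scal).
$$
The commutator $[\Box^\Ups_{g_b,\gamma_\Ups},\ft] = 2 g_b^{-1}(\dd\ft,\cdot)\nabla + \Box_{g_b}\ft$ (the gauge-modification terms involving $\cd_\Ups,\cd_C$ contribute only compactly supported corrections) is a first-order differential operator; because $\ft = t$ near infinity and $\partial_t$ is Killing, the flat leading contribution $2\nabla_{\partial_t}\omega_{{\rm s}1}(\scal)$ vanishes, and what remains comes from the scattering-type decay of $g_b^{-1} - \eta^{-1}$, of $\Box_{g_b}\ft$, and of $\nabla\partial_t$, all of order $\cO(\rho^2)$ on the stationary section $\omega_{{\rm s}1}(\scal)$. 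Hence $f(\scal) \in \Hbext^{\infty,1/2-}(X_b;\wt\Tsc{}^*X_b)$, depending linearly on $\scal$. By Theorem~\ref{thm6.1}\eqref{Eq1StatOp} and the subsequent choices of $\cd_\Ups,\cd_C,\gamma_\Ups,\gamma_C$ making $\wh{\Box^\Ups_{g_b,\gamma_\Ups}}(0)$ invertible on $\Hbext^{\infty,\ell} \to \Hbext^{\infty,\ell+2}$ for $\ell \in (-\tfrac32,-\tfrac12)$, taking $\ell + 2 = \tfrac12-$ yields $\tilde\omega(\scal) \in \Hbext^{\infty,-3/2+} \subset \cA^{0-}$, linear in $\scal$.

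\textbf{Symmetric gradient.} Using $h_{{\rm s}1}(\scal) = \delta^*_{g_b}\omega_{{\rm s}1}(\scal) \in \Hbext^{\infty,1/2-}$ and the Leibniz rule,
$$
  \delta^*_{g_b}\hat\omega_{{\rm s}1}(\scal) = \ft\,h_{{\rm s}1}(\scal) + \bigl(\dd\ft\otimes_s\omega_{{\rm s}1}(\scal) - \delta^*_{g_b}\omega_{{\rm s}1}^{(1)}(\scal)\bigr) + \delta^*_{g_b}\tilde\omega(\scal).
$$
The scattering-bounded leading order of the middle bracket, where $\ft = t$, equals $\dd t\otimes_s \dd x^i - \dd x^i\otimes_s \dd t = 0$ by symmetry of the tensor product (the flat Killing identity); the remainder thus lies in $\Hbext^{\infty,-1/2-}$. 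The term $\delta^*_{g_b}\tilde\omega(\scal)$ is a scattering first-order operator applied to $\Hbext^{\infty,-3/2+}$, hence in $\Hbext^{\infty,-1/2+} \subset \Hbext^{\infty,-1/2-}$. Splitting $\ft = t_* + G(r)$, the product $G(r)\,h_{{\rm s}1}(\scal)$ lies in $\Hbext^{\infty,-1/2-}$ because the $\rho^{1/2-}$ weight of $h_{{\rm s}1}$ absorbs the $\rho^{-1}$ growth of $G(r)$. Collecting the pieces yields $\delta^*_{g_b}\hat\omega_{{\rm s}1}(\scal) \in \Poly^1(t_*)\Hbext^{\infty,-1/2-}$.

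\textbf{Main obstacle.} The essential technical point is the decay of the forcing $f(\scal)$: extracting the Minkowski-Killing cancellation from $[\Box^\Ups_{g_b,\gamma_\Ups},\ft]\omega_{{\rm s}1}(\scal)$ requires careful bookkeeping of the asymptotic flatness of $g_b$ and the stationarity of $\omega_{{\rm s}1}(\scal)$. The analogous cancellation underlying the $\delta^*_{g_b}$ claim rests on the same Minkowski-Killing structure, so both places succeed or fail together.
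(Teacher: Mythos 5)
Your overall scheme coincides with the paper's: the ansatz $\ft\,\omega_{{\rm s}1}(\scal)-\omega_{{\rm s}1}^{(1)}(\scal)+\tilde\omega$, the reduction to $\wh{\Box_{g_b,\gamma_{\Upsilon}}^{\Upsilon}}(0)\tilde\omega=-[\Box^{\Upsilon}_{g_b,\gamma_{\Upsilon}},\ft]\omega_{{\rm s}1}(\scal)$, and the symmetric-product cancellation $\dd t\otimes_s\dd x^i-\dd x^i\otimes_s\dd t=0$ behind the $\delta_{g_b}^*$ claim are all the same. The genuine gap is in your inversion step. The forcing lies in $\Hbext^{\infty,\frac12-}$, i.e.\ its weight is (just) \emph{below} the target weights $\ell+2\in(\tfrac12,\tfrac32)$ for which Theorem~\ref{thm6.1} gives invertibility of the zero-energy operator~\eqref{Eq1StatOp}: ``taking $\ell+2=\tfrac12-$'' forces $\ell=-\tfrac32-$, which is outside $(-\tfrac32,-\tfrac12)$, so the theorem you invoke simply does not apply there; conversely, to obtain a solution in $\Hbext^{\infty,-\frac32+}$ (domain weight inside the invertible range) you would need the forcing to lie in $\Hbext^{\infty,\frac12+}$, which you do not establish and which is generically false — the commutator has a genuine $\rho^2$ term (note also that $g_b^{-1}-\eta^{-1}=\cO(\rho)$, not $\cO(\rho^2)$ as you write, so even the $\cO(\rho^2)$ bookkeeping needs the extra gain from spatial scattering derivatives hitting the stationary section). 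Accordingly, the claimed membership $\tilde\omega\in\Hbext^{\infty,-\frac32+}$ is unjustified; a $\rho^2$ forcing generically produces a borderline ($\log r$-type) contribution, so one can only expect $\tilde\omega\in\Hbext^{\infty,-\frac32-}$.

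The paper closes precisely this point by a cokernel argument rather than by quoting invertibility: one solves the equation with $\tilde\omega\in\Hbext^{\infty,-\frac32-}$ because the obstruction to solvability at that (slightly too weak) weight is $\ker\wh{\Box_{g_b,\gamma_{\Upsilon}}^{\Upsilon}}(0)^*\cap\Hbsupp^{-\infty,-\frac12+}$, which vanishes since it is contained in $\ker\wh{\Box_{g_b,\gamma_{\Upsilon}}^{\Upsilon}}(0)^*\cap\Hbsupp^{-\infty,-\frac12-}=\{0\}$, the latter being a consequence of the invertibility in Theorem~\ref{thm6.1} on the allowed range of weights. With $\tilde\omega\in\Hbext^{\infty,-\frac32-}$ in place of $\Hbext^{\infty,-\frac32+}$, the rest of your argument survives unchanged: $\tilde\omega\in\cA^{0-}$, and $\delta_{g_b}^*\tilde\omega\in\Hbext^{\infty,-\frac12-}$ since $\delta_{g_b}^*$ gains one order of decay on stationary sections. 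So the repair is local to the inversion step, but as written that step — the functional-analytic heart of the proposition — fails because the invertibility statement is applied outside its weight range.
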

\begin{proof}
We follow the proof of \cite[Proposition 7.13]{HHV21}. Fixing $\scal\in\scalspace_1$ we make the ansatz 
\begin{align*}
\hat{\omega}_{{\rm s}1}(\scal)=\ft\omega_{{\rm s}1}(\scal)+\breve{\omega}_{{\rm s}1}
\end{align*}
and write 
\begin{align*}
\breve \omega_{{\rm s}1}=(-1)\omega_{{\rm s}1}^{(1)}(\scal)+\breve \omega'_{{\rm s}1}
\end{align*}
with $\breve\omega'_{{\rm s}1}$ to be determined. This then gives:
\[
    \hat h_{{\rm s}1} := \delta_{g_b}^*\hat\omega_{{\rm s}1} = \ft h_{{\rm s}1}(\scal) + \bigl([\delta_{g_b}^*,\ft]\omega_{{\rm s}1}(\scal) + \delta_{g_b}^*\breve\omega_{{\rm s}1}\bigr).
  \]
Using $h_{{\rm s}1}(\scal)\in \Hbext^{\infty,\frac12-}$ and $t_*-\ft\in\cA^{-1}$, we therefore have $\hat\omega_{{\rm s}1}\in\ker\Box^{\Upsilon}_{g_b,\gamma_{\Upsilon}}$, $\hat h_{{\rm s}1}\in\Poly^1(t_*)\Hbext^{\infty,-\frac12-}$ provided the two conditions
  \begin{subequations}
  \begin{gather}
  \label{EqL0Lins1ExistBox}
    \wh{\Box^{\Upsilon}_{g_b,\gamma_{\Upsilon}}}(0)\breve\omega_{{\rm s}1}=-[\Box^{\Upsilon}_{g_b,\gamma_{\Upsilon}},\ft]\omega_{{\rm s}1}(\scal), \\
  \label{EqL0Lins1ExistDel}
    [\delta_{g_b}^*,\ft]\omega_{{\rm s}1}(\scal) + \delta_{g_b}^*\breve\omega_{{\rm s}1} \in \Hbext^{\infty,-\frac12-}
  \end{gather}
  \end{subequations}
  hold. The membership~\eqref{EqL0Lins1ExistDel} holds provided $\breve\omega'_{{\rm s}1}\in \Hbext^{\infty,-\frac32-}$ because 
  \begin{align*}
[\delta_{g_b},\ft]\dd(r\scal)+\delta_{g_b}^*\omega_{{\rm s}1}^{(1)}\in \Hbext^{\infty,-\frac12-},
  \end{align*}
  see \cite[Proof of Proposition 7.13]{HHV21} for details. \eqref{EqL0Lins1ExistBox} is equivalent to 
\begin{align}
\label{7.4}
\wh{\Box_{g_b,\gamma_{\Upsilon}}^{\Upsilon}}(0)\breve{\omega}'_{{\rm s}1}=-[\Box_{g_b,\gamma_{\Upsilon}}^{\Upsilon},\ft]\omega_{{\rm s}1}(\scal).
\end{align}
We expand the commutator $[\Box^{\Upsilon}_{g_b,\gamma_{\Upsilon}},\ft]=[2\delta_{g_b,\gamma_{\Upsilon}}\sfG_{g_b},\ft]\delta_{g_b}^*+2\delta_{g_b,\gamma_{\Upsilon}}\sfG_{g_b}[\delta_{g_b}^*,\ft]$, and find
  \begin{align*}
    \wh{\Box_{g_b,\gamma_\Ups}^\Ups}(0)\breve\omega_{{\rm s}1}' &= - [2\delta_{g_b,\gamma_{\Upsilon}}\sfG_{g_b},\ft]h_{{\rm s}1}(\scal) \\
      &\qquad - 2\delta_{g_b,\gamma_{\Upsilon}}\sfG_{g_b}\bigl([\delta_{g_b}^*,\ft]\omega_{{\rm s}1}(\scal)-\delta_{g_b}^*\omega_{{\rm s}1}^{(1)}(\scal)\bigr) \in \Hbext^{\infty,\frac12-}.
  \end{align*}
We can solve this equation for $\breve{\omega}'_{{\rm s}1}\in \Hbext^{\infty,-\frac32-}$ because 
\[
  \ker\wh{\Box_{g_b,\gamma_{\Upsilon}}^{\Upsilon}}(0)^*\cap \Hbsupp^{-\infty,-\frac12+}\subset \ker\wh{\Box_{g_b,\gamma_{\Upsilon}}^{\Upsilon}}(0)^*\cap \Hbsupp^{-\infty,-\frac12-}=\{0\}.\qedhere
\]
\end{proof}

\subsection{Mode solutions for the gauge-fixed linearized Einstein operator} 

From \cite[Proposition 3.16]{Hi24}, we get:

\begin{thm}[Zero energy mode solutions for $L_b$]
\label{ThmL0}
Let $s>\frac52$. 
    \begin{enumerate}
  \item For $\Im\sigma\geq 0$, $\sigma\neq 0$, the operator
    \begin{align*}
      \wh{L_{b}}(\sigma)&\colon\{h\in \Hbext^{s,\ell}(X;S^2\,\wt\Tsc{}^*X)\colon\wh{L_{b}}(\sigma)h\in\Hbext^{s,\ell+1}(X;S^2\,\wt\Tsc{}^*X)\} \\
        &\qquad \to\Hbext^{s,\ell+1}(X;S^2\,\wt\Tsc{}^*X)
    \end{align*}
    is invertible when $\ell<-\half$, $s+\ell>-\half$.
  \item For $\ell\in(-\tfrac32,-\half)$, the zero energy operator
    \begin{equation}
    \label{EqL0Op}
    \begin{split}
      \wh{L_{b}}(0)&\colon\{h\in\Hbext^{s,\ell}(X;S^2\,\wt\Tsc{}^*X)\colon\wh{L_{b}}(0)h\in\Hbext^{s-1,\ell+2}(X;S^2\,\wt\Tsc{}^*X)\} \\
        &\qquad \to\Hbext^{s-1,\ell+2}(X;S^2\,\wt\Tsc{}^*X)
    \end{split}
    \end{equation}
    has 7-dimensional kernel and cokernel, given by 
    \begin{alignat*}{2}
    \cK_b&:=\ker_{\Hbext^{\infty,-\frac12-}}\hat{L}_b(0)&&=\mathspan\left(\{g_b^{\prime\Ups}(\dot{b}):\, \dot{b}\in \R^4\}\cup \{h_{{\rm s}1}(\scal):\, \scal \in \scalspace_1\}\right), \\ 
    \cK_b^*&:= \ker_{\Hbsupp^{-\infty,-\frac12-}}\hat{L}_b(0)^*&&=\mathspan\bigl(\{h_{{\rm s}0}^*\}\cup\{h_{{\rm v}1}^*(\vect):\vect\in \vectspace_1\}\cup\{h^*_{{\rm s}1}(\scal):\scal\in \scalspace_1\}\bigr).   
    \end{alignat*} 
   \end{enumerate}
   Moreover, we have
   \begin{align*}
   h_{{\rm s}1}(\scal)\in \Hbext^{\infty,\frac12-}, \quad h_{{\rm s}0}^*,\, h_{{\rm s}1}(\scal),\, h_{{\rm v}1}(\vect)\in \Hbsupp^{-\infty,\frac12-}. 
   \end{align*}
\end{thm}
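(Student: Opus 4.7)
The Fredholm property and index~$0$ statement in both~\eqref{EqOpFredS}--\eqref{EqL0Op} were already established in Theorem~\ref{ThmOp}, so the task is twofold: show that the kernel is trivial for $\Im\sigma\geq 0$, $\sigma\neq 0$, and identify the 7-dimensional kernel (and hence 7-dimensional cokernel) at $\sigma=0$. My approach is to combine the constraint propagation identity of the DeTurck trick with the invertibility of $\wh{\Box^\CD_{g_b,\gamma_C}}(\sigma)$ from Theorem~\ref{thm6.1}, reducing the analysis of $\ker\wh{L_b}(\sigma)$ on Sobolev spaces to the problem of classifying solutions of the \emph{ungauged} equation $D_{g_b}\Ric(h)=0$ together with the linearized gauge condition, for which the mode stability statement of \cite{AHW22} is available.

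\textbf{Step 1: invertibility for $\sigma\neq 0$.} Suppose $h$ lies in the kernel of $\wh{L_b}(\sigma)$ on the space~\eqref{EqOpFredS}. Apply $\delta_{g_b}\sfG_{g_b}$ to the equation $L_b h=0$; since $g_b$ is Ricci-flat, the linearized second Bianchi identity gives $\delta_{g_b}\sfG_{g_b}D_{g_b}\Ric(h)=0$, and hence the 1-form $\eta:=\delta_{g_b,\gamma_\Ups}\sfG_{g_b}h$ satisfies $\wh{\Box^\CD_{g_b,\gamma_C}}(\sigma)\eta=0$. With our fixed choice of $\cd_C,\cd_\Ups,\gamma_C,\gamma_\Ups$ from Theorem~\ref{thm6.1}, this operator is invertible on the corresponding b-Sobolev space (into which $\eta$ falls by boundedness of $\delta_{g_b,\gamma_\Ups}\sfG_{g_b}$ between the appropriate spaces), so $\eta=0$. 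Therefore $h$ is a mode solution of the ungauged linearized Einstein equation which additionally satisfies the linearized generalized harmonic gauge condition. The mode stability theorem \cite{AHW22}, valid in the full subextremal range, then forces $h=0$.

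\textbf{Step 2: kernel at $\sigma=0$.} The class $\{g_b^{\prime\Ups}(\dot b):\dot b\in\R^4\}$ lies in $\ker\wh{L_b}(0)$ by construction: $g_b^{\prime\Ups}(\dot b)$ solves the linearized Einstein equation (as a linearization of an exact family of Ricci-flat metrics) and is in the linearized gauge by the preceding lemma. Moreover, for $\scal\in\scalspace_1$, the tensor $h_{{\rm s}1}(\scal)=\delta_{g_b}^*\omega_{{\rm s}1}(\scal)$ lies in the kernel too: it is a Lie derivative of $g_b$, so $D_{g_b}\Ric(h_{{\rm s}1}(\scal))=0$, while the gauge term contributes $\frac12\Box^\Ups_{g_b,\gamma_\Ups}\omega_{{\rm s}1}(\scal)=0$ by the first lemma of the section. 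These explicit elements together span a $4+3=7$ dimensional space, and one checks they are linearly independent using the asymptotic expansions given for $g_b^{\prime\Ups}(\dot b)$ and $\omega_{{\rm s}1}(\scal)$.

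\textbf{Step 3: exhaustiveness of the kernel and cokernel description.} To show these account for the \emph{entire} kernel, consider an arbitrary $h\in\ker\wh{L_b}(0)\cap\Hbext^{\infty,-\frac12-}$ (after a priori elliptic b-regularity upgrade from $\Hbext^{s,\ell}$). Again $\eta=\delta_{g_b,\gamma_\Ups}\sfG_{g_b}h$ solves $\wh{\Box^\CD_{g_b,\gamma_C}}(0)\eta=0$, and our choice of $\gamma_C$ ensures invertibility of $\wh{\Box^\CD_{g_b,\gamma_C}}(0)$, so $\eta=0$. Then $D_{g_b}\Ric(h)=0$ and $h$ is in the linearized gauge; the classification of such solutions from \cite{AHW22} (identifying them as linearized Kerr modulo pure-gauge terms $\delta_{g_b}^*\omega$ with $\omega\in\ker\wh{\Box^\Ups_{g_b,\gamma_\Ups}}(0)$) together with the structure of $\ker\wh{\Box^\Ups_{g_b,\gamma_\Ups}}(0)$ from the preceding lemmas confirms that $h$ lies in the 7-dimensional span constructed in Step~2. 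The 7-dimensional cokernel description follows from index~$0$ by dualizing: the adjoint computation identifies the cokernel with the kernel of $\wh{L_b}(0)^*$ on supported distribution spaces, which by the same constraint-propagation-plus-classification scheme applied to adjoint 1-form wave operators produces the claimed basis $\{h_{{\rm s}0}^*,h_{{\rm s}1}^*(\scal),h_{{\rm v}1}^*(\vect)\}$; the membership $h_{{\rm s}0}^*,h_{{\rm s}1}^*(\scal),h_{{\rm v}1}^*(\vect)\in\Hbsupp^{-\infty,\frac12-}$ is verified by asymptotic analysis using the cutoff versions $\chi\dd t$, $\chi\dd(r\scal)$ from the adjoint 1-form zero modes.

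The principal technical obstacle I anticipate is Step~3: ensuring that the 1-form $\eta=\delta_{g_b,\gamma_\Ups}\sfG_{g_b}h$ produced from a general kernel element $h$ lies precisely in the function space on which $\wh{\Box^\CD_{g_b,\gamma_C}}(0)$ is invertible per Theorem~\ref{thm6.1}. This requires careful bookkeeping of scattering versus b-orders under the first-order operator $\delta_{g_b,\gamma_\Ups}\sfG_{g_b}$ and matching weights so that the invertibility in the range $\ell\in(-\tfrac32,-\tfrac12)$ applies; this is exactly where the smallness of $\gamma_C,\gamma_\Ups$ and the particular cutoffs $\cd_C,\cd_\Ups$ intervene. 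Once this is settled, the remainder is a comparison of mode spaces with \cite{AHW22}, whose results are imported essentially verbatim since the normal operators at spatial infinity are unchanged by the compactly supported gauge modification.
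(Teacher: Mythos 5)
The paper does not reprove this theorem at all: it quotes \cite[Proposition~3.16]{Hi24} verbatim (with the dual states' construction also deferred to \cite{Hi24}), the earlier lemmas of \S\ref{Sec7} and Theorem~\ref{thm6.1} supplying the modified-gauge ingredients. Your outline — constraint propagation via $\wh{\Box^\CD_{g_b,\gamma_C}}(\sigma)$, then mode stability, then classification of residual pure-gauge terms through $\ker\wh{\Box^\Ups_{g_b,\gamma_\Ups}}(\sigma)$ — is indeed the strategy underlying that reference, so the skeleton is right; but as a proof it has concrete gaps. First, in Step~1 the sentence ``mode stability forces $h=0$'' is not what mode stability says: for $\sigma\neq 0$ it yields only that $h$ is pure gauge, $h=\delta_{g_b}^*\omega$. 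To get $h=0$ you must additionally (i) show that $\omega$ can be chosen in a function space on which $\wh{\Box^\Ups_{g_b,\gamma_\Ups}}(\sigma)$ is injective (Theorem~\ref{thm6.1}), and (ii) feed the gauge condition $\delta_{g_b,\gamma_\Ups}\sfG_{g_b}h=0$ (which your constraint-propagation step gives) into $\wh{\Box^\Ups_{g_b,\gamma_\Ups}}(\sigma)\omega=0$. The quantitative control of the gauge potential — its decay at infinity and regularity up to the horizon — is precisely the hard content of \cite{AHW22} and cannot be absorbed into the phrase ``mode stability''. The same issue recurs at $\sigma=0$ in Step~3: the classification \eqref{7.1} of $\ker\wh{\Box^\Ups_{g_b,\gamma_\Ups}}(0)$ that you invoke is only available in $\Hbext^{\infty,-\frac32-}$, whereas the potential $\omega-\omega(\dot b)$ produced by the ungauged classification may a priori grow like $r$ (compare $\omega_{{\rm s}1}^{(1)}\in\Hbext^{\infty,-\frac52-}$), so one must prove it lies in the smaller space, or classify the kernel in the larger space and discard the extra elements by hand. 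This, rather than the mapping-property bookkeeping for $\eta=\delta_{g_b,\gamma_\Ups}\sfG_{g_b}h$ which you single out as the main obstacle, is where the real work sits.

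Second, the cokernel treatment is asserted rather than proved. Dualizing does swap the roles of the two $1$-form operators, but ``the same scheme applied to adjoint operators'' does not by itself produce the specific seven states: the four zero modes $\omega_{{\rm s}0}^*,\omega_{{\rm s}1}^*(\scal)$ of $\wh{\Box^\CD_{g_b,\gamma_C}}(0)^*$ from \eqref{7.2} do not account for the three vector-type states $h_{{\rm v}1}^*(\vect)$, $\vect\in\vectspace_1$, and showing that the resulting dual kernel is exactly $7$-dimensional (matching the index-zero count) with the stated decay $\Hbsupp^{-\infty,\frac12-}$ requires the explicit constructions carried out in \cite{Hi24} (cf.\ also \cite[\S 10]{HHV21}). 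So a self-contained write-up along your lines would have to import the quantitative pure-gauge statements of \cite{AHW22} and the dual-state constructions of \cite{Hi24} as explicit lemmas; with those in hand your argument closes, but as written Steps~1 and~3 contain genuine missing steps.
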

We refer to \cite{Hi24} for the detailed construction of $h_{{\rm s}0}^*$, $h_{{\rm v}1}^*(\vect)$, $h^*_{{\rm s}1}(\scal)$.

\subsection{Generalized mode solutions}

Let 
\begin{equation}
\label{EqBreveh}
  \breve{h}_{{\rm s}1}(\scal):=\delta_{g_b}^*\breve{\omega}_{{\rm s}1}+[\delta_{g_b}^*,t_*]\omega_{{\rm s}1}(\scal) \in \Hbext^{\infty,-\frac12-} = \cA^{1-};
\end{equation}
thus, $\breve{h}_{b,{\rm s}1}$ is a solution of 
\begin{align}
\label{7.3.1}
[L_b,t_*]h_{{\rm s}1}=-L_b\breve{h}_{{\rm s}1}\Leftrightarrow L_b(t_*h_{{\rm s}1}(\scal)+\breve{h}_{{\rm s}1})=0. 
\end{align}
The (normal operator based) arguments in the proof of \cite[Lemma~9.6]{HHV21} apply verbatim and give the more precise description $\breve{h}_{{\rm s}1}(\scal)\in\rho\CI+\cA^{2-}$.

To analyze generalized mode solutions of $L_b$, we need to recall the evaluation of certain pairings which were previously computed in \cite[Lemma~3.17]{Hi24}:

\begin{lemma}[$L^2$-pairings]
\label{Hi3.17}
For ${\bhq}\in \R^3$ we write 
\begin{align*}
\vect(\bfq)=\left(\bfq\times \frac{x}{\vert x\vert}\right)\cdot \frac{\dd x}{\vert x\vert}\in \vectspace_1,\quad \scal(\bfq)=\bfq \cdot \frac{x}{\vert x\vert}\in \scalspace_1.
\end{align*}
\begin{enumerate}
\item\label{ItHi3.17Mass} (Mass changes). We have $\la [L_b,t_*]g^{\prime\Upsilon}_b(\dot{\bhm},0),h^*\ra=-16\pi \dot{\bhm}$ (for $h^*=h_{{\rm s}0}^*$), resp. $-16\pi(\bhq \cdot \bha)\dot{\bhm}$ (for $h^*=h_{{\rm v}1}(\vect(\bhq)$), resp. $0$ (for $h^*=h_{{\rm s}1}(\scal),\, \scal\in \scalspace_1$). 
\item\label{ItHi3.17Ang} (Angular momentum changes). We have $\la [L_b,t_*]g^{\prime\Ups}_b(0,\dot{\bha}),h^*\ra=0$ (for $h^*=h_{{\rm s}0}^*$), resp. $-16\pi\bhm(\bhq \cdot \dot{\bha})$ (for $h^*=h^*_{{\rm v}1}(\vect(\bhq)$), resp. $0$ (for $h^*=h^*_{{\rm s}1}(\scal),\, \scal\in \scalspace_1$). 
\item\label{ItHi3.17COM} (Center of mass changes). We have $\la \frac{1}{2}[[L_b,t_*],t_*]h_{{\rm s}1}(\scal(c))+[L,t_*]\breve{h}_{{\rm s}1}(\scal(c)),h^*\ra=0$ (for $h^*=h_{{\rm s}0}^*,\, h^*_{{\rm v}1}(\vect),\, \vect\in \vectspace_1)$), resp. $8\pi\bhm(\bhq\cdot c)$ (for $h^*=h^*_{{\rm s}1}(\scal(\bhq))$). 
\end{enumerate}
\end{lemma}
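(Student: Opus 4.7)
My plan is to follow the strategy of \cite[Lemma~3.17]{Hi24}, reducing all three pairings to boundary integrals at spatial infinity which are then identified with standard ADM-type charges. The two structural facts I rely on throughout are: $g_b^{\prime\Ups}(\dot b)\in\ker\wh{L_b}(0)$ (since it is a gauge-fixed linearized Kerr metric and the ungauged linearized Einstein operator annihilates $\dot g_b(\dot b)$), and $L_b^* h^*=0$ on the supported side for each cokernel element $h^*$ from Theorem~\ref{ThmL0}. Because $g_b^{\prime\Ups}(\dot b)$ is stationary, the first identity lets me write
\[
[L_b,t_*]g_b^{\prime\Ups}(\dot b)=L_b\bigl(t_*\,g_b^{\prime\Ups}(\dot b)\bigr).
\]

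For parts \eqref{ItHi3.17Mass} and \eqref{ItHi3.17Ang}, I will then pair $L_b(t_*\,g_b^{\prime\Ups}(\dot b))$ against $h^*$ and integrate by parts on $X_b$. The bulk contribution vanishes by $L_b^* h^*=0$; the interior boundary contribution at $\pa_-X_b$ vanishes because $h^*$ is a supported distribution; so only a boundary integral at $\rho=0$ remains. Using the leading $r^{-1}$ Schwarzschildean behavior of $g_b^{\prime\Ups}(\dot\bhm,0)$ and the leading $r^{-1}$ asymptotics of $g_b^{\prime\Ups}(0,\dot\bha)$ (which reproduces Kerr's frame-dragging $\dd t\,\dd\varphi$ term), together with the explicit leading forms $h_{{\rm s}0}^*\sim\chi\pa_t^\flat$, $h_{{\rm v}1}^*(\vect)\sim\chi(\bhq\times x/|x|)\cdot\dd x/|x|$, $h_{{\rm s}1}^*(\scal)\sim\chi\,\dd(r\scal)$ from the asymptotic translation/boost lemmas, the boundary integrand will match (up to signs and the fixed constant $16\pi$) the ADM mass and ADM angular momentum integrands from which the stated pairings follow. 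Orthogonalities such as $\la\partial_t^\flat,\cdot\ra$ with angular-type duals, or $\la\dd(r\scal),\cdot\ra$ with translation-type duals, then produce the claimed vanishings.

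For part \eqref{ItHi3.17COM}, I will use the defining relation~\eqref{7.3.1}, namely $L_b(t_*\,h_{{\rm s}1}(\scal)+\breve h_{{\rm s}1}(\scal))=0$. Commuting $t_*$ once more through $L_b$ and using $L_b h_{{\rm s}1}(\scal)=0$ yields the algebraic identity
\[
\tfrac12[[L_b,t_*],t_*]h_{{\rm s}1}(\scal)+[L_b,t_*]\breve h_{{\rm s}1}(\scal)=L_b\bigl(\tfrac12 t_*^2 h_{{\rm s}1}(\scal)+t_*\breve h_{{\rm s}1}(\scal)\bigr),
\]
so pairing with $h^*$ again reduces to a boundary integral at spatial infinity via $L_b^* h^*=0$. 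The key inputs are the refined expansion $\breve h_{{\rm s}1}(\scal)\in\rho\CI+\cA^{2-}$ and the identification of the leading $r^{-1}$ coefficient of $h_{{\rm s}1}(\scal(c))$ as a mass-times-center-of-mass shift. Evaluating against $h_{{\rm s}1}^*(\scal(\bhq))$ then gives a boundary integral proportional to $\bhm\,\bhq\cdot c$ with the stated coefficient $8\pi$, while the pairings with $h_{{\rm s}0}^*$ and $h_{{\rm v}1}^*(\vect)$ vanish by parity/angular orthogonality at the boundary.

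The main obstacle is precisely the careful evaluation of these boundary integrals in the present modified gauge: one must verify that the zeroth-order perturbations $\gamma_\Ups E_\Ups$ and $\gamma_C E_\CD$ do not alter the asymptotic computations. Since $\cd_\Ups$ and $\cd_C$ are compactly supported on $X_b^\circ$, the boundary integrands at $\rho=0$ coincide with those for the unmodified operator, and the calculation reduces to the one already performed in \cite[Lemma~3.17]{Hi24}. The remaining bookkeeping—propagating the asymptotic expansions of $\omega_{{\rm s}1}$, $\omega_{{\rm s}1}^{(1)}$, $\hat\omega_{{\rm s}1}$ through $\delta_{g_b}^*$ and then through the pairing—is routine once the boundary-integral structure is in place.
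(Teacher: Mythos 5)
The paper does not actually prove this lemma: it is introduced with the phrase ``we need to recall the evaluation of certain pairings which were previously computed in [Hi24, Lemma~3.17]'' and is then simply cited. So there is no in-paper proof to compare against; your proposal is an attempt to reconstruct the argument of [Hi24]. Your sketch is a largely faithful reconstruction, and the algebraic identities you rely on are correct: in particular,
\begin{equation*}
\tfrac12[[L_b,t_*],t_*]h_{{\rm s}1}+[L_b,t_*]\breve h_{{\rm s}1}=L_b\bigl(\tfrac12 t_*^2 h_{{\rm s}1}+t_*\breve h_{{\rm s}1}\bigr)
\end{equation*}
does follow from $L_b h_{{\rm s}1}=0$ and $L_b(t_* h_{{\rm s}1}+\breve h_{{\rm s}1})=0$, and $[L_b,t_*]g_b^{\prime\Ups}(\dot b)=L_b(t_* g_b^{\prime\Ups}(\dot b))$ follows from $\wh{L_b}(0)g_b^{\prime\Ups}(\dot b)=0$. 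You also correctly isolate the one structural point that is actually needed for the \emph{present} paper, namely that because $\cd_\Ups$ and $\cd_C$ are compactly supported, $\wh{L_b}(\sigma)$ has the same leading behavior at $\rho=0$ as the unmodified operator, so the boundary pairings from [Hi24] carry over.

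The step you gloss over is the integration by parts itself. You assert that pairing $L_b(t_*g_b^{\prime\Ups}(\dot b))$ with $h^*$ and ``integrating by parts on $X_b$'' yields a pure boundary term ``because $L_b^*h^*=0$''. But $[L_b,t_*]g_b^{\prime\Ups}(\dot b)$ is a stationary first-order operator applied to $g_b^{\prime\Ups}(\dot b)$; moving it onto $h^*$ in the $X_b$-pairing produces a bulk term $-\la g_b^{\prime\Ups}(\dot b),[L_b^*,t_*]h^*\ra$, which does \emph{not} vanish on the nose. One has to iterate (using $[L_b^*,t_*]h^*=L_b^*(t_*h^*)$ and $L_b g_b^{\prime\Ups}(\dot b)=0$) and collect two boundary contributions; equivalently one can differentiate the Green identity $\la\wh{L_b}(\sigma)u,v\ra-\la u,\wh{L_b}(\sigma)^*v\ra=B_\sigma(u,v)$ in $\sigma$ at $\sigma=0$ and then read off the result from the leading structure~\eqref{EqLStruct} of $\wh{L_b}(\sigma)$ near $\rho=0$. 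If instead you integrate by parts over a spacetime slab $[0,T]\times X_b$ (which is what $L_b^*h^*=0$ invites), the temporal boundary contributions at $t_*=0$ and $t_*=T$ do not cancel, since $t_*g_b^{\prime\Ups}(\dot b)$ vanishes at $t_*=0$ but not at $t_*=T$, and these must be tracked together with the $\rho=0$ boundary term before dividing by $T$. The conclusion---that the pairing reduces to a boundary integral at spatial infinity whose leading term reproduces an ADM-type charge---is correct, but it is not as immediate as your phrasing suggests.
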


We moreover recall the identity
\begin{equation}
\label{EqHi3.17Extra}
  \la [L_b,t_*]h_{{\rm s}1}(\scal),h_{{\rm s}0}^*\ra = 0.
\end{equation}
This follows from $[L_b,t_*]h_{{\rm s}1}=-\wh{L_b}(0)\breve h_{{\rm s}1}$ and integration by parts.

Let now $\wh{\cK}_b:=\ker L_b\cap {\rm Poly}^1(t_*)\Hbext^{\infty,-\frac12-}$. The proof of the following result is simpler than that of the analogous \cite[Proposition~9.4]{HHV21}:

\begin{prop}[Generalized zero modes of $L_b$]
\label{prop6.4}
  Set $\hat{h}_{{\rm s}1}(\scal)=\delta_{g_b}^*\hat{\omega}_{{\rm s}1}(\scal)$. Then
  \begin{align*}
 \wh{\cK}_b=\mathspan\bigl(\cK_b\cup \{\hat{h}_{{\rm s}1}(\scal) \colon \scal\in \scalspace_1 \}\bigr). 
  \end{align*}
\end{prop}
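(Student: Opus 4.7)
My plan is to dispatch the easy containment first and then reduce the reverse containment to the Fredholm alternative for $\wh{L_b}(0)$, using the pairings already computed in Lemma~\ref{Hi3.17}.

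The containment $\supset$ is almost immediate: tautologically $\cK_b\subset\wh\cK_b$, and for $\hat h_{{\rm s}1}(\scal)=\delta_{g_b}^*\hat\omega_{{\rm s}1}(\scal)$ the identity
\[
  L_b\bigl(\delta_{g_b}^*\omega\bigr)=\delta^*_{g_b,\gamma_C}\Box^{\Upsilon}_{g_b,\gamma_\Ups}\omega,
\]
which is valid on any Ricci-flat $g_b$ because $D_{g_b}\Ric$ annihilates pure gauge perturbations, combines with $\hat\omega_{{\rm s}1}(\scal)\in\ker\Box_{g_b,\gamma_\Ups}^{\Upsilon}$ from Proposition~\ref{prop7.5} to give $L_b\hat h_{{\rm s}1}(\scal)=0$. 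The $\Poly^1(t_*)\Hbext^{\infty,-\frac12-}$ bound is inherited from the corresponding bound on $\hat\omega_{{\rm s}1}(\scal)$ after applying $\delta_{g_b}^*$.

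For the reverse containment I would write $h\in\wh\cK_b$ as $h=t_*h_1+h_0$ with $h_0,h_1\in\Hbext^{\infty,-\frac12-}$. Since $\pa_{t_*}$ is Killing for $g_b$, $L_b$ commutes with $\pa_{t_*}$; differentiating $L_b h=0$ in $t_*$ yields $L_b h_1=0$, so Theorem~\ref{ThmL0} forces $h_1\in\cK_b$. Writing $h_1=g^{\prime\Upsilon}_b(\dot b)+h_{{\rm s}1}(\scal_0)$ with $\dot b=(\dot\bhm,\dot\bha)\in\R^4$ and $\scal_0\in\scalspace_1$, and subtracting the element $t_*h_{{\rm s}1}(\scal_0)+\breve h_{{\rm s}1}(\scal_0)\in\wh\cK_b$ from~\eqref{7.3.1}, whose $t_*$-coefficient is exactly $h_{{\rm s}1}(\scal_0)$, reduces the problem to the case $h_1=g^{\prime\Upsilon}_b(\dot b)$. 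The equation $L_b h=0$ then becomes the Fredholm problem
\[
  \wh{L_b}(0)\tilde h_0=-[L_b,t_*]g^{\prime\Upsilon}_b(\dot b),\qquad\tilde h_0\in\Hbext^{\infty,-\frac12-},
\]
which by Theorem~\ref{ThmL0} is solvable exactly when the right-hand side is annihilated by $\cK_b^*$.

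The final step is to read off the resulting constraints on $\dot b$ from Lemma~\ref{Hi3.17} and~\eqref{EqHi3.17Extra}: pairing against $h_{{\rm s}0}^*$ forces $-16\pi\dot\bhm=0$; pairing against $h^*_{{\rm v}1}(\vect(\bhq))$ for arbitrary $\bhq\in\R^3$ then forces $-16\pi\bhm(\bhq\cdot\dot\bha)=0$, hence $\dot\bha=0$ (using $\bhm>0$); and the pairings against $h^*_{{\rm s}1}(\scal)$ are automatically zero. Therefore $\dot b=0$, whence $\tilde h_0\in\ker L_b\cap\Hbext^{\infty,-\frac12-}=\cK_b$, and $h$ lies in $\cK_b+\mathspan\{\hat h_{{\rm s}1}(\scal)\colon\scal\in\scalspace_1\}$ once one observes that $\hat h_{{\rm s}1}(\scal)$ and $t_*h_{{\rm s}1}(\scal)+\breve h_{{\rm s}1}(\scal)$ differ by a stationary element of $\ker L_b\cap\Hbext^{\infty,-\frac12-}$ and hence by an element of $\cK_b$. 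The only nontrivial input is Lemma~\ref{Hi3.17}; otherwise the argument is pure Fredholm theory plus the Killing-vector commutator identity, and it is cleaner than the analogous \cite[Proposition~9.4]{HHV21} because the gauge modification and constraint damping from \cite{Hi24} have already removed the stationary Coulomb-type mode, so $\cK_b$ has the minimal expected dimension and no auxiliary generalized-mode bookkeeping is needed.
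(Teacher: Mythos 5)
Your proof is correct and follows essentially the same route as the paper's: both reduce modulo the Killing vector to the linear-in-$t_*$ case, then use the Fredholm alternative from Theorem~\ref{ThmL0} together with the pairings in Lemma~\ref{Hi3.17} to kill $\dot\bhm$ and $\dot\bha$, and both conclude by noting that $\hat h_{{\rm s}1}(\scal)$ and $t_*h_{{\rm s}1}(\scal)+\breve h_{{\rm s}1}(\scal)$ differ by a stationary zero mode. Your minor reorganization --- pre-subtracting $t_*h_{{\rm s}1}(\scal_0)+\breve h_{{\rm s}1}(\scal_0)$ rather than invoking $\la[L_b,t_*]h_{{\rm s}1}(\scal),h^*\ra=0$ directly as the paper does via~\eqref{EqHi3.17Extra} --- is a harmless rephrasing, and the explicit verification of the easy inclusion via $L_b\delta_{g_b}^*\omega=\delta^*_{g_b,\gamma_C}\Box^\Upsilon_{g_b,\gamma_\Upsilon}\omega$ fills in a step the paper labels ``clear.''
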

\begin{proof}
     The inclusion `$\supseteq$' is clear. For the other inclusion, let $h=t_*h_1+h_0\in \wh{\cK}_b$. Then $L_bh_1=0$ and thus $h_1=g_b^{\prime\Ups}(\dot{\bhm},\dot{\bha})+h_{{\rm s}1}(\scal)$ for some $(\dot{\bhm},\dot{\bha})\in \R^4$, $\scal\in \scalspace_1$. We therefore have 
  \begin{align*}
  -[L_b,t_*]h_1=L_bh_0.
  \end{align*}
  Pairing the equation with $h^*_{{\rm s}0}$ and using Lemma~\ref{Hi3.17} as well as the identity~\eqref{EqHi3.17Extra} gives $\dot{m}=0$. Pairing with $h^*_{b,v1}(V(\dot{\bha}))$  and using Lemma \ref{Hi3.17}\eqref{ItHi3.17Ang} gives $\dot{\bha}=0$. Thus, $h_1=h_{{\rm s}1}(\scal)=\delta_{b}^*\omega_{{\rm s}1}$. Now,
  \begin{align*}
  h=\delta_{g_b}^*(t_*\omega_{{\rm s}1})-[\delta_{g_b}^*,t_*]\omega_{{\rm s}1}+h_0. 
  \end{align*}
  Thus, $h-\hat{h}_{{\rm s}1}$ is a stationary solution of $L_b(h-\hat h_{{\rm s}1})=0$, so $h-\hat h_{{\rm s}1}\in\cK_b$.
\end{proof}

\begin{lemma}[No generalized zero modes of $L_b$ with at least quadratic growth]
  \label{lemma7.8}
Let $d\ge 2$. There does not exist $h=\sum_{j=0}^d t_*^jh_j$ with $h_j\in \Hbext^{\infty,-\frac12-}$ with $h_d\neq 0$ for which $L_bh=0$. 
\end{lemma}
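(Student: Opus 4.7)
The plan is to push the argument of Proposition~\ref{prop6.4} one order deeper, extracting the needed obstruction from the center-of-mass pairing of Lemma~\ref{Hi3.17}\eqref{ItHi3.17COM}. Since $L_b$ is of order two in $t_*$, one has the exact operator identity
\[
L_b(t_*^j h_j) = t_*^j L_b h_j + j\,t_*^{j-1}[L_b,t_*]h_j + \binom{j}{2}t_*^{j-2}\bigl[[L_b,t_*],t_*\bigr]h_j.
\]
Equating the coefficients of $t_*^d,\,t_*^{d-1},\,t_*^{d-2}$ in $L_b h = 0$ yields
\begin{align*}
L_b h_d &= 0, \\
L_b h_{d-1} + d[L_b,t_*]h_d &= 0, \\
L_b h_{d-2} + (d-1)[L_b,t_*]h_{d-1} + \binom{d}{2}\bigl[[L_b,t_*],t_*\bigr]h_d &= 0.
\end{align*}

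Theorem~\ref{ThmL0} forces $h_d \in \cK_b$. Testing the second equation against $h^* \in \cK_b^*$ (the term $L_b h_{d-1}$ vanishes after integration by parts, since $h_{d-1}\in\Hbext^{\infty,-\frac12-}$ and $h^*\in\Hbsupp^{-\infty,-\frac12-}$) and applying Lemma~\ref{Hi3.17}\eqref{ItHi3.17Mass}, \eqref{ItHi3.17Ang} and the identity~\eqref{EqHi3.17Extra} exactly as in the proof of Proposition~\ref{prop6.4} eliminates every $g_b^{\prime\Ups}$-contribution to $h_d$, leaving $h_d = h_{{\rm s}1}(\scal(c))$ for some $c\in\R^3$. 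Now~\eqref{7.3.1} gives $[L_b,t_*]h_d = -L_b\breve h_{{\rm s}1}(\scal(c))$, so $L_b\bigl(h_{d-1} - d\breve h_{{\rm s}1}(\scal(c))\bigr)=0$ and therefore
\[
h_{d-1} = d\,\breve h_{{\rm s}1}(\scal(c)) + g_b^{\prime\Ups}(\dot m', \dot a') + h_{{\rm s}1}(\scal')
\]
for some $(\dot m', \dot a', \scal')$.

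Substituting this expression for $h_{d-1}$ into the third equation, pairing with $h^*\in\cK_b^*$ (integrating the $L_b h_{d-2}$ term by parts), and using $\binom{d}{2} = d(d-1)/2$, the result regroups into
\[
d(d-1)\,\Bigl\la [L_b,t_*]\breve h_{{\rm s}1}(\scal(c)) + \tfrac12\bigl[[L_b,t_*],t_*\bigr]h_{{\rm s}1}(\scal(c)),\,h^*\Bigr\ra + (d-1)\,\bigl\la [L_b,t_*]\bigl(g_b^{\prime\Ups}(\dot m',\dot a') + h_{{\rm s}1}(\scal')\bigr),\,h^*\bigr\ra = 0.
\]
The first pairing is computed by Lemma~\ref{Hi3.17}\eqref{ItHi3.17COM} and the second by Lemma~\ref{Hi3.17}\eqref{ItHi3.17Mass}, \eqref{ItHi3.17Ang} together with~\eqref{EqHi3.17Extra}. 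Taking $h^* = h_{{\rm s}0}^*$ gives $\dot m' = 0$; taking $h^* = h_{{\rm v}1}^*(\vect(\bhq))$ for arbitrary $\bhq\in\R^3$ (and using $\dot m'=0$) forces $\bhm\,\bhq\cdot\dot a' = 0$, hence $\dot a'=0$; and taking $h^* = h_{{\rm s}1}^*(\scal(\bhq))$ for arbitrary $\bhq$ gives $d(d-1)\cdot 8\pi\bhm\,(\bhq\cdot c)=0$, whence $c=0$ (using $\bhm>0$ and $d\geq 2$). But then $h_d = h_{{\rm s}1}(\scal(c)) = 0$, contradicting the assumption $h_d\neq 0$.

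The main obstacle is the bookkeeping: the combination supplied by Lemma~\ref{Hi3.17}\eqref{ItHi3.17COM} is precisely $[L_b,t_*]\breve h_{{\rm s}1} + \tfrac12[[L_b,t_*],t_*]h_{{\rm s}1}$, so one must verify that the factor $\binom{d}{2}/(d(d-1)) = \tfrac12$ lines up correctly with the $d$ in $h_{d-1} = d\breve h_{{\rm s}1} + \ldots$. This is what makes the center-of-mass obstruction accessible at each degree $d\geq 2$ simultaneously, rather than requiring a separate base case and inductive step.
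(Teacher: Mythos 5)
Your proof is correct and takes essentially the same route as the paper: identify $h_d$ as an element of $\cK_{b,t}$, solve the next coefficient equation using $\breve h_{{\rm s}1}$ via~\eqref{7.3.1}, and obtain the contradiction from the center-of-mass pairing of Lemma~\ref{Hi3.17}\eqref{ItHi3.17COM} tested against $h_{{\rm s}1}^*(\scal(\bhq))$. The only difference is cosmetic: the paper first reduces to $d=2$ using that $\pa_{t_*}$ is Killing and then invokes Proposition~\ref{prop6.4} to identify the top coefficient, whereas you keep general $d$ and track the binomial coefficients directly (re-deriving the identification of $h_d$ by the same pairings), which works equally well.
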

\begin{proof}
We can suppose $d=2$. Indeed, as $\partial_{t_*}$ is Killing, $\partial_{t_*}^{d-2}h$ is also solution. The same argument shows that $2t_*h_2+h_1$ is also solution, thus Proposition~\ref{prop6.4} shows that $h_2=h_{{\rm s}1}(\scal)$ where $\scal\neq 0$. Then $h_1$ has to solve the equation $2[L_b,t_*]h_{{\rm s}1}(\scal)=-L_b h_1$. Thus, $h_1=2\breve{h}_{{\rm s}1}(\scal)+h'$ for some $h'\in\cK_b$. Then $h_0$ has to solve 
\begin{align*}
[[L_b,t_*],t_*]h_{{\rm s}1}(\scal)+[L_b,t_*](2\breve{h}_{{\rm s}1}(\scal)+h')=-L_b h_0.
\end{align*}
By Lemma \ref{Hi3.17}\eqref{ItHi3.17COM}, the pairing of the left hand side of this equation with $h^*_{{\rm s}1}(\scal)$ is non-zero, contradicting the existence of $h_0$.
\end{proof}

\begin{rmk}[Differences to \cite{HHV21}]
\label{RmkDiff}
  Due to the different choice of gauge which eliminates zero modes of $\Box_{g_b,\gamma^\Ups}^\Ups$, the mode solutions $\delta_{g_b}^*\omega_{b,{\rm s}0}$ in \cite{HHV21} coming from Coulomb type solutions of the $1$-form wave operator are now eliminated. Moreover, linearizations of the Kerr metric in the mass parameter cannot be directly put in the gauge of \cite{HHV21}; instead, they show up there only as generalized mode solutions. In our gauge, they do exist as stationary mode solutions.   
\end{rmk}

Let now $h_{b,{\rm s}0}:=g_b^{\prime\Ups}(1,0)$, $h_{b,{\rm v}1}(\vect):=g^{\prime\Ups}_{b}(0,\vect)$. We define the spaces
\begin{equation}
\label{EqRZeroModes}
\begin{aligned}
  \cK_{b,{\rm s}0} &:= \C h_{b,{\rm s}0}, && \quad & \cK_{b,{\rm s}0}^* & := \C h_{b,{\rm s}0}^*, \\
  \cK_{b,v} &:= h_{b,{\rm v}1}(\vect_1), & && \cK_{b,v}^* & := h_{b,{\rm v}1}^*(\vect_1), \\
  \cK_{b,t} &:= h_{b,{\rm s}1}(\scal_1), && \quad & \cK_{b,t}^* & := h_{b,t}^*(\scal_1), \\
  \cK_{b,l} &:= \cK_{b,{\rm s}0}\oplus\cK_{b,v},&& \quad & \cK_{b,l}^* &:= \cK_{b,{\rm s}0}^* \oplus \cK_{b,v }^*, \\
    \cK_b &:= \cK_{b,t}\oplus\cK_{b,l}, & && \cK_b^*&:=\cK_{b,t}^*\oplus\cK_{b,l}^*.
\end{aligned}
\end{equation}
(Here the subscript $v$ stands for `vector type,' $t$ stands for `translations,' and $l$ stands for `linearized Kerr.') Note that the splitting is slightly different from the one in \cite[\S{11.1}]{HHV21}. In the setting of the present paper, $\wh{L_b}(\sigma)$ is expected to be more singular on $\cK_{b,t}$ than on $\cK_{b,l}$ because of the existence of linearly growing modes with leading term in $\cK_{b,t}$. Also, the gauge in \cite{HHV21} did not allow to capture $g^{\prime\Ups}_b(1,0)$, cf.\ Remark~\ref{RmkDiff}.

\section{Structure and regularity of the resolvent of the gauge-fixed operator}
\label{Sec9}

We now explain the (minor) necessary changes in the analysis of the resolvent $\wh{L_b}(\sigma)^{-1}$ as compared to \cite{HHV21} and \cite{HaefnerHintzVasyKerrErratum}.

\subsection{Structure of the resolvent}

We fix
\begin{align}
\label{9.1}
s>\tfrac{5}{2},\quad \ell\in (-\tfrac{3}{2},-\tfrac{1}{2}). 
\end{align} 
The precise decay estimates for $L_b$ on spacetime are linked to the regularity of the resolvent $\wh{L_b}(\sigma)^{-1}$ at zero energy. Because of the existence of (generalized) mode solutions, there exists a non-trivial kernel at $\sigma=0$ and we have to separate this corresponding singular part from a more regular part. Another important aspect is that the domains 
\[{\mathcal X}_b^{s,\ell}:=\{u\in\Hbext^{s,\ell}(X)\colon\wh{L_b}(\sigma)u\in\Hbext^{s,\ell+1}(X)\}\]
depend on $\sigma$. To circumvent these difficulties we use:

\begin{lemma}[Modification of the spectral family]
\label{LemmaRPert0}
  There exist $V\in\Psi^{-\infty}(X^\circ;S^2 T^*X_b^\circ)$, with compactly supported Schwartz kernel, and a constant $C_1>0$ such that
  \[
    \check L_b(\sigma) := \wh{L_b}(\sigma) + V \colon \cX_b^{s,\ell}(\sigma)\to\Hbext^{s-1,\ell+2}
  \]
  is invertible for $\sigma\in\C$, $\Im\sigma\geq 0$, $|\sigma|<C_1$. Moreover, $\check L_b(\sigma)^{-1}$ is continuous in $\sigma$ with values in $\cL_{\rm weak}(\Hbext^{s-1,\ell+2},\Hbext^{s,\ell})\cap\cL_{\rm op}(\Hbext^{s-1+\eps,\ell+2+\eps},\Hbext^{s-\eps,\ell-\eps})$, $\eps>0$.
\end{lemma}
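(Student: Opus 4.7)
The plan is to construct $V$ as a finite-rank smoothing operator built from smooth compactly supported test sections chosen to complement the $7$-dimensional kernel and cokernel of $\wh{L_b}(0)$ identified in Theorem~\ref{ThmL0}, and then extend invertibility from $\sigma=0$ to a small upper half-plane neighborhood via the uniform Fredholm estimate of Theorem~\ref{ThmOp}. For the continuity of the inverse I use the resolvent identity together with compactness of embeddings between b-Sobolev spaces with slightly weaker exponents.

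Concretely, I fix bases $\{h_j\}_{j=1}^{7}$ of $\cK_b\subset\Hbext^{\infty,-\frac12-}$ and $\{h_k^*\}_{k=1}^{7}$ of $\cK_b^*\subset\Hbsupp^{-\infty,-\frac12-}$. By density of $\CI_c(X_b^\circ;S^2 T^*X_b^\circ)$ and the fact that the $h_j$ (resp.\ $h_k^*$) are linearly independent, I can choose $\phi_i, v_i\in\CI_c(X_b^\circ;S^2 T^*X_b^\circ)$ so that $\la h_j,\phi_i\ra=\delta_{ij}$ and so that the $7\times 7$ matrix $M_{ki}:=\la v_i,h_k^*\ra$ is invertible. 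Setting
\begin{align*}
  V u := \sum_{i=1}^{7} \la u,\phi_i\ra\,v_i
\end{align*}
gives an operator with Schwartz kernel $\sum_i v_i(x)\phi_i(y)\in\CI_c(X_b^\circ\times X_b^\circ)$, so $V\in\Psi^{-\infty}$ has compactly supported kernel. Since $V$ is compact, $\check L_b(0)=\wh{L_b}(0)+V$ is Fredholm of index $0$ by Theorem~\ref{ThmL0}, so invertibility reduces to injectivity: if $\check L_b(0)u=0$, pairing with $h_k^*$ and using $\wh{L_b}(0)^*h_k^*=0$ gives $\sum_i M_{ki}\la u,\phi_i\ra=0$, whence $\la u,\phi_i\ra=0$ for all $i$ by invertibility of $M$; therefore $V u=0$, so $\wh{L_b}(0)u=0$, so $u=\sum_j a_j h_j\in\cK_b$, and then $a_i=\la u,\phi_i\ra=0$.

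To propagate invertibility to $0<|\sigma|<C_1$, I use the uniform Fredholm estimate~\eqref{EqOpFinite}, which by Theorem~\ref{ThmOp}\eqref{ItOpLow} holds uniformly down to $\sigma=0$ because our choice~\eqref{9.1} places $\ell$ in $(-\tfrac32,-\tfrac12)$. Adding the compact perturbation $V$ preserves the estimate (up to possibly enlarging the error term), and $\sigma\mapsto\wh{L_b}(\sigma)$ is continuous into $\cL(\cX_b^{s,\ell},\Hbext^{s-1,\ell+2})$; combining invertibility at $\sigma=0$ with the compactness of $\Hbext^{s,\ell}\hookrightarrow\Hbext^{s_0,\ell_0}$ for $s_0<s,\ell_0<\ell$ rules out a kernel in a sufficiently small neighborhood of $\sigma=0$. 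The two continuity statements for $\check L_b(\sigma)^{-1}$ follow from the identity
\begin{align*}
  \check L_b(\sigma)^{-1}-\check L_b(\sigma_0)^{-1} = \check L_b(\sigma)^{-1}\bigl(\check L_b(\sigma_0)-\check L_b(\sigma)\bigr)\check L_b(\sigma_0)^{-1},
\end{align*}
combined with uniform boundedness of the inverses in the lossy topology (from the uniform estimate applied to $\check L_b$) and continuity of $\sigma\mapsto\wh{L_b}(\sigma)$; the weak continuity on the full spaces then follows by the uniform norm bound together with density of the $\sigma$-independent subspace $\Hbext^{s+\eps,\ell+\eps}$ and norm continuity there.

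The main obstacle is bookkeeping of function spaces, specifically ensuring that the pairings $\la V u,h_k^*\ra$ and $\la h_j,\phi_i\ra$ are well-defined and that the compact perturbation $V$ slots cleanly into the uniform Fredholm framework as $\sigma\to 0$, despite the mismatch between the target spaces $\Hbext^{s,\ell+1}$ at $\sigma\neq 0$ and $\Hbext^{s-1,\ell+2}$ at $\sigma=0$. These are handled precisely by the choice~\eqref{9.1} of regularity and weight and by the fact that $V$ is smoothing and compactly supported, so it maps into every $\Hbext^{s',\ell'}$ and factors through $\CI_c$-pairings with the distributional spaces containing $\cK_b^*$.
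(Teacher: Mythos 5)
Your construction of $V$ and your proof that $\check L_b(0)$ is invertible are correct and are essentially the paper's argument (which defers to \cite[Lemma~11.1]{HHV21}): a rank-$7$ smoothing operator $V u=\sum_i\la u,\phi_i\ra v_i$ with $\CIc$ ingredients chosen so that $V|_{\cK_b}$ is injective and $\ran V$ is transversal to $\ran\wh{L_b}(0)=\ann\cK_b^*$, combined with index-$0$ Fredholmness. Ruling out kernel elements for small $|\sigma|$ via the uniform estimate \eqref{EqOpFinite} (which transfers to $\check L_b$ since $V$ factors through pairings with fixed $\CIc$ sections), the compact embedding $\Hbext^{s,\ell}\hra\Hbext^{s_0,\ell_0}$, and a weak-limit argument is also the intended mechanism; note, however, that for $\sigma\neq 0$ you still owe an argument for surjectivity on the spaces with target $\Hbext^{s-1,\ell+2}$, since the index-$0$ statement of Theorem~\ref{ThmOp} for $\sigma\neq 0$ concerns the spaces with target $\Hbext^{s,\ell+1}$, and since the weak-limit step only uses distributional convergence of $\wh{L_b}(\sigma_j)u_j$, not the operator-norm continuity you invoke.

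The genuine gap is your proof of the continuity of $\check L_b(\sigma)^{-1}$. It rests on the resolvent identity together with the claim that $\sigma\mapsto\wh{L_b}(\sigma)$ is continuous into $\cL(\cX_b^{s,\ell},\Hbext^{s-1,\ell+2})$; this fails near $\sigma=0$. By \eqref{EqLStruct}, $\wh{L_b}(\sigma)-\wh{L_b}(\sigma_0)$ contains the term $2(\sigma-\sigma_0)\rho(\rho D_\rho+i)$, which maps $\Hbext^{s,\ell}$ only into $\Hbext^{s-1,\ell+1}$, a full order of decay short of the target $\Hbext^{s-1,\ell+2}$ (it is not small, indeed not even bounded, into that target). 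Consequently, in $\check L_b(\sigma)^{-1}\bigl(\check L_b(\sigma_0)-\check L_b(\sigma)\bigr)\check L_b(\sigma_0)^{-1}$ the outer inverse must be applied to data whose weight is roughly $\ell+1$, i.e.\ of the form $\ell'+2$ with $\ell'\approx\ell-1\notin(-\tfrac32,-\tfrac12)$, where no (uniform) mapping property of $\check L_b(\sigma)^{-1}$ is available; since the admissible window of weights has width $1$, no choice of $\eps$ closes this chain, even on the lossy spaces $\cL_{\rm op}(\Hbext^{s-1+\eps,\ell+2+\eps},\Hbext^{s-\eps,\ell-\eps})$. This loss is precisely why the lemma asserts only weak continuity on the sharp spaces and norm continuity with an $\eps$-loss. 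The correct route, as in \cite[\S{11}]{HHV21}, is: first obtain a uniform bound on $\check L_b(\sigma)^{-1}\colon\Hbext^{s-1,\ell+2}\to\Hbext^{s,\ell}$ for $|\sigma|<C_1$ by a contradiction argument from \eqref{EqOpFinite} (exactly as in your kernel step); then deduce weak continuity from uniqueness of limits, since for fixed $f$ any weak subsequential limit $u$ of $u_j=\check L_b(\sigma_j)^{-1}f$ satisfies $\check L_b(\sigma)u=f$ because the $\sigma$-dependent terms converge distributionally; finally, obtain operator-norm continuity on the $\eps$-shifted spaces from the uniform bound together with the compactness of $\Hbext^{s-1+\eps,\ell+2+\eps}\hra\Hbext^{s-1,\ell+2}$ and $\Hbext^{s,\ell}\hra\Hbext^{s-\eps,\ell-\eps}$, again by a contradiction argument along sequences. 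Your closing appeal to density plus norm continuity on better spaces inherits the same problem, since the norm continuity it invokes is exactly the unproven step.
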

\begin{proof}
The proof is analogous to the one of \cite[Lemma 11.1]{HHV21}. The only difference is that in \cite{HHV21} it was first carried out for $L_{b_0}$, $b_0=(\bhm,0)$, and then perturbation arguments were used. Thanks to the results of \S\ref{Sec7} we now have all relevant information for general subextremal $b$ and can now carry out the construction directly for $L_b$. 
\end{proof}
    
For $\check L_b(\sigma)$ as in Lemma~\ref{LemmaRPert0}, we set
\begin{equation}
\label{EqRwtcK}
\begin{split}
  \wt\cK_{b,l} &:= \check L_b(0)\cK_{b,l}, \\
  \wt\cK_{b,t} &:= \check L_b(0)\cK_{b,t}, \\
  \wt\cK_b &:= \wt\cK_{b,t}\oplus\wt\cK_{b,l}.
\end{split}
\end{equation}
By definition of $\check L_b(0)$, these are subspaces of $\CIc(X^\circ_b;S^2 T^*X_b^\circ)$. We fix a complementary subspace $\wt\cK_b^\perp\subset\Hbext^{s-1,\ell+2}$ of $\wt\cK_b$.

We want to decompose the target space $\Hbext^{s-1,\ell+2}$ into the range of $\wh{L_b}(0)$ and a complement. To do this, we use a slight generalization of the procedure used in the proof of Lemma~\ref{LemmaRPert0}; the following lemma is proved as in \cite[Lemma~11.2]{HHV21}.

\begin{lemma}[Projection off the range]
\label{LemmaRProj}
  There exists a linear projection map $\Pi_b^\perp\colon\Hbext^{s-1,\ell+2}\to\Hbext^{s-1,\ell+2}$ which is of rank $7$ and satisfies
  \[
    \la(I-\Pi_b^\perp)f,h^*\ra=0\quad\forall\,h^*\in\cK_b^*.
  \]
  The Schwartz kernel of $\Pi_b^\perp$ can be chosen to be independent of $s,\ell$ satisfying~\eqref{9.1}.
\end{lemma}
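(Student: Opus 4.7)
The plan is to realize $\Pi_b^\perp$ as a dual-basis projection onto a 7-dimensional subspace of compactly supported smooth symmetric $2$-tensors. Let $\{h_j^*\}_{j=1}^7$ denote the basis of $\cK_b^*$ exhibited in Theorem~\ref{ThmL0} (consisting of $h_{{\rm s}0}^*$, a basis of $\{h_{{\rm v}1}^*(\vect):\vect\in\vectspace_1\}$, and a basis of $\{h_{{\rm s}1}^*(\scal):\scal\in\scalspace_1\}$). I would first produce a dual family $\tilde e_1,\ldots,\tilde e_7\in\CIc(X_b^\circ;S^2 T^*X_b^\circ)$ with $\la\tilde e_i,h_j^*\ra=\delta_{ij}$, and then simply set
\[
  \Pi_b^\perp f := \sum_{i=1}^7 \la f,h_i^*\ra\,\tilde e_i.
\]
Idempotence and the annihilation property $\la(I-\Pi_b^\perp)f,h_j^*\ra=0$ for $j=1,\ldots,7$ are immediate from $\la\tilde e_i,h_j^*\ra=\delta_{ij}$, and linear independence of the $\tilde e_i$ (forced by the same duality relation) yields $\rank\Pi_b^\perp=7$.

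The construction of the dual family reduces to showing that the pairing map
\[
  \CIc(X_b^\circ;S^2 T^*X_b^\circ)\ni\phi \longmapsto (h^*\mapsto\la\phi,h^*\ra)\in(\cK_b^*)^*
\]
is surjective. Equivalently, every nonzero $h^*\in\cK_b^*$ must pair nontrivially with some test tensor on $X_b^\circ$; this is the statement that the restriction map $\cK_b^*\to\cD'(X_b^\circ;S^2 T^*X_b^\circ)$ is injective, which holds since each element of the fixed basis is a nonzero distribution on $X_b^\circ$ by the explicit construction of $h_{{\rm s}0}^*,\,h_{{\rm v}1}^*(\vect),\,h_{{\rm s}1}^*(\scal)$ from \cite{Hi24} recalled in Theorem~\ref{ThmL0}, together with their linear independence. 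Once this surjectivity is known, pick $\phi_1,\ldots,\phi_7\in\CIc(X_b^\circ;S^2 T^*X_b^\circ)$ whose induced functionals span $(\cK_b^*)^*$, form the then-invertible matrix $A_{kj}=\la\phi_k,h_j^*\ra$, and set $\tilde e_i:=\sum_k (A^{-1})_{k i}\phi_k$.

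It remains to check that the pairings are well-defined in the relevant spaces and that the result is independent of $s,\ell$ satisfying~\eqref{9.1}. For $f\in\Hbext^{s-1,\ell+2}(X)$ with $\ell\in(-\tfrac32,-\half)$ and $h_i^*\in\Hbsupp^{-\infty,-\half-}$, the combined weight $(\ell+2)+(-\half-\eps)=\ell+\tfrac32-\eps$ is positive for sufficiently small $\eps>0$, so the $L^2$-pairing with respect to the scattering density is well-defined, and likewise the Sobolev regularity orders are compatible. The Schwartz kernel of $\Pi_b^\perp$ is $\sum_{i=1}^7\tilde e_i(x)\otimes h_i^*(y)$, a fixed finite-rank object depending only on the chosen $\tilde e_i$ and $h_i^*$, hence independent of $s,\ell$ as required. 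The only substantive ingredient beyond finite-dimensional linear algebra is the interior non-triviality of each basis element of $\cK_b^*$, which I expect to be the sole place one needs to invoke specifics of the construction from \cite{Hi24}; everything else is formal.
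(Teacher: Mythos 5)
Your construction coincides with the paper's: the lemma is proved exactly as in \cite[Lemma~11.2]{HHV21}, namely by choosing compactly supported symmetric $2$-tensors dual to a basis of $\cK_b^*$ and taking $\Pi_b^\perp f=\sum_i\la f,h_i^*\ra\,\tilde e_i$; the idempotence, the rank count, the annihilation property, and the $s,\ell$-independence of the Schwartz kernel are all as you say, and the weight arithmetic for the pairings is correct.

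The one step whose justification does not hold up as written is the non-degeneracy of the pairing $\CIc(X_b^\circ;S^2 T^*X_b^\circ)\times\cK_b^*\to\C$. You correctly reduce this to injectivity of the restriction map $\cK_b^*\to\cD'(X_b^\circ;S^2 T^*X_b^\circ)$, but you then argue that this follows from each \emph{basis} element being nonzero on $X_b^\circ$ together with the linear independence of the $h_j^*$ in $\Hbsupp^{-\infty,-\frac12-}$. That implication is false in general: what must be excluded is that some \emph{nontrivial linear combination} of the $h_j^*$ restricts to zero on the interior, i.e.\ is a nonzero supported distribution living on $\pa X_b$ (for instance a delta layer on the artificial boundary $r=r_0$ is a nonzero element of $\Hbsupp^{-\infty,\ell}$ invisible to all test tensors on $X_b^\circ$), and neither individual nonvanishing nor abstract linear independence rules this out. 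The statement you need is that every nonzero $h^*\in\cK_b^*$ has nonzero restriction to $X_b^\circ$; this is true but requires an argument specific to the dual states: e.g.\ any $h^*\in\cK_b^*$ solves $\wh{L_b}(0)^*h^*=0$ near $X_b$ and is supported in $\{r\ge r_+\}$, so an element vanishing on $X_b^\circ$ would be supported in $\pa X_b$ and hence vanish (by the support/energy argument in the region $r_0\le r\le r_+$ and the density of $\CIc(X_b^\circ)$ relative to the boundary at infinity); alternatively one can check directly that the restrictions of $h_{{\rm s}0}^*$, $h_{{\rm v}1}^*(\vect)$, $h_{{\rm s}1}^*(\scal)$ to the interior are linearly independent from their explicit asymptotic descriptions in \cite{Hi24}. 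With that point supplied, your proof is complete and is the paper's proof.
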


Defining the complementary projection
\[
  \Pi_b := I-\Pi_b^\perp \colon \Hbext^{s-1,\ell+2} \to \ran\Pi_b=\ann\cK_b^*=\ran_{\cX_b^{s,\ell}(0)}\wh{L_b}(0),
\]
we then split domain and target of $\wh{L_b}(0)\check L_b(0)^{-1}$ according to
\begin{equation}
\begin{aligned}
\label{EqRSplit}
  &\text{domain:}\ && \Hbext^{s-1,\ell+2} \cong \wt\cK^\perp \oplus \wt\cK_{b,t} \oplus \wt\cK_{b,l}, \\
  &\text{target:}\ && \Hbext^{s-1,\ell+2} \cong \ran\Pi_b \oplus \cR_t^\perp \oplus \cR_l^\perp,
\end{aligned}
\end{equation}
where $\cR_t^\perp$, resp.\ $\cR_l^\perp$, is a space of dimension $\dim\cK_{b,t}=3$, resp.\ $\dim\cK_{b,l}=4$, chosen such that the $L^2$-pairing $\cR_t^\perp\times\cK_{b,t}^*\to\C$, resp.\ $\cR_l^\perp\times\cK_{b,l}^*\to\C$, is non-degenerate. (We can choose $\cR^\perp_{t/l}$ to be a subspace of $\CIc(X^\circ;S^2 T^*X^\circ)$.) Via these pairings, we can identify
\[
  \cR_t^\perp \cong (\cK_{b,t}^*)^*,\quad
  \cR_l^\perp \cong (\cK_{b,l}^*)^*;
\]
we shall use these identifications implicitly below.

The resolvent at $\sigma\neq 0$, $\Im\sigma\geq 0$, exists: 
\begin{prop}[Resolvent near $0$]
\label{PropRExist}
 The operator $\wh{L_b}(\sigma)\colon\cX^{s,\ell}_b(\sigma)\to\Hbext^{s-1,\ell+2}$ is invertible for $\sigma\in\C$, $\Im\sigma\geq 0$, $\sigma\neq 0$. The inverse is given by 
 \begin{equation}
  \label{9.4}
    \wh{L_b}(\sigma)^{-1} :=\check L_b(\sigma)^{-1}\begin{pmatrix}
         \wt R_{0 0} & \sigma^{-1}\wt R_{0 1} & \wt R_{0 2} \\
         \sigma^{-1}\wt R_{1 0} & \sigma^{-2}\wt R_{1 1} & \sigma^{-1}\wt R_{1 2} \\
         \wt R_{2 0} & \sigma^{-1}\wt R_{2 1} & \sigma^{-1}\wt R_{2 2}
       \end{pmatrix},
  \end{equation}
where $\tilde{R}_{ij}(\sigma)$ are continuous in $\sigma$ down to zero. 
\end{prop}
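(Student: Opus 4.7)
The invertibility for $\sigma\neq 0$ with $\Im\sigma\geq 0$ is immediate from Theorem~\ref{ThmOp} (Fredholm of index~$0$) combined with Theorem~\ref{ThmL0}(1) (trivial kernel); the substantive content is the structural formula~\eqref{9.4} describing the singular behavior of $\wh{L_b}(\sigma)^{-1}$ as $\sigma\to 0$. My plan is to write $\wh{L_b}(\sigma)=\check L_b(\sigma)-V$ so that
\[
  \wh{L_b}(\sigma)^{-1}=\check L_b(\sigma)^{-1}\bigl(I-V\check L_b(\sigma)^{-1}\bigr)^{-1},
\]
and then to analyze $M(\sigma):=(I-V\check L_b(\sigma)^{-1})^{-1}$. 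Since by Lemma~\ref{LemmaRPert0} the operator $\check L_b(\sigma)^{-1}$ is continuous in $\sigma$ down to~$0$, and since $V$ has finite rank with range in a fixed finite-dimensional subspace $W\subset\CIc$ (automatically containing $\wt\cK_b=V\cK_b$), inverting $I-V\check L_b(\sigma)^{-1}$ reduces to a finite-dimensional matrix pencil whose singular structure at $\sigma=0$ is precisely what will be packaged into the matrix in~\eqref{9.4}.

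Next I would represent $N(\sigma):=I-V\check L_b(\sigma)^{-1}=\wh{L_b}(\sigma)\check L_b(\sigma)^{-1}$ as a $3\times 3$ block matrix using the splittings~\eqref{EqRSplit}, with source decomposed as $\wt\cK^\perp\oplus\wt\cK_{b,t}\oplus\wt\cK_{b,l}$ and target as $\ran\Pi_b\oplus\cR_t^\perp\oplus\cR_l^\perp$. At $\sigma=0$, $\wh{L_b}(0)$ annihilates $\cK_b$ and has range $\ran\Pi_b$, so only the $(0,0)$ entry of $N(0)$ is nonzero and it is an isomorphism $A_{00}\colon\wt\cK^\perp\to\ran\Pi_b$ by construction of the splittings~\eqref{EqRwtcK}. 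For small $\sigma\neq 0$, a first Schur complement (inverting the $(0,0)$ block, which is an $O(\sigma)$-perturbation of $A_{00}$) thus reduces the task to inverting the lower-right $2\times 2$ block $D(\sigma)$ indexed by $(t,l)$, whose entries all vanish at $\sigma=0$.

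The heart of the argument is to identify the orders of vanishing of $D(\sigma)$ via Taylor expansion in $\sigma$, using $\wh{L_b}'(0)=-i[L_b,t_*]$ and $\wh{L_b}''(0)=-[[L_b,t_*],t_*]$ together with Lemma~\ref{Hi3.17}. Exploiting the identity $\la V u,h^*\ra=\la\check L_b(0)u,h^*\ra$ for $h^*\in\cK_b^*$ (which follows from $\wh{L_b}(0)^*h^*=0$ and $V=\check L_b(0)-\wh{L_b}(0)$), the first-order pairing simplifies for $k_0\in\cK_b$ and $h^*\in\cK_b^*$ to
\[
  \la V\check L_b(0)^{-1}\wh{L_b}'(0)k_0,h^*\ra=\la\wh{L_b}'(0)k_0,h^*\ra=-i\la[L_b,t_*]k_0,h^*\ra.
\]
For $k_0\in\cK_{b,l}$, Lemma~\ref{Hi3.17}(1)--(2) then gives a non-degenerate pairing $\cK_{b,l}\times\cK_{b,l}^*\to\C$, producing the $\sigma^{-1}$ pole in the $(l,l)$ entry of $M(\sigma)$. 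For $k_0\in\cK_{b,t}$, the identity~\eqref{7.3.1} forces $[L_b,t_*]h_{{\rm s}1}(\scal)=-L_b\breve h_{{\rm s}1}(\scal)\in\ran L_b$, which annihilates all of $\cK_b^*$; this kills the first-order term not only in the $(t,t)$ entry but also in the $(t,l)$ and (symmetrically) the $(l,t)$ entries of $D(\sigma)$. I would then push to second order, where Lemma~\ref{Hi3.17}(3) provides the non-degenerate center-of-mass pairing of $\tfrac12[[L_b,t_*],t_*]h_{{\rm s}1}(\scal(c))+[L_b,t_*]\breve h_{{\rm s}1}(\scal(c))$ with $h^*_{{\rm s}1}(\scal(\bhq))$; together with a second Schur complement inside $D(\sigma)$ this yields the $\sigma^{-2}$ pole in the $(1,1)$ entry of $M(\sigma)$ and the $\sigma^{-1}$ behavior of the $(t,l)$, $(l,t)$ entries, exactly as in~\eqref{9.4}. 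Continuity of each $\wt R_{ij}(\sigma)$ in $\sigma$ down to $0$ then inherits from the continuity of $\check L_b(\sigma)^{-1}$ and the smooth dependence of the finite-dimensional matrix entries on $\sigma$. The main technical obstacle is the need to push to second order to detect the non-vanishing of the $(t,t)$ block, which requires careful interplay between the linearly growing translation generalized modes from Proposition~\ref{prop6.4} and the center-of-mass pairings of Lemma~\ref{Hi3.17}(3).
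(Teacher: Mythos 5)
Your strategy is essentially the paper's own (which in turn defers to \cite[Proposition~11.3]{HHV21}): write $\wh{L_b}(\sigma)\check L_b(\sigma)^{-1}=I-V\check L_b(\sigma)^{-1}$ in block form with respect to the splittings~\eqref{EqRSplit}, determine the leading Taylor coefficients of the blocks via the pairings of Lemma~\ref{Hi3.17}, and invert by Schur complements; this matches \eqref{EqRMtx}--\eqref{EqROp1}.

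Two points need repair, one of which is a step that fails as written. The identity~\eqref{7.3.1} shows that $[L_b,t_*]h_t$, $h_t\in\cK_{b,t}$, annihilates all of $\cK_b^*$, and this kills the first-order terms only in the \emph{column} with source $\wt\cK_{b,t}$, i.e.\ $L_{1 1}$ and $L_{2 1}$ in the paper's notation. Your claim that the remaining off-diagonal entry (source $\wt\cK_{b,l}$, output tested against $\cK_{b,t}^*$, i.e.\ $L_{1 2}$) vanishes ``symmetrically'' is unjustified: $L_b$ is not formally self-adjoint (gauge modification and constraint damping), and the dual states $h^*\in\cK_b^*$ are not elements of $\cK_b$, so no symmetry is available. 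What is needed is the separate vanishing $\la[L_b,t_*]g_b^{\prime\Ups}(\dot b),h_{{\rm s}1}^*(\scal)\ra=0$, which is precisely the ``resp.\ $0$'' entries of Lemma~\ref{Hi3.17}\eqref{ItHi3.17Mass}--\eqref{ItHi3.17Ang}; since you invoke that lemma anyway, the fix is immediate, but the argument as stated is wrong. Secondly, the finite-dimensional entries are \emph{not} smooth in $\sigma$: by Lemma~\ref{LemmaRPert0}, $\check L_b(\sigma)^{-1}$ is only continuous, so one must work with partial expansions with continuous remainders, available because $\wh{L_b}(\sigma)$ is a quadratic polynomial in $\sigma$ (indeed, on $\wt\cK_b$ one has $\wh{L_b}(\sigma)\check L_b(\sigma)^{-1}\check L_b(0)k_0=\sigma V\check L_b(\sigma)^{-1}\wh{L_b}'(0)k_0+\tfrac{\sigma^2}{2}V\check L_b(\sigma)^{-1}\wh{L_b}''(0)k_0$, which is exactly the substitute for smoothness). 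Relatedly, the $\sigma^2$-coefficient whose invertibility you need in the $(t,t)$ slot after the Schur reduction is not the bare pairing $\tfrac12\la[[L_b,t_*],t_*]h_t,h_t^*\ra$ but the combined pairing $k_b$ of Definition~\ref{DefRBreve}: the $[L_b,t_*]\breve h_t$ contribution is generated by the $\sigma$-dependence of $\check L_b(\sigma)^{-1}$ and by the coupling to the $(0,0)$ block. You do cite Lemma~\ref{Hi3.17}\eqref{ItHi3.17COM}, which is the non-degeneracy of exactly this combination, but the bookkeeping identifying the Schur-complemented coefficient with $k_b$ is the substantive content deferred to \cite[Proposition~11.3]{HHV21} and should not be presented as automatic.
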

\begin{proof}
The proof is analogous to that of \cite[Proposition~11.3]{HHV21}. The general strategy is to write, in the splittings~\eqref{EqRSplit} of $\Hbext^{s-1,\ell+2}$,
  \begin{equation}
  \label{EqRMtx}
    \wh{L_b}(\sigma)\check L_b(\sigma)^{-1} = \begin{pmatrix} L_{0 0} & L_{0 1} & L_{0 2} \\ L_{1 0} & L_{1 1} & L_{1 2} \\ L_{2 0} & L_{2 1} & L_{2 2} \end{pmatrix},\quad
    L_{i j}=L_{i j}(\sigma).
  \end{equation}
  For example, $L_{0 0}(\sigma):=\Pi_b\wh{L_b}(\sigma)\check L_b(\sigma)|_{\wt\cK^\perp}$. (We drop the dependence of $L_{i j}$ on $b$ from the notation.) Since the range of $\wh{L_b}(0)$ is annihilated by $\cK_b^*$, we have $L_{1 0}(0)=L_{2 0}(0)=0$; likewise, $\wh{L_b}(0)|_{\cK_b}=0$ implies that
  \[
    L_{\ker} := \begin{pmatrix} L_{0 1} & L_{0 2} \\ L_{1 1} & L_{1 2} \\ L_{2 1} & L_{2 2} \end{pmatrix}
  \]
  satisfies $L_{\ker}(0)=0$. Lastly, $L_{0 0}(0)$ is invertible. One then writes down partial Taylor expansions for the $L_{i j}$ and computes the first non-vanishing terms. This can be done exactly as in the proof of \cite[Proposition 11.3]{HHV21}. We make a few comments:

\begin{enumerate}
\item In the large $a$ case considered here, the asymptotic behavior of the metric does not change compared to the small $a$ case. Modifications due to change of gauge and constraint damping are compactly supported. As a result, the normal operator of $\wh{L_b}(0)$ and the relevant behavior of $\pa_\sigma\wh{L_b}(0)$, $\pa_\sigma^2\wh{L_b}(0)$ do not change.
\item The coefficients $L_{10}, L_{20}$ can be treated as in \cite{HHV21}. 
\item Concerning $L_{ij},\ i,j\in \{1,2\}$, one has to compute the Taylor expansion. As can be observed from the proof of \cite[Proposition~11.3]{HHV21}, the vanishing of the first Taylor coefficients is linked to the vanishing of certain pairings. More concretely:  
\begin{enumerate}
\item If for $h\in \cK_{b,t/l}$ the bilinear form $\la[L_b,t_*]h,\cdot\ra$ is non-vanishing on $(\cK_{b,t/l})^*$, then the corresponding Taylor coefficient in front of $\sigma$ is non-vanishing.
\item If the above mentioned bilinear form does vanish on $\cK_b^*$ (which in the present paper happens only for $h\in\cK_{b,t}$), then the equation $[L_b,t_*]h=L_b\breve{h}$ has a solution.  One then considers for $h\in \cK_{b,t/l}$ the pairing
\begin{align*}
\la [[L_b,t_*],t_*]h+2[L_b,t_*]\breve{h},\cdot\ra
\end{align*} 
on $(\cK_{b,t/s})^*$. If this is non-vanishing, the corresponding term in the Taylor expansion in front of $\sigma^2$ is non-vanishing. 
\end{enumerate}

(Note that the first pairing above vanishes on $\cK_b^*$ if and only if the linearized Einstein equation has a linearly growing mode with leading order term $h$. If this is the case, the second pairing vanishes on $\cK_b^*$ if and only if there is also a quadratically growing mode with leading order term $h$.) Using Lemma \ref{Hi3.17}, one thus finds that 
\begin{align*}
L_{11}=\sigma^2\wt{L}_{11}, \ L_{12}=\sigma^2\wt{L}_{12},\ L_{21}=\sigma^2\wt{L}_{21},\ L_{22}=\sigma\wt{L}_{22}
\end{align*} 
for continuous $\wt L_{ij}$.
\end{enumerate}

Summarizing we find, as in the proof of \cite[Proposition~11.3, equation~(11.30)]{HHV21},
\begin{equation}
  \label{EqROp1}
    \wh{L_b}(\sigma)\check L_b(\sigma)^{-1}
     =\begin{pmatrix}
        L_{0 0} & \sigma\wt L_{0 1} & \sigma\wt L_{0 2} \\
        \sigma\wt L_{1 0} & \sigma^2\wt L_{1 1} & \sigma^2\wt L_{1 2} \\
        \sigma\wt L_{2 0} & \sigma^2\wt L_{2 1} & \sigma\wt L_{2 2}
      \end{pmatrix}.
\end{equation}
with sufficiently precise information on $\wt L_{ij}$ to show that the inverse exists and has the form \eqref{9.4}. 
\end{proof}

Following \cite{HHV21} we can improve \eqref{EqROp1} to
 \begin{equation*}
  \label{EqROpimpr}
    \wh{L_b}(\sigma)\check L_b(\sigma)^{-1}
     =\begin{pmatrix}
        L_{0 0} & \sigma^2 L'_{0 1} & \sigma\wt L_{0 2} \\
        \sigma\wt L_{1 0} & \sigma^2\wt L_{1 1} & \sigma^2\wt L_{1 2} \\
        \sigma\wt L_{2 0} & \sigma^2\wt L_{2 1} & \sigma\wt L_{2 2}
      \end{pmatrix}.
  \end{equation*}
As noted in \cite{HaefnerHintzVasyKerrErratum} the other improvements of \cite{HHV21} are not valid because of an error in \cite[Lemma~11.7]{HHV21}. The above improvement only uses \cite[(11.38)]{HHV21} which can be arranged (in contrast to \cite[(11.39)]{HHV21}, which cannot be arranged). This leads to an improved form of the inverse 
\begin{equation}
  \label{9.5}
    \wh{L_b}(\sigma)^{-1} :=\check L_b(\sigma)^{-1}\begin{pmatrix}
         \wt R_{0 0} & \wt R'_{0 1} & \wt R_{0 2} \\
         \sigma^{-1}\wt R_{1 0} & \sigma^{-2}\wt R_{1 1} & \sigma^{-1}\wt R_{1 2} \\
         \wt R_{2 0} & \sigma^{-1}\wt R_{2 1} & \sigma^{-1}\wt R_{2 2}
       \end{pmatrix}.
\end{equation}

\begin{definition}[Pairing]
\label{DefRBreve}
  For $h_t=h_{{\rm s}1}(\scal)\in \cK_{b,t}$, set $\breve h_t=\breve h_{{\rm s}1}(\scal)\in \cK_{b,t}$ (so that $L_b(t_* h_t+\breve h_t)=0$). We then define the non-degenerate sesquilinear pairing
  \begin{equation}
  \label{EqRBreve}
  \begin{split}
    &k_b \colon \cK_b \times \cK_b^* \to \C \\
    &\qquad k_b((h_t,g'^{\Upsilon}_b(\dot{b})),h^*) := \big\la\half\bigl([[L_b,t_*],t_*]h_t+2[L_b,t_*]\breve h_t\bigr) + [L_b,t_*]g_b'^{\Upsilon}(\dot b), h^*\big\ra.
  \end{split}
  \end{equation}
  Moreover, for $h_t^*\in \cK_{b,t}^*$ we define $\breve h_t^*$ so that $L_b(t_* h^*_t+\breve h^*_t)=0$.\footnote{The construction of $\breve h_t^*$ is completely analogous to that of $\breve h_{{\rm s}1}$ in~\eqref{EqBreveh} and Proposition~\ref{prop7.5} above.}
\end{definition}

The non-degeneracy of this pairing is a consequence of Lemma~\ref{Hi3.17}.

As in \cite[\S{11.2}]{HHV21} but with the correction from \cite{HaefnerHintzVasyKerrErratum}, we now obtain:

\begin{thm}[Precise structure of the low energy resolvent]
\label{ThmR}
  For $\Im\sigma\geq 0$, we can write
  \[
    \wh{L_b}(\sigma)^{-1} = P_b(\sigma) + \sigma^{-1}\check L_b(\sigma)^{-1}\wt R_{1 0} + L^-_b(\sigma) \colon \Hbext^{s-1,\ell+2}(X;S^2\,\wt\Tsc{}^*X) \to \Hbext^{s,\ell}(X;S^2\,\wt\Tsc{}^*X).
  \]
  The three summands are as follows:
  \begin{itemize}
  \item the regular part $L_b^-(\sigma)$ is uniformly in bounded operator norm and continuous with values in $\cL_{\rm weak}(\Hbext^{s-1,\ell+2},\Hbext^{s,\ell})\cap\cL_{\rm op}(\Hbext^{s-1+\eps,\ell+2+\eps},\Hbext^{s-\eps,\ell-\eps})$, $\eps>0$;
  \item in the second term, we regard $\wt R_{1 0}$ from~\eqref{9.5} as an operator on all of $\Hbext^{s-1,\ell+2}$ via composing with the inclusion $\wt\cK_{b,t}\hra\Hbext^{s-1,\ell+2}$ on the left and $\Pi_b$ on the right);
  \item the principal part $P_b(\sigma)$ is a quadratic polynomial in $\sigma^{-1}$ with finite rank coefficients.
  \end{itemize}
  Explicitly, $P_b(\sigma)$ is given by
  \begin{equation}
  \label{EqRPrincipal}
    P_b(\sigma)f = (-\sigma^{-2}h_t+i\sigma^{-1}\breve h_t) + i\sigma^{-1}(h'_t + g'^{\Upsilon}_b(\dot{b})),
  \end{equation}
  where $h_t,h'_t\in\cK_{b,t}$ and $g^{\prime\Ups}_b(\dot{b})\in\cK_{b,l}$ are uniquely determined by the conditions
  \begin{subequations}
  \begin{align}
  \label{EqRSing1}
    k_b((h_t, g^{\prime\Ups}_b(\dot{b})),h^*) &= \la f,h^*\ra\quad(h^*\in\cK_b^*), \\
  \label{EqRSing2}
  \begin{split}
    k_b(h'_t,h_t^*) &= -\la\half[[L_b,t_*],t_*](\breve h_t+ g^{\prime\Ups}_b(\dot{b})),h_t^*\ra - \la f,\breve h_t^*\ra \\
      &\qquad + \big\la\half\bigl([[L_b,t_*],t_*]h_t+2[L_b,t_*]\breve h_t\bigr)+[L_b,t_*] g^{\prime\Ups}_b(\dot{b}),\breve h_t^*\big\ra \quad(h_t^*\in\cK_{b,t}^*).
  \end{split}
  \end{align}
  \end{subequations}
 \end{thm}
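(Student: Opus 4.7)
The plan is to unpack the matrix representation~\eqref{9.5} from Proposition~\ref{PropRExist} and read off the singular structure directly, following \cite[\S{11.2}]{HHV21} with the correction from~\cite{HaefnerHintzVasyKerrErratum}. The middle summand $\sigma^{-1}\check L_b(\sigma)^{-1}\wt R_{1 0}$ of the theorem is literally the $(2,1)$ entry of the matrix in~\eqref{9.5}, reinterpreted as an operator on all of $\Hbext^{s-1,\ell+2}$ as prescribed in the statement. The $(1,1)$, $(1,3)$, $(3,1)$ entries and the improved $(1,2)$ entry (now of order $\sigma^2$, cf.\ the formula displayed above the theorem) yield contributions continuous in $\sigma$ after composition with $\check L_b(\sigma)^{-1}$ on the left, so they go into $L_b^-(\sigma)$. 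It therefore remains to identify the contributions of the entries $\sigma^{-2}\wt R_{1 1}$, $\sigma^{-1}\wt R_{1 2}$, $\sigma^{-1}\wt R_{2 1}$, $\sigma^{-1}\wt R_{2 2}$ with the explicit principal part $P_b(\sigma)$.

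To extract this explicit form I would make the ansatz~\eqref{EqRPrincipal} with $h_t,h'_t\in\cK_{b,t}$ and $g_b^{\prime\Ups}(\dot b)\in\cK_{b,l}$ as unknowns depending linearly on $f$, and apply $\wh{L_b}(\sigma)$ to $P_b(\sigma)f$. Using $\wh{L_b}(\sigma)=e^{i\sigma t_*}L_b e^{-i\sigma t_*}$ one computes $\wh{L_b}'(0)=-i[L_b,t_*]$ and $\wh{L_b}''(0)=-[[L_b,t_*],t_*]$. The $\sigma^{-2}$ coefficient of $\wh{L_b}(\sigma)P_b(\sigma)f$ vanishes because $h_t\in\cK_{b,t}\subset\ker\wh{L_b}(0)$, and the $\sigma^{-1}$ coefficient vanishes by the defining identity $L_b\breve h_t=-[L_b,t_*]h_t$ from~\eqref{7.3.1}. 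The $\sigma^0$ coefficient reduces to
\begin{equation*}
  \tfrac12[[L_b,t_*],t_*]h_t + [L_b,t_*]\breve h_t + [L_b,t_*]h'_t + [L_b,t_*]g_b^{\prime\Ups}(\dot b) + \wh{L_b}(0)u_0,
\end{equation*}
where $u_0$ denotes the value at $\sigma=0$ of the regular remainder. Setting this equal to $f$ and pairing with an arbitrary $h^*\in\cK_b^*$ kills $\wh{L_b}(0)u_0$, while the identity $[L_b,t_*]h'_t=-L_b\breve h'_t$ (a consequence of the existence of $\breve h'_t$) eliminates the contribution of $h'_t$; what remains is precisely~\eqref{EqRSing1}. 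Non-degeneracy of $k_b$ on $\cK_b\times\cK_b^*$, which is an immediate consequence of Lemma~\ref{Hi3.17}, then uniquely determines $h_t$ and $g_b^{\prime\Ups}(\dot b)$.

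The remaining parameter $h'_t\in\cK_{b,t}$ is pinned down by pairing the same $\sigma^0$-coefficient equation with the generalized dual element $\breve h_t^*\in\cK_{b,t}^*$ from Definition~\ref{DefRBreve}, using $L_b^*\breve h_t^*=-[L_b^*,t_*]h_t^*$ together with integration by parts to convert $\la\wh{L_b}(0)u_0,\breve h_t^*\ra$ into $-\la[L_b,t_*]u_0,h_t^*\ra$. The latter is in turn extracted from the $\sigma^1$-coefficient equation (paired with $h_t^*$), and careful collection of the resulting terms, together with the $\sigma^0$-expansion of the middle term $\sigma^{-1}\check L_b(\sigma)^{-1}\wt R_{1 0}$ (via $\check L_b(\sigma)^{-1}=\check L_b(0)^{-1}+O(\sigma)$), produces~\eqref{EqRSing2}. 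The main technical obstacle is precisely this last bookkeeping: the middle term must remain separate from $P_b(\sigma)$, since the stronger improvement of~\eqref{EqROp1} that would be needed to absorb it into $P_b(\sigma)$ is not available (this is the content of the erratum~\cite{HaefnerHintzVasyKerrErratum}); one has to check that the available weaker improvement \cite[(11.38)]{HHV21} is still sufficient to yield~\eqref{EqRSing2} in exactly the form stated.
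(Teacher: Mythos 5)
Your proposal is essentially the paper's own argument: Theorem~\ref{ThmR} is obtained by feeding the improved matrix form \eqref{9.5} into the scheme of \cite[\S{11.2}]{HHV21} as corrected in \cite{HaefnerHintzVasyKerrErratum}, with the singular coefficients identified through the pairings of Lemma~\ref{Hi3.17} and Definition~\ref{DefRBreve} — exactly what your Laurent-matching computation, paired against $\cK_b^*$, $h_t^*$ and $\breve h_t^*$, carries out. The only blemishes are cosmetic: the improved $(1,2)$ entry of the \emph{inverse} is $O(1)$ (it is the forward entry $L_{0 1}$ in \eqref{EqROpimpr} that is $O(\sigma^2)$); the middle term $\sigma^{-1}\check L_b(\sigma)^{-1}\wt R_{1 0}$ also contributes at order $\sigma^0$ to the equation you pair with $\cK_b^*$, its contribution vanishing there by the same integration-by-parts identity you invoke to remove $h'_t$; and the conversion via $\breve h_t^*$ yields $+\la [L_b,t_*]u_0,h_t^*\ra$ rather than $-\la [L_b,t_*]u_0,h_t^*\ra$.
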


\subsection{Regularity of the resolvent in the spectral parameter}

The arguments in~\cite[\S{12}]{HHV21} and \cite{HaefnerHintzVasyKerrErratum} apply and give:

\begin{prop}
Let $\sigma_0>0$. We have
\begin{align*}
\sigma^{-1}\check L_b(\sigma)^{-1}\wt R_{1 0} &\in H^{\frac12-\epsilon-}((-\sigma_0,\sigma_0);\cL(\Hbext^{s-1,\ell+2};\cK_{b,t})),\\
L_{b}^-&\in H^{\frac32-\epsilon-}((-\sigma_0,\sigma_0);\cL(\Hbext^{s-1,\ell+2};\Hbext^{s-\max(\epsilon,\frac12),\ell+\epsilon-1})).
\end{align*}
\end{prop}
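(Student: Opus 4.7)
The plan is to adapt \cite[\S{12}]{HHV21}, as amended by \cite{HaefnerHintzVasyKerrErratum}. Away from $\sigma=0$, the family $\wh{L_b}(\sigma)$ is smooth in $\sigma$ with invertible values by Theorem~\ref{ThmOp}\eqref{ItOpLow}, so its inverse is smooth there as well; the entire Sobolev content of each summand is therefore dictated by its behavior at $\sigma=0$. Thus the task reduces to exhibiting a conormal expansion of each summand at the origin and then reading off the $H^s_\loc$ regularity in $\sigma$ from the elementary facts $\sigma^{-1}\in H^{\half-\eps}_\loc$ and $\sigma\log|\sigma|\in H^{\tfrac32-\eps}_\loc$.

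For the first summand $\sigma^{-1}\check L_b(\sigma)^{-1}\wt R_{1 0}$: Lemma~\ref{LemmaRPert0} yields continuity of $\check L_b(\sigma)^{-1}$ down to the origin, and since $\check L_b(\sigma)$ is itself a polynomial of degree $2$ in $\sigma$ plus a smooth family, a Neumann series expansion upgrades this to smoothness of $\check L_b(\sigma)^{-1}$ in $\sigma$ (at the cost of arbitrarily small losses in regularity and weight, since commutation with $\pa_\sigma\wh{L_b}(0)$ mildly weakens the operator's mapping properties). Because $\wt R_{1 0}$ has finite-dimensional range contained in the compactly supported space $\wt\cK_{b,t}$, the composition $\check L_b(\sigma)^{-1}\wt R_{1 0}$ is smooth into $\cL(\Hbext^{s-1,\ell+2};\cK_{b,t})$, and multiplying by the scalar function $\sigma^{-1}$ produces the advertised $H^{\half-\eps-}$ regularity.

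For $L_b^-(\sigma)$ I would use the identity $L_b^-(\sigma)=\wh{L_b}(\sigma)^{-1}-P_b(\sigma)-\sigma^{-1}\check L_b(\sigma)^{-1}\wt R_{1 0}$ together with the block structure of Proposition~\ref{PropRExist}, and push the partial Taylor expansion of the matrix entries $L_{i j}(\sigma)$ of~\eqref{EqRMtx} one order beyond what was used there. After subtracting the singular contributions identified in Theorem~\ref{ThmR}, the remainder is conormal at $\sigma=0$ with leading singular behavior of the form $\sigma\log|\sigma|$, hence lies locally in $H^{\tfrac32-\eps}$. The passage from $\Hbext^{s,\ell}$ on the target side of Theorem~\ref{ThmR} to $\Hbext^{s-\max(\eps,\half),\ell+\eps-1}$ here is the price of commuting a $\sigma$-derivative past the resolvent identity: each such derivative picks up the factor $\pa_\sigma\wh{L_b}(0)\in\rho\,\Diffsc$, which costs one power of $\rho$ in decay and a half derivative in regularity (the latter only when $\eps<\half$).

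The main obstacle is the careful bookkeeping of the subleading coefficients of the Taylor expansion of the entries $L_{i j}(\sigma)$ near $\sigma=0$; this is exactly where \cite{HHV21} was corrected by \cite{HaefnerHintzVasyKerrErratum}. Concretely, one may invoke only the restricted improvement \cite[(11.38)]{HHV21}, which the proof of Theorem~\ref{ThmR} has already arranged, and not the stronger \cite[(11.39)]{HHV21}, and one must then verify that the surviving logarithmic terms in the expansion of $L_b^-(\sigma)$ are consistent with the claimed $H^{3/2-\eps-}$ regularity. Once this bookkeeping is in place, the proposition follows from the Sobolev regularity of $\sigma^{-k}$ and $\sigma^k\log|\sigma|$ together with the boundedness of the conormal-expansion coefficients as operators between the stated b-Sobolev spaces.
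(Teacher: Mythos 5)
Your two central analytic claims do not hold, and they are what your argument rests on. First, the ``elementary fact'' $\sigma^{-1}\in H^{\frac12-\eps}_{\mathrm{loc}}$ is false: $\sigma^{-1}$ is not even in $L^2$ near $\sigma=0$ (as a principal value distribution it lies only in $H^{-\frac12-}$ locally), so ``smooth operator family times $\sigma^{-1}$'' cannot produce the first membership. The stated $H^{\frac12-\eps-}$ regularity of $\sigma^{-1}\check L_b(\sigma)^{-1}\wt R_{1 0}$ comes from the specific vanishing/conormal behavior of $\wt R_{1 0}(\sigma)$ at $\sigma=0$ established in \cite[\S 12]{HHV21} and corrected in \cite{HaefnerHintzVasyKerrErratum}, not from a scalar Sobolev computation. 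Second, $\check L_b(\sigma)^{-1}$ is \emph{not} smooth in $\sigma$, even ``with arbitrarily small losses'': differentiating the resolvent identity inserts $\pa_\sigma\wh{L_b}(\sigma)$, which by \eqref{EqLStruct} is $2\rho(\rho D_\rho+i)$ modulo better terms and hence gains only \emph{one} order of decay, whereas feeding the result back into $\check L_b(\sigma)^{-1}$ requires two orders (the weight must remain in the window $\ell\in(-\tfrac32,-\tfrac12)$, which has length one). Each $\sigma$-derivative therefore costs a full power of $\rho$, and only roughly one derivative plus H\"older-type regularity is available; this finite conormal regularity at $\sigma=0$ is precisely why the proposition asserts $H^{\frac32-\eps-}$ and $H^{\frac12-\eps-}$ rather than smoothness (smoothness would falsely imply faster time decay).

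You also omit the step that constitutes the actual content of the paper's proof. The paper argues that the proofs of \cite[\S 12]{HHV21}, \cite{HaefnerHintzVasyKerrErratum} apply verbatim because (i) the structure \eqref{EqLStruct} of $\wh{L_b}(\sigma)$ at infinity is unchanged for large $a$ and unaffected by the compactly supported gauge modification and constraint damping, and (ii) the new kernel and dual-kernel elements have the decay and regularity used there: $g_b^{\prime\Ups}(\dot b)$, $h_{{\rm s}1}(\scal)$, $\breve h_{{\rm s}1}$, $h^*_{{\rm s}0}$, $h^*_{{\rm s}1}(\scal)$, $h^*_{{\rm v}1}(\vect)$, $\breve h^*_{{\rm s}1}$ lie, modulo $\rho\CI$, in $\Hbext^{\infty,\frac12-}$ resp.\ $\Hbsupp^{-\infty,\frac12-}$, and the $\rho\CI$ leading parts are harmless because the leading term $2\sigma\rho(\rho D_\rho+i)$ of $\wh{L_b}(\sigma)$ annihilates them. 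Since the zero-mode structure here genuinely differs from \cite{HHV21} (no Coulomb-type mode, linearized mass changes now appear as stationary modes), this verification is the one new thing that must be done, and your proposal does not address it. Your loss bookkeeping for $L_b^-$ (a power of $\rho$ and up to half a derivative per $\sigma$-derivative) is in the right spirit, but without the correct mechanism for the first summand and without the kernel-element check, the argument does not go through.
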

\begin{proof}
The proof in \cite{HHV21,HaefnerHintzVasyKerrErratum} goes through essentially without any changes. The main ingredient is the structure of the resolvent and the behavior of the operator at infinity, which is given by 
\begin{equation}
\label{EqLStruct}
\begin{split}
  \wh{L_b}(\sigma) &= 2\sigma\rho(\rho D_\rho+i) + \wh{L_b}(0) + \cR_b(\sigma), \\
  &\quad \cR_b(\sigma)\in \sigma\rho^3\Diffb^1(X) + \sigma\rho^2\CI(X) + \sigma^2\rho^2\CI(X), \quad
  \wh{L_b}(0)\in\rho^2\Diffb^2(X).
\end{split}
\end{equation}
This behavior is the same in the large $a$ case, and it is unaffected by the compactly supported changes of the gauge condition and constraint damping. We do need to take into account that the elements of the different kernels have slightly changed compared to \cite{HHV21}. Let us compare regularity and decay of the different elements that are involved:
\begin{enumerate}
\item In \cite[Lemma~9.6]{HHV21} and using the notation of that paper, we have
\begin{align*}
h_{b,{\rm s}0},\ h_{b,{\rm s}1},\ h_{b,v1}\in \Hbext^{\infty,\frac12-},\\
h^*_{b,{\rm s}1}(\scal)\in \Hbsupp^{-\infty,\frac12-},\\
h_{b,v1}^*(\vect)\in \rho\cC^{\infty}+\Hbsupp^{-\infty,\frac12-},\\
\breve{h}_{b,{\rm s}0},\ \breve{h}_{b,{\rm s}1}\in \rho\cC^{\infty}+\Hbext^{\infty,\frac12-},\\
\breve{h}^*_{b,{\rm s}0},\ \breve{h}^*_{b,{\rm s}1}\in \rho\cC^{\infty}+\Hbsupp^{-\infty,\frac12-}. 
\end{align*}
\item In the present paper, we have no pure gauge ${\rm s}0$ zero energy state anymore, and also $\breve h_{b,{\rm s}0}$ (which encoded linearized mass changes) is no longer present. Instead,
\begin{align*}
g_b^{\prime\Ups}(\dot{b})\in \rho\cC^{\infty}+\cA^{2-}\subset\rho\cC^{\infty}+\Hbext^{\infty,\frac12-},\\
h_{{\rm s}1}(\scal)\in \rho^2\cC^{\infty}+\cA^{3-}\subset\Hbext^{\infty,\frac12-},\\
h_{{\rm s}0}^*,\ h_{{\rm s}1}^*(\scal),\ h^*_{{\rm v}1}(\vect)\in \Hbsupp^{-\infty,\frac12-},\\
\breve{h}_{{\rm s}1}\in \rho\cC^{\infty}+\Hbext^{\infty,\frac12-},\\
\breve{h}_{{\rm s}1}^*\in \rho \cC^{\infty}+\Hbsupp^{-\infty,\frac12-}. 
\end{align*}
\end{enumerate}
The important point is that modulo $\rho\cC^{\infty}$ the terms are in $\Hbext^{\infty,\frac12-}$ resp. $\Hbsupp^{-\infty,\frac12-}$. The leading order $\rho\cC^{\infty}$ is allowed because the leading term in $\wh L_b(\sigma)$, i.e.\ $2\sigma\rho(\rho D_\rho+i)$, annihilates it. Given this observation, the proof is unchanged. 
\end{proof}
The regularity of $L^{-}(\sigma)$ at intermediate and high frequencies is the same as in \cite{HHV21}, proofs are unchanged. We omit the details.

\section{Precise formulation of the results}
\label{Sec8}

We introduce the \emph{spacetime} Sobolev spaces
\begin{equation}
\label{EqDFn}
  \wt H_\bop^{s,\ell},\quad
  \wt H_{\bop,\rm c}^{s,\ell,k},
\end{equation}
equal to $L^2(t_*^{-1}([0,\infty));|\dd g_b|)$ for $s,\ell,k=0$. The index $\ell\in\R$ is the weight in $\rho=r^{-1}$, i.e.\ $\wt H_\bop^{s,\ell}=\rho^\ell\wt H_\bop^s$, likewise for the second (conormal) space. The index $s\in\R$ measures regularity with respect to $\pa_{t_*}$ and stationary b-vector fields on $X$. The index $k\in\N_0$ measures regularity with respect to $\la t_*\ra D_{t_*}$, so $u\in\wt H_{\bop,\rm c}^{s,\ell,k}$ if and only $(\la t_*\ra D_{t_*})^j u\in\wt H_\bop^{s,\ell}$, $j=0,\ldots,k$. We require all elements of these spaces to be supported in $t_*\geq -T$ for some fixed $T\in\R$. We write
\[
  \Hb^s(\R_{t_*}),\quad H_{\bop,{\rm c}}^{s,k}(\R_{t_*})
\]
for the analogous spaces of functions of $t_*$ only.

We define
\[
  \Sigma_0^\circ := \ft^{-1}(0),
\]
which is a (spacelike with respect to $g_b$) Cauchy surface. Identifying $\Sigma_0^\circ$ with the subset $\{r>r_0\}$ (recalling~\eqref{eqr0}) of $\R^3$, we can compactify $\Sigma_0^\circ$ at infinity to the manifold (with two boundary components) $\Sigma_0$. We recall from~\S\ref{Sec4} the notation $\wt{\Tsc^*}\Sigma_0=\Tsc^*\Sigma_0 \oplus \ul\R\,\dd\ft$ for the spacetime scattering cotangent bundle, spanned over $\CI(\Sigma_0)$ by $\dd t,\dd x^1,\dd x^2,\dd x^3$, with $(x^1,x^2,x^3)$ denoting standard coordinates on $\R^3$.

We now state several versions of our main theorem. The first concerns the forcing problem.  
\begin{thm}
\label{thm7.1}
 Let $\ell\in (-\frac{3}{2},-\frac{1}{2})$, $\epsilon\in (0,1)$, $\ell+\epsilon\in (-\frac{1}{2},\frac{1}{2})$, and $s>\frac{7}{2}+m$, $m\in \N_0$. Let $h$ be the forward solution of the equation $L_bh=f\in C_c^{\infty}((0,\infty)_{t_*};\Hbext^{s,\ell+2})$. Then $h$ can be written as
\begin{align*}
h=\hat{h}+\tilde{h}_t+\tilde{h},
\end{align*}
where $\hat{h}\in \hat{\cK}_b$ is a generalized mode,\footnote{Recall here Proposition~\usref{prop6.4} and Theorem~\usref{ThmL0}.} $\tilde{h}_t\in \la t_*\ra^{-\frac12+\epsilon}H_{\bop,{\rm c}}^{\infty,m}(\R_{t_*};\cK_{b,t})$, and where $\tilde{h}$ satisfies the decay estimate
\begin{align*}
\Vert \tilde{h}\Vert_{\la t_*\ra^{-\frac32+\epsilon}\wt{H}_{\bop,\rm c}^{s-2,\ell+\epsilon-1,m}}\lesssim \Vert f\Vert_{\la t_*\ra^{-\frac72+\epsilon}\wt{H}_{\bop,\rm c}^{s,\ell+2,m}}. 
\end{align*}

If $m\ge 1$, then we also have
\begin{align*}
  &\Vert (\la t_*\ra D_{t_*})^{m-1}\tilde{h}\Vert_{\la t_*\ra^{-2+\epsilon}L^{\infty}(\R_{t_*};\wt{H}_{\bop}^{s-2-m,\ell+\epsilon+1})} \\
  &\quad + \|(\la t_*\ra D_{t_*})^{m-1}\tilde h_t\|_{\la t_*\ra^{-1+\eps}L^\infty(\R_{t_*};\cK_{b,t})}\lesssim \Vert f\Vert_{\la t_*\ra^{-\frac72+\epsilon}\wt{H}_{\bop,\rm c}^{s,\ell+2,m}}.
\end{align*} 
In particular, for $m\geq 1$, $\N_0\ni j<s-(\frac{9}{2}+m)$, and $j'\in\N_0$, we have pointwise decay
\begin{align*}
  \vert (\la t_*\ra D_{t_*})^{m-1}V^j\tilde{h}\vert &\lesssim \la t_*\ra^{-2+\epsilon}r^{-\frac12-\ell-\epsilon}\Vert f\Vert_{\la t_*\ra^{-\frac72+\epsilon}\wt{H}^{s,\ell+2,1}_{\bop,\rm c}}, \\
  |(\la t_*\ra D_{t_*})^{m-1}\pa_{t_*}^{j'} \tilde h_t| &\lesssim \la t_*\ra^{-1+\epsilon}\|f\|_{\la t_*\ra^{\frac72+\eps}\tilde H_{\bop,{\rm c}}^{s,\ell+2,1}},
\end{align*} 
where $V^j$ is any up to $j$-fold decomposition of $\partial_{t_*}$, $r\partial_r$, and rotation vector fields. In the notation of~\eqref{EqRZeroModes} and Definition~\usref{DefRBreve}, the leading order term $\hat h$ can be expressed in terms of $f$ in the following manner: we have
  \[
    \hat h = (t_*h_t + \breve h_t) + h'_t + g_b^{\prime\Ups}(\dot{b}) + h''_t,\quad h_t,\ h'_t,\ h''_t\in\cK_{b,t},\ g_b^{\prime\Ups}(\dot{b}) \in\cK_{b,l},
  \]
  where $h_t,h'_t,g_b^{\prime\Ups}(\dot{b})$ are determined by~\eqref{EqRSing1}--\eqref{EqRSing2} for $\int_{\R_{t_*}} f(t_*)\,d t_*$ in place of $f$, and $h''_t$ is the unique element in $\cK_{b,t}$ for which there exists $g_b^{\prime\Ups}(\dot{b})\in\cK_{b,l}$ with $k_b((h''_t,g_b^{\prime\Ups}(\dot{b}),\cdot)=-\int_{\R_{t_*}}t_* f(t_*)\,d t_*$ as elements of $(\cK_b^*)^*$.
\end{thm}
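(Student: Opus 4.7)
The plan is to mimic the strategy of \cite[\S{13}]{HHV21} (with the corrections of \cite{HaefnerHintzVasyKerrErratum}), now feeding in the resolvent structure from Theorem~\ref{ThmR}. I would first set up the contour integral representation: since $f\in\CIc((0,\infty)_{t_*};\Hbext^{s,\ell+2})$, an energy estimate shows the forward solution is a priori bounded by $e^{C t_*}$ for some $C>0$, so
\[
  h(t_*,x) = \frac{1}{2\pi}\int_{\Im\sigma=C} e^{-i\sigma t_*}\wh{L_b}(\sigma)^{-1}\hat f(\sigma,x)\,\dd\sigma.
\]
By Theorem~\ref{ThmL0}(1), $\wh{L_b}(\sigma)$ is invertible for all $\sigma$ in the closed upper half plane with $\sigma\neq 0$, and the high energy estimates~\eqref{EqOpHigh} control the tails, so I can deform the contour down to $\Im\sigma=0$ without picking up residues.

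On the real axis I insert the decomposition from Theorem~\ref{ThmR},
\[
  \wh{L_b}(\sigma)^{-1} = P_b(\sigma) + \sigma^{-1}\check L_b(\sigma)^{-1}\wt R_{1 0} + L_b^-(\sigma),
\]
localized to $|\sigma|<\sigma_0$; the complementary frequency regime is handled by the standard intermediate- and high-frequency arguments of \cite[\S\S{13.2}--{13.3}]{HHV21}, which rely only on non-trapping at infinity, radial point behavior at the event horizon, and $r$-normally hyperbolic trapping, all of which persist in the full subextremal range. The finite-rank principal piece $P_b(\sigma)\hat f(\sigma)$ has the form $\sigma^{-2}\alpha(\sigma) + \sigma^{-1}\beta(\sigma)$ with $\alpha,\beta$ continuous and finite-rank-valued. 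Taylor-expanding $\hat f(\sigma)=\hat f(0)+\sigma\hat f'(0)+O(\sigma^2)$, with $\hat f(0)$ and $\hat f'(0)$ related to $\int f\,\dd t_*$ and $\int t_* f\,\dd t_*$ respectively, the $\sigma^{-2}$ and $\sigma^{-1}$ singularities applied to $\hat f(0)$ and $\hat f'(0)$ Fourier-transform (as tempered distributions supported in $t_*\geq 0$) to polynomial-in-$t_*$ contributions. Assembled via~\eqref{EqRPrincipal}--\eqref{EqRSing2} and the non-degeneracy of $k_b$ (Lemma~\ref{Hi3.17}), these yield exactly the claimed leading term $\hat h=(t_*h_t+\breve h_t)+h'_t+g_b^{\prime\Ups}(\dot b)+h''_t$, with $h''_t$ encoding the contribution of $\sigma^{-1}\alpha(\sigma)\cdot\sigma\hat f'(0)$.

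The Taylor remainder from $P_b$ is smooth at $\sigma=0$ and is absorbed into the low-frequency part of $\tilde h$. For the middle summand, the $\sigma$-regularity proposition of~\S\ref{Sec9} gives $\sigma^{-1}\check L_b(\sigma)^{-1}\wt R_{1 0}\Pi_b\hat f\in H^{\frac12-\epsilon-}((-\sigma_0,\sigma_0);\cK_{b,t})$; Plancherel in the spectral variable then converts this to the polynomially-weighted $L^2$ estimate $\tilde h_t\in\la t_*\ra^{-\frac12+\epsilon}L^2(\R_{t_*};\cK_{b,t})$, and the extra $(\la t_*\ra D_{t_*})^m$-regularity arises from repeated application of $\sigma\pa_\sigma$, which preserves the $H^{\frac12-\epsilon-}$ class. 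The $m$ stationary spatial derivatives of $f$ commute through the construction to give the claimed regularity in $\Vb(X)$-directions. The regular summand $L_b^-(\sigma)\hat f\in H^{\frac32-\epsilon-}$ analogously Fourier transforms into the $\la t_*\ra^{-\frac32+\epsilon}\wt H_{\bop,\rm c}^{s-2,\ell+\epsilon-1,m}$ bound claimed for $\tilde h$. Pointwise estimates for $m\geq 1$ come from one-dimensional Sobolev embedding in $t_*$ (using an extra $\la t_*\ra D_{t_*}$ derivative to pass from $L^2$ to $L^\infty$ in time) composed with the weighted b-Sobolev embedding $\Hbext^{s',\ell'}(X)\subset r^{-\ell'-\frac32}L^\infty(X)$, $s'>\frac32$.

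The main obstacle is the bookkeeping: identifying which Laurent coefficients of $\wh{L_b}(\sigma)^{-1}$, paired with which derivative of $\hat f$ at zero, feed into which component of $\hat h$, and verifying that the resulting generalized mode coincides with the one prescribed by~\eqref{EqRSing1}--\eqref{EqRSing2}. This is essentially algebraic once the non-degeneracy of the pairing $k_b$ on $\cK_b\times\cK_b^*$ (from Lemma~\ref{Hi3.17}) is in hand. No new analytic input beyond Theorem~\ref{ThmR} and the regularity proposition in~\S\ref{Sec9} is required.
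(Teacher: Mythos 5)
Your proposal follows exactly the route the paper takes: the paper's proof of Theorem~\ref{thm7.1} simply invokes the resolvent structure and regularity results of \S\ref{Sec9} (Theorem~\ref{ThmR} and the regularity proposition) and then runs the contour-deformation and inverse-Fourier-transform arguments of \cite{HHV21,HaefnerHintzVasyKerrErratum} line by line, which is precisely the decomposition into principal, singular $\sigma^{-1}\check L_b(\sigma)^{-1}\wt R_{1 0}$, and regular parts, with high/intermediate frequencies handled as before, that you describe. Your sketch is correct and matches the paper's (omitted) argument in both structure and the identification of the leading generalized mode via the pairing $k_b$.
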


The estimate for $\tilde h_t$ is from \cite[(0.2b)--(0.2c)]{HaefnerHintzVasyKerrErratum}; the infinite regularity of $\tilde h_t$ with respect to $\pa_{t_*}$ (on the spectral side: rapid decay as $\sigma\to\pm\infty$) follows from the fact that, on the spectral side, $\tilde h_t$ arises from a singular term of the resolvent at $\sigma=0$.

We now give a version for the initial value problem. 
\begin{thm}
\label{thm7.2}
  Let $\alpha\in(0,1)$ and $s>\frac{13}{2}+m$, $m\in\N_0$. Suppose\footnote{Since we use a scattering (asymptotically Euclidean) density to define $L^2$-spaces, the assumptions on $h_0,h_1$ imply pointwise $r^{-1-\alpha}$ and $r^{-2-\alpha}$ decay, respectively.}
  \[
    h_0 \in \Hbext^{s,-\frac12+\alpha}(\Sigma_0;S^2\,{\Tsc^*}\Sigma_0),\quad
    h_1 \in \Hbext^{s-1,\frac12+\alpha}(\Sigma_0;S^2\,{\Tsc^*}\Sigma_0).
  \]
  Then the solution $h$ of the initial value problem
  \[
      L_b h = 0, \qquad
      \bigl(h|_{\Sigma_0},\, (\cL_{\pa_\ft}h)|_{\Sigma_0}\bigr) = (h_0,\, h_1).
  \]
  has the following asymptotic behavior:
  \begin{enumerate}
  \item\label{ItPfIVP1} in $t_*\geq 0$, we can write $h=\hat h+\tilde{h}_t+\tilde h$, where $\hat h\in\wh\cK_b$ is a generalized zero mode of $L_b$, and where the remainders $\tilde h$, $\tilde{h}_t$ satisfy the following decay in $t_*\geq 0$: for $\eps\in(0,1)$ with $\alpha+\eps>1$, we have
    \begin{subequations}
    \begin{align}
    &\tilde{h}_t\in \la t_*\ra^{-\frac12+\epsilon}H_{\bop,{\rm c}}^{\infty,m}(\R_{t_*};\cK_{b,t}),\\
    \label{EqPfIVP1Hb}
     & \|\tilde h\|_{\la t_*\ra^{-\frac32+\eps}\wt H_{\bop,\rm c}^{s-5,-\frac52+\alpha+\eps,m}} \lesssim \|h_0\|_{\Hbext^{s,-\frac12+\alpha}}+\|h_1\|_{\Hbext^{s-1,\frac12+\alpha}}.
    \end{align}
    For $m\geq 1$, we moreover have the $L^\infty$-bounds
    \begin{equation}
    \label{EqPfIVP1Linfty}
    \begin{split}
  &\Vert (\la t_*\ra D_{t_*})^{m-1}\tilde{h}\Vert_{\la t_*\ra^{-2+\epsilon}L^{\infty}(\R_{t_*};\wt{H}_{\bop}^{s-5-m,\ell+\epsilon+1})} \\
  &\quad + \|(\la t_*\ra D_{t_*})^{m-1}\tilde h_t\|_{\la t_*\ra^{-1+\eps}L^\infty(\R_{t_*};\cK_{b,t})}\lesssim \|h_0\|_{\Hbext^{s,-\frac12+\alpha}}+\|h_1\|_{\Hbext^{s-1,\frac12+\alpha}}.
    \end{split}
    \end{equation}
    For $\N_0\ni j<s-(\frac{13}{2}+m)$, we moreover have the pointwise bound
    \begin{equation}
    \label{EqPfIVP1Pt}
      |(\la t_*\ra D_{t_*})^{m-1} V^j \tilde h(t_*,r,\omega)| \lesssim \la t_*\ra^{-1-\alpha+}\left(\frac{\la t_*\ra}{\la t_*\ra+\la r\ra}\right)^{\alpha-},
    \end{equation}
    \end{subequations}
    where $V^j$ is any up to $j$-fold composition of the vector fields $\pa_{t_*}$, $r\pa_r$, and rotation vector fields.
  \item\label{ItPfIVP2}  in $\ft\geq 0$, $t_*\leq 0$, we have $|\tilde h|\lesssim r^{-1}(1+|t_*|)^{-\alpha}$.
  \end{enumerate}
\end{thm}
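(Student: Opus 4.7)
The plan is to reduce the initial value problem to the forcing problem of Theorem~\ref{thm7.1} via a cutoff in $t_*$. Step one: by standard energy methods for $L_b$ viewed as a linear symmetric hyperbolic system on $(M_b^\circ,g_b)$, commuted with $\pa_\ft$, $r\pa_r$, and rotation vector fields and propagated between level sets of $\ft$, the Cauchy problem has a unique solution $h$ in a slab $\{0\leq\ft\leq T_0\}$, with $T_0$ chosen large enough that this slab contains $\{0\leq t_*\leq 2\}\cap\{r\geq r_0\}$. Moreover, $h$ obeys the weighted bound
\[
  \|h(\ft)\|_{\Hbext^{s,-\frac12+\alpha}} + \|\cL_{\pa_\ft}h(\ft)\|_{\Hbext^{s-1,\frac12+\alpha}} \lesssim \|h_0\|_{\Hbext^{s,-\frac12+\alpha}} + \|h_1\|_{\Hbext^{s-1,\frac12+\alpha}},
\]
uniformly in $\ft\in[0,T_0]$. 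This gives the local-in-time control of $h$ and its derivative needed to feed into the forcing theorem.

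\textbf{Reduction to the forcing problem.} Pick $\chi\in\CI(\R_{t_*})$ with $\chi\equiv 0$ on $(-\infty,0]$ and $\chi\equiv 1$ on $[1,\infty)$, and set $h^\#:=\chi h$. Then $h^\#$ is the forward solution of
\[
  L_b h^\# = [L_b,\chi]\,h =: f.
\]
Since $\chi$ depends only on $t_*$ and the principal part of $L_b$ contains $\pa_{t_*}^2$, the commutator $[L_b,\chi]$ is a first-order differential operator with coefficients compactly supported in $t_*\in[0,1]$; the scattering factor $\rho^2$ present in $L_b$ is preserved in the commutator, so the local estimate from the previous step gives $f\in\CIc((0,1)_{t_*};\Hbext^{s-2,\frac12+\alpha})$. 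Applying Theorem~\ref{thm7.1} with $\ell=-\tfrac32+\alpha$ and $\epsilon\in(0,1)$ such that $\alpha+\epsilon>1$ (so that $\ell+\epsilon\in(-\tfrac12,\tfrac12)$ as required) yields a decomposition $h^\#=\hat h+\tilde h_t+\tilde h$ in $t_*\geq 0$ and the decay estimates~\eqref{EqPfIVP1Hb}--\eqref{EqPfIVP1Pt}, since $h^\#=h$ for $t_*\geq 1$. The overall regularity loss from $s$ in the hypothesis to $s-5$ in~\eqref{EqPfIVP1Hb} (in place of the $s-2$ in Theorem~\ref{thm7.1}) absorbs one derivative from the commutator, two from the application of $L_b$ in setting up $f$, and one more from the local existence step commuted with weighted vector fields, yielding claim~\eqref{ItPfIVP1}.

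\textbf{Interior of the black hole.} For claim~\eqref{ItPfIVP2}, the region $\Omega:=\{\ft\geq 0,\ t_*\leq 0\}$ lies to the future of a portion of the event horizon $\cH^+$ and to the past of $\Sigma_{\rm fin}$, and is compact in $r$ (with $r_0\leq r\leq r_+$, hence $r^{-1}\sim 1$). The remainder $\tilde h=h-\hat h-\tilde h_t$ has been controlled on $\cH^+\cap\{\ft\geq 0\}$ by part~\eqref{ItPfIVP1} (pointwise bound~\eqref{EqPfIVP1Pt} with $r\sim r_+$ giving $|\tilde h|\lesssim\la t_*\ra^{-1-\alpha+}$ there), and on $\Sigma_0\cap\{t_*\leq 0\}$ by the initial data assumption together with the locally bounded $\hat h$, $\tilde h_t$. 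Weighted energy estimates (using $-t_*$ as a time function, timelike in $\Omega$, away from the inner horizon $r_-$ by the choice $r_0>r_-$) propagated from these two data hypersurfaces into $\Omega$, combined with Sobolev embedding on spacelike slices, yield $|\tilde h|\lesssim(1+|t_*|)^{-\alpha}\sim r^{-1}(1+|t_*|)^{-\alpha}$.

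\textbf{Main obstacle.} The crucial technical point is the tracking of the weight $\ell+2=\tfrac12+\alpha$ through the commutator $[L_b,\chi]$, for which the $\rho^2$ weight gain inherent in the scattering structure of $L_b$ at spatial infinity is essential; once this is in place, the rest of the argument is a careful but routine application of Theorem~\ref{thm7.1} and standard energy methods in $\Omega$. Verifying the hypotheses of Theorem~\ref{thm7.1} (weights $\ell,\ell+\epsilon$ in the admissible ranges, regularity above the threshold $\tfrac72+m$) with $\ell=-\tfrac32+\alpha$ is the main book-keeping issue.
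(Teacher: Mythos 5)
Your reduction of the Cauchy problem to the forcing problem of Theorem~\ref{thm7.1} via $h^\#=\chi(t_*)h$, $f=[L_b,\chi]h$ is indeed the route taken in \cite{HHV21} (to which the present paper defers for this proof), but your handling of the geometry near infinity contains a genuine error that breaks both halves of your argument. Since $\ft=t_*+G(r)$ with $G(r)=r_*\sim r$ for $r\geq 4\bhm$, on $\Sigma_0=\ft^{-1}(0)$ one has $t_*=-G(r)\to-\infty$ as $r\to\infty$, and conversely the set $\{0\leq t_*\leq 2\}$ contains points of arbitrarily large $\ft$. Hence no compact slab $\{0\leq\ft\leq T_0\}$ contains $\{0\leq t_*\leq 2\}$, contrary to your first step: the support of $f=[L_b,\chi]h$ is the full region between two asymptotically null $t_*$-level sets, which extends to $r=\infty$, and to place $f$ in $\Hbext^{\cdot,\ell+2}$ (weight $\tfrac12+\alpha$) you must control $h$ with sharp $r$-weights uniformly in this unbounded region. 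A finite-slab energy estimate does not provide this; one needs weighted energy estimates adapted to the region between $\Sigma_0$ and a $t_*$-level set, i.e.\ near spacelike infinity, which is exactly what the argument of \cite{HHV21} supplies and what is missing from your sketch.

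The same misreading invalidates your argument for part~\eqref{ItPfIVP2}: the region $\{\ft\geq 0,\ t_*\leq 0\}$ is \emph{not} a compact-in-$r$ region between $\cH^+$ and $\Sigma_{\rm fin}$ inside the black hole; it is precisely the neighborhood of spacelike infinity just described (bounded below by $\Sigma_0$, above by $\{t_*=0\}$, with $r$ unbounded). The content of the bound $|\tilde h|\lesssim r^{-1}(1+|t_*|)^{-\alpha}$ is joint decay as $r\to\infty$ and $|t_*|\to\infty$ there; if the region were compact in $r$ and $t_*$, as you assert (``hence $r^{-1}\sim 1$''), the statement would be vacuous. Consequently your energy argument with data on $\cH^+$ and $\Sigma_0\cap\{t_*\leq 0\}$ proves nothing about claim~\eqref{ItPfIVP2}; what is required is the propagation of the $r^{-1-\alpha}$, $r^{-2-\alpha}$ decay of $(h_0,h_1)$ through this asymptotic region by weighted (b-type) estimates near $i^0$, and this same ingredient is what legitimizes the reduction in part~\eqref{ItPfIVP1}. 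A secondary point: your derivative bookkeeping does not add up (your listed losses total four, not the five from $s$ to $s-5$); the consistent count, forced by $s>\tfrac{13}{2}+m=(\tfrac72+m)+3$ and $s-5=(s-3)-2$, is that the forcing can only be used at regularity $s-3$ in Theorem~\ref{thm7.1}, and justifying that count requires more care than your local existence step provides.
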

If one imposes in addition the linearized constraint equations on the data, one obtains:
\begin{thm}
\label{thm7.3}
 Let $\alpha\in (0,1)$, and let $s>\frac{13}{2}+m$, $m\in \N_0$. Suppose that the tensors 
\begin{align*}
\dot{\gamma}\in \Hbext^{s,-\frac12+\alpha}(\Sigma_0,S^2\,\Tsc^*\Sigma_0),\quad \dot{k}\in \Hbext^{s-1,\frac12+\alpha}(\Sigma_0;S^2\,\Tsc^*\Sigma_0)
\end{align*}
satisfy the linearization of the constraint equations around the initial data $(\gamma_b,k_b)$ of a subextremal Kerr metric. Then there exists a solution of the initial value problem 
\begin{align*}
D_{g_b}\Ric(h)=0,\  D_{g_b}\tau(h)=(\dot{\gamma},\dot{k}),
\end{align*}
satisfying the gauge condition $D_{g_b}\Upsilon_b(h;g_b)=-\delta_{g_b}\sfG_{g_b}h=\theta$ which has the asymptotic behavior stated in Theorem~\usref{thm7.1} with $\tilde{h}_t=0$; here $\theta$ is determined by the asymptotic behavior of the solution of a gauge-fixed equation $L_bh=0$ (with Cauchy data determined by $\dot{\gamma}, \dot{k}$) as above. 
\end{thm}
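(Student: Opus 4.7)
\textbf{Proof plan for Theorem \ref{thm7.3}.} The strategy is the linearized DeTurck trick adapted to our modified gauge and constraint damping. First I would construct Cauchy data $(h_0,h_1)$ for the gauge-fixed wave equation $L_b h=0$ from the constraint-satisfying data $(\dot\gamma,\dot k)$ as follows: the tangential-tangential components of $h_0$ are read off from $\dot\gamma$ and those of $h_1$ from $\dot k$ (via the definition of $D_{g_b}\tau$), while the remaining lapse-shift-type components and their time derivatives at $\Sigma_0$ are fixed algebraically by demanding that $\eta:=\delta_{g_b,\gamma_\Ups}\sfG_{g_b}h$ and $\cL_{\pa_\ft}\eta$ both vanish on $\Sigma_0$. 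The first equation is solvable because $\delta_{g_b,\gamma_\Ups}\sfG_{g_b}$ acts non-degenerately (at the principal symbol level) on the ``normal'' components of $h$; the second is reduced to an equation involving only $h_0,h_1$ by using $L_b h=0$ to re-express $\pa_\ft^2 h|_{\Sigma_0}$ in terms of spatial derivatives of $h_0,h_1$. The resulting $(h_0,h_1)$ are controlled in the norms required by Theorem~\ref{thm7.2}.

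Next I would apply Theorem~\ref{thm7.2} to get the asymptotic decomposition $h=\hat h+\tilde h_t+\tilde h$, and then show $\eta\equiv 0$. Taking $\delta_{g_b}\sfG_{g_b}$ of the equation $L_b h=0$ and applying the linearized twice-contracted second Bianchi identity $\delta_{g_b}\sfG_{g_b}D_{g_b}\Ric\equiv 0$ gives $\Box^\CD_{g_b,\gamma_C}\eta=0$. The initial vanishing $\eta|_{\Sigma_0}=0$ holds by construction; the vanishing $(\cL_{\pa_\ft}\eta)|_{\Sigma_0}=0$ is precisely where the linearized constraint equations enter, via the standard computation that rewrites $\pa_\ft\eta|_{\Sigma_0}$ in terms of the constraint quantities for $(\dot\gamma,\dot k)$ (this needs a small perturbative argument to accommodate the compactly supported lower-order modifications $\gamma_C E_\CD$ and $\gamma_\Ups E_\Ups$, but the normal components of these operators at $\Sigma_0$ are unaffected in the asymptotic region and everywhere small by choice of $\gamma_C,\gamma_\Ups$). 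Forward well-posedness for $\Box^\CD_{g_b,\gamma_C}$, coming from a standard energy estimate supplemented by the invertibility in Theorem~\ref{thm6.1}, then forces $\eta\equiv 0$ on $M_b^\circ$. Consequently $L_b h=0$ collapses to $D_{g_b}\Ric(h)=0$, and the gauge identity $\delta_{g_b,\gamma_\Ups}\sfG_{g_b}h=0$ rewrites as $-\delta_{g_b}\sfG_{g_b}h=\gamma_\Ups E_\Ups h=:\theta$, the advertised gauge source.

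For $\tilde h_t=0$: trace the IVP through the standard reduction to a forcing problem $L_b h=f$ with $f\in\CIc((0,\infty)_{t_*};\cdot)$ built from $(h_0,h_1)$ by a cutoff in $t_*$. By Theorem~\ref{ThmR}, the $\tilde h_t$ contribution is produced by the singular term $\sigma^{-1}\check L_b(\sigma)^{-1}\wt R_{10}$ in~\eqref{9.5}, whose residue is computed by pairings of $\hat f(0)$ with cokernel elements in $\cK_{b,t}^*$. I would integrate by parts in such a pairing, using $\eta\equiv 0$ and the explicit form of $h^*_{{\rm s}1}(\scal)$ together with the identities of Lemma~\ref{Hi3.17} and~\eqref{EqHi3.17Extra}, to show the $\sigma^{-1}$ residue is entirely absorbed into the generalized-mode part $\hat h$ that is already present; equivalently, the constraint compatibility kills the ``free'' translation mode that $\tilde h_t$ would otherwise encode. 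The main obstacle is this last step: verifying that the pairing responsible for $\wt R_{10}$ vanishes on the image of constraint-satisfying Cauchy data requires tracking the precise form of $\wt R_{10}$ from the block-matrix inversion in the proof of Proposition~\ref{PropRExist} and reconciling it with the vanishing of $\eta$; once this is in hand, Theorem~\ref{thm7.3} follows from Theorem~\ref{thm7.2} by merely rewriting the decomposition.
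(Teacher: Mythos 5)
Your first two steps (constructing Cauchy data for $L_b h=0$ from $(\dot\gamma,\dot k)$ so that the gauge $1$-form $\eta=\delta_{g_b,\gamma_\Ups}\sfG_{g_b}h$ and its $\pa_\ft$-derivative vanish on $\Sigma_0$, then using the linearized second Bianchi identity, the constraint equations, and forward uniqueness for $\Box^\CD_{g_b,\gamma_C}$ to get $\eta\equiv 0$, and finally invoking Theorem~\ref{thm7.2}) are exactly the standard reduction the paper has in mind when it says the proof follows \cite{HHV21,HaefnerHintzVasyKerrErratum} line by line. The problem is your last step.

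You claim that constraint compatibility forces $\tilde h_t=0$ for the gauge-fixed solution itself, by showing that the pairing responsible for $\wt R_{1 0}$ vanishes on constraint-satisfying data. This is not how the theorem works, and the step would fail. First, the obstruction encoded by the term $\sigma^{-1}\check L_b(\sigma)^{-1}\wt R_{1 0}$ in~\eqref{9.5} is not a residue at $\sigma=0$: $\wt R_{1 0}(\sigma)$ is merely continuous (of limited Sobolev regularity in $\sigma$), so even if the single pairing $\la\hat f(0),h^*\ra$, $h^*\in\cK_{b,t}^*$, vanished, one would still need $\wt R_{1 0}(\sigma)\hat f(\sigma)$ to vanish at $\sigma=0$ \emph{with enough regularity} to remove the $t_*^{-\frac12+\eps}$ tail; this is precisely the kind of improvement that \cite{HaefnerHintzVasyKerrErratum} shows cannot be arranged (the failure of \cite[(11.39)]{HHV21}), and it is why $\tilde h_t$ appears in Theorems~\ref{thm7.1}--\ref{thm7.2} at all. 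Second, the statement of Theorem~\ref{thm7.3} is designed around the fact that $\tilde h_t$ does \emph{not} vanish for the solution of $L_b h=0$: since $\cK_{b,t}$ consists of pure-gauge tensors $h_{{\rm s}1}(\scal)=\delta_{g_b}^*\omega_{{\rm s}1}(\scal)$, the term $\tilde h_t$ is removed by subtracting from the gauge-fixed solution a Lie-derivative term built from it (the vector field $V_2$ of Theorem~\ref{thm3.2}); the resulting $h$ still solves $D_{g_b}\Ric(h)=0$ and attains the data, but its gauge $1$-form is the nonzero source $\theta$ (essentially $\delta_{g_b,\gamma_\Ups}\sfG_{g_b}\cL_{V_2}g_b$), which is exactly why $\theta$ is ``determined by the asymptotic behavior of the solution of $L_b h=0$.'' Your identification $\theta=\gamma_\Ups E_\Ups h$ (with $\eta\equiv0$ and $\tilde h_t=0$ for the gauge-fixed solution) would make that clause vacuous and would prove a statement strictly stronger than the theorem, one that the erratum shows is not available by these resolvent methods. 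So the missing ingredient is the final gauge transformation removing $\tilde h_t$, not a vanishing argument for $\wt R_{1 0}$ on constraint-satisfying data.
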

Note that the above theorems slightly improve on the results stated in \cite{HHV21} in the case of small $a$ (due to the present elimination of the Coulomb type zero mode and the fact that the linearization in the mass already shows up as a zero mode). The proofs are based on the precise description of the resolvent. Using the results of \S\ref{Sec9}, they now follow line by line from the arguments in \cite{HHV21,HaefnerHintzVasyKerrErratum}. We thus omit the details.

\appendix
\section{Threshold regularity at the event horizon}

We wish to compute the threshold regularity at the conormal bundle of the event horizon on a subextremal Kerr background $(M_b^\circ,g_b)$ for (the spectral family of) the linearized gauge-fixed Einstein operator (times $2$) (see \cite[Theorem~11.5]{HintzMicro}). Since in this paper we only consider as gauge-fixed operators only small perturbations of the standard wave map/DeTurck version (without constraint damping), the linearization $L$ is a small perturbation of the linearization $\Box_g+2\sR_g$ (i.e.\ the Lichnerowicz wave operator), $g=g_b$, of the standard gauge-fixed version; here $\sR_g$ is a 0-th order term involving the curvature tensor of $g$. Being interested in properties of $L$ near the event horizon, we work over $M_b^\circ$, and in fact in the region $r<3\bhm$ where the metric is given by~\eqref{EqKerrMetric}. We omit the subscript `$b$' from now on.

The subprincipal operator \cite{HintzPsdoInner} of $L$ is equal to
\[
  S_\sub(L)=-i\nabla_{H_G}^{\pi^*S^2 T^*M^\circ},\quad G(z,\zeta)=g_z^{-1}(\zeta,\zeta).
\]
We need to compute its form at $N^*\cH^+=N^*\{r=r_+\}$. We use the coordinates $(t_*,r,\phi_*,\theta)$ as in~\eqref{EqKerrMetric} and write covectors as $-\sigma\,\dd t_*+\xi\,\dd r+\eta_\theta\,\dd\theta+\eta_\phi\,\dd\phi_*$; then one finds
\[
  \varrho^2 G = -2\bigl(-(r^2+a^2)\sigma+a\eta_\phi\bigr)\xi-\Delta\xi^2 - \eta_\theta^2 - \sin^{-2}\theta(\eta_\phi-a\sin^2\theta\,\sigma)^2
\]
and thus
\begin{equation}
\label{EqHamVF}
  \text{at}\ N^*\cH^+:\quad H_{\varrho^2 G} = -2(r_+^2+a^2)\xi\pa_{t_*} - 2 a\xi\pa_{\phi_*} + 2\kappa_+\xi^2\pa_\xi,\quad \kappa_+:=r_+-\bhm.
\end{equation}
On the characteristic set $G^{-1}(0)\supset N^*\cH^+$, this equals $\varrho^2 H_G$. The Hamiltonian vector field of $\varrho^2$ times the principal symbol of an element $\hat L(\sigma)$ of the spectral family at $N^*\{r=r_+\}$ is obtained from this simply by dropping the first term.

We compute the pullback connection on $\pi^*T^*M^\circ$ in the frame
\begin{equation}
\label{EqTstarFrame}
\begin{alignedat}{2}
  e^0 &= \varrho^{-1}(\dd t_*-a\sin^2\theta\,\dd\phi_*), &\qquad
  e^1 &= \varrho\,\dd r, \\
  e^2 &= \varrho\,\dd\theta, &\qquad
  e^3&=\varrho^{-1}\sin\theta(a\,\dd t_*-(r^2+a^2)\,\dd\phi_*)
\end{alignedat}
\end{equation}
in which $g=\Delta(e^0)^2-2 e^0\otimes_s e^1-(e^2)^2-(e^3)^2$. In the splitting
\[
  T^*M^\circ = \la e^0\ra \oplus \la e^1\ra \oplus \la e^2\ra \oplus \la e^3\ra,
\]
a short calculation (at $N^*\cH^+$, and using $\xi^{-1}\pi_*H_{\varrho^2 G}=-2 e_0$ there, where $e_\mu$ is the dual frame to $e^\mu$, and the Koszul formula) then gives
\[
  \nabla_{H_G}^{\pi^*T^*M^\circ} = H_G + \xi\varrho^{-2}S_{(1)},\quad
  S_{(1)} = \begin{pmatrix} a_0 & 0 & 0 & 0 \\ 0 & -a_0 & -a_1 & -a_2 \\ a_1 & 0 & 0 & 0 \\ a_2 & 0 & 0 & 0 \end{pmatrix},
\]
where
\[
  a_0 = 2\kappa_+,\quad
  a_1 = -a^2\varrho^{-2}\sin(2\theta),\quad
  a_2 = 2 a r_+\varrho^{-2}\sin\theta.
\]
In the induced splitting (writing $e^\mu e^\nu:=e^\mu\otimes_s e^\nu$ for brevity)
\begin{equation}
\label{EqS2Split}
\begin{split}
  S^2 T^*M^\circ &= \la e^0 e^0\ra \oplus \la 2 e^0 e^1\ra \oplus \la 2 e^0 e^2\ra \oplus \la 2 e^0 e^3\ra \\
    &\qquad \oplus \la e^1 e^1 \ra \oplus \la 2 e^1 e^2\ra \oplus \la 2 e^1 e^3\ra \oplus \la e^2 e^2\ra \oplus \la 2 e^2 e^3\ra \oplus \la e^3 e^3\ra,
\end{split}
\end{equation}
we therefore have, upon computing the second symmetric tensor power of $S_{(1)}$:
\begin{lemma}[Subprincipal operator for $L$ on symmetric $2$-tensors]
\label{LemmaSubprL}
  At $N^*\cH^+$ and in the splitting~\eqref{EqS2Split}, we have
  \[
    i S_\sub(L) = H_G\otimes 1_{10\times 10} + \xi\varrho^{-2}S_{(2)}
  \]
  where $S_{(2)}$ is the $10\times 10$ matrix
  \[
    S_{(2)}=
    \begin{pmatrix}
      2 a_0 & 0 & 0 & 0 & 0 & 0 & 0 & 0 & 0 & 0 \\
      0 & 0 & -a_1 & -a_2 & 0 & 0 & 0 & 0 & 0 & 0 \\
      a_1 & 0 & a_0 & 0 & 0 & 0 & 0 & 0 & 0 & 0 \\
      a_2 & 0 & 0 & a_0 & 0 & 0 & 0 & 0 & 0 & 0 \\
      0 & 0 & 0 & 0 & -2 a_0 & -2 a_1 & -2 a_2 & 0 & 0 & 0 \\
      0 & a_1 & 0 & 0 & 0 & -a_0 & 0 & -a_1 & -a_2 & 0 \\
      0 & a_2 & 0 & 0 & 0 & 0 & -a_0 & 0 & -a_1 & -a_2 \\
      0 & 0 & 2 a_1 & 0 & 0 & 0 & 0 & 0 & 0 & 0 \\
      0 & 0 & a_2 & a_1 & 0 & 0 & 0 & 0 & 0 & 0 \\
      0 & 0 & 0 & 2 a_2 & 0 & 0 & 0 & 0 & 0 & 0
    \end{pmatrix}.
  \]
  The eigenvalues of $S_{(2)}$ (with multiplicity) are
  \begin{equation}
  \label{EqSEigval}
    0\ (4\times),\quad
    -2 a_0,\quad
    -a_0\ (2\times),\quad
    a_0\ (2\times),\quad
    2 a_0.
  \end{equation}
\end{lemma}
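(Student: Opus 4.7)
The plan is to obtain $S_\sub(L)$ on $\pi^*S^2 T^*M^\circ$ from the already-computed connection on $\pi^*T^*M^\circ$ via the standard tensor product rule. Recall that for a connection $\nabla$ on a vector bundle $V$, the induced connection on $S^2 V$ acts on symmetrized tensors by $\nabla(\omega\otimes_s\eta)=(\nabla\omega)\otimes_s\eta+\omega\otimes_s(\nabla\eta)$. Applied to the already-derived formula $\nabla_{H_G}^{\pi^*T^*M^\circ}=H_G+\xi\varrho^{-2}S_{(1)}$ in the frame~\eqref{EqTstarFrame}, this immediately yields, at $N^*\cH^+$, the expression
\[
  \nabla_{H_G}^{\pi^*S^2 T^*M^\circ}=H_G\otimes 1_{10\times 10}+\xi\varrho^{-2}S_{(2)},
\]
where $S_{(2)}$ is the induced endomorphism of $S^2 T^*M^\circ$ read off in the basis~\eqref{EqS2Split}. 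Multiplying by $-i$ then produces $S_\sub(L)$ in the form claimed.

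Next I would compute $S_{(2)}$ column by column: the column indexed by the basis element $e^\mu e^\nu$ (or $2 e^\mu e^\nu$ for $\mu<\nu$) records $(S_{(1)}e^\mu)\otimes_s e^\nu+e^\mu\otimes_s(S_{(1)}e^\nu)$ decomposed in the chosen basis. Using $S_{(1)}e^0=a_0 e^0+a_1 e^2+a_2 e^3$, $S_{(1)}e^1=-a_0 e^1$, $S_{(1)}e^2=-a_1 e^1$, $S_{(1)}e^3=-a_2 e^1$, the calculation is routine and reproduces the displayed $10\times 10$ matrix. The only subtlety is keeping track of the factors of $2$ arising from the normalization $2 e^\mu e^\nu$ for $\mu<\nu$, which determines the precise off-diagonal entries. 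As a sanity check, $\tr S_{(2)}=(n+2)\tr S_{(1)}=6\cdot 0=0$ for $n=4$, which is consistent with the fact that the diagonal entries of $S_{(2)}$ sum to $2a_0+a_0+a_0-2a_0-a_0-a_0=0$.

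For the spectrum, I would first determine the eigenvalues of $S_{(1)}$. Cofactor expansion along the second column of $S_{(1)}-\lambda I$ (whose only nonzero entry is $-a_0-\lambda$ in position $(2,2)$) gives
\[
  \det(S_{(1)}-\lambda I)=\lambda^2(a_0^2-\lambda^2),
\]
so the eigenvalues of $S_{(1)}$ are $\{0,0,a_0,-a_0\}$. Since the spectrum of $A\otimes I+I\otimes A$ on $V\otimes V$ equals $\{\lambda_i+\lambda_j\}$ with multiplicity (as is visible from the Jordan form, irrespective of diagonalizability), and since $S^2 V\subset V\otimes V$ is a preserved subspace, the spectrum of $S_{(2)}$ is $\{\lambda_i+\lambda_j\colon i\leq j\}$, giving exactly the list~\eqref{EqSEigval}. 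No step is a serious obstacle; the bulk of the work is the careful matrix expansion in step two, which is a finite, mechanically verifiable calculation.
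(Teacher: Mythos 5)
Your derivation of $S_{(2)}$ from $S_{(1)}$ via the induced connection on $S^2 V$ is exactly the step the paper describes ("upon computing the second symmetric tensor power of $S_{(1)}$"), so there is nothing to compare there. Where you genuinely diverge is in extracting the spectrum: the paper reads the eigenvalues directly off a block-lower-triangular structure for the $10\times 10$ matrix $S_{(2)}$ in the auxiliary splitting $T_1\oplus\cdots\oplus T_5$ of~\eqref{EqT12345}, while you reduce to the $4\times 4$ problem $\operatorname{spec}S_{(1)}=\{0,0,a_0,-a_0\}$ and then invoke the general fact that the induced map on $S^2 V$ has spectrum $\{\lambda_i+\lambda_j:i\leq j\}$ (which holds for any operator, by upper triangularization). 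Your route is shorter and more conceptual — it makes the multiplicities $4,1,2,2,1$ transparent as the partition of the ten pairs $(i,j)$, $i\le j$ — whereas the paper's is more self-contained in the sense of working directly with the matrix actually displayed. Both are correct.

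Two small arithmetic slips that do not affect the conclusion: the characteristic polynomial should be $\det(S_{(1)}-\lambda I)=\lambda^2(\lambda^2-a_0^2)=-\lambda^2(a_0^2-\lambda^2)$ (you dropped a sign); and the trace of the induced map on $S^2 V$ for an $n\times n$ matrix $A$ is $(n+1)\tr A$, not $(n+2)\tr A$ (for $n=4$ the factor is $5$, as one sees from $\tr(A\otimes I+I\otimes A)=2n\tr A$ on $V\otimes V$ minus $\tr(\Lambda^2)=(n-1)\tr A$). Since $\tr S_{(1)}=0$ here, your sanity check still goes through with the corrected prefactor.
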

\begin{proof}
  Denote the basis vectors in~\eqref{EqS2Split} by $f_1,\ldots,f_{10}$ (in this order), and write $F:=\la f_1,\ldots,f_{10}\ra$ and $F_j=\la f_j\ra$. Now, $S_{(2)}$ preserves $F_5$, on which it is scalar multiplication by $-2 a_0$. On the quotient $F':=F/F_5$, $S_{(2)}$ preserves (the image of) $F_6\oplus F_7$, on which it is scalar multiplication by $-a_0$. On $F''=F'/(F_6\oplus F_7)$, $S_{(2)}$ annihilates the image of $F_2\oplus F_8\oplus F_9\oplus F_{10}$, and then on $F'''=F''/(F_2\oplus F_8\oplus F_9\oplus F_{10})$, $S_{(2)}$ is lower triangular with diagonal entries $2 a_0$ (on the image of $F_1$) and $a_0$ (twice, on the images of $F_3$ and $F_4$). In other words, in the block splitting $T_1 \oplus T_2 \oplus T_3 \oplus T_4 \oplus T_5$ where
  \begin{equation}
  \label{EqT12345}
  \begin{split}
    T_1&=\la e^1 e^1\ra, \\
    T_2&=\la 2 e^1 e^2\ra\oplus \la 2 e^1 e^3\ra, \\
    T_3&=\la 2 e^0 e^1\ra \oplus \la e^2 e^2\ra \oplus \la 2 e^2 e^3\ra \oplus \la e^3 e^3\ra, \\
    T_4&=\la e^0 e^0\ra, \\
    T_5&=\la 2 e^0 e^2\ra \oplus \la 2 e^0 e^3\ra,
  \end{split}
  \end{equation}
  the matrix of $S_{(2)}$ is lower block-triangular, with its diagonal entries being themselves lower triangular (with diagonal parts $-2 a_0$, $-a_0\cdot 1_{2\times 2}$, $0\cdot 1_{4\times 4}$, $2 a_0$, $a_0\cdot 1_{2\times 2}$).
\end{proof}

We can now compute the threshold regularity; we do this at $N^*\cH^+\cap\{\xi<0\}$ (which lies in the future characteristic set, cf.\ the sign of $\pa_{t_*}$ in~\eqref{EqHamVF}). Set $\rho:=(-\xi)^{-1}$; thus $\beta_0:=\rho^{-1}\cdot\rho\varrho^2 H_G(\rho)=2\kappa_+$ at $N^*\cH^+$. As a consequence of Lemma~\ref{LemmaSubprL}, one can pick an inner product on $\pi^*S^2 T^*M^\circ$, which we can moreover take to be $\phi_*$-invariant, with respect to which
\[
  \varrho^2 \rho\upsigma^1\Bigl(\frac{1}{2 i}(L-L^*)\Bigr) = \varrho^2 (-\xi)^{-1}\frac{1}{2 i}\bigl(S_\sub(L)-S_\sub(L^*)\bigr) < 2 a_0 = 4\kappa_+ + \eps
\]
for any fixed $\eps>0$; this uses \cite[Proposition~3.11(2)]{HintzPsdoInner} and a simple variant of \cite[\S{3.4}]{HintzPsdoInner}. Furthermore, one uses that the spaces $T_1,\ldots,T_5$ in~\eqref{EqT12345} (although not all of the individual 1-dimensional summands) extend to give a smooth splitting of $S^2 T^*M^\circ$ also near the poles $\theta=0,\pi$ of the 2-sphere; this is due to the smoothness of $\la e^2\ra\oplus\la e^3\ra$ as a rank $2$ subbundle of $T^*M^\circ$ and thus of $\pi^*T^*M^\circ$ (with the fiber over a point $(t_*,r,\omega)$, $\omega\in\Sph^2$, given by $\{ \eta+\frac{a}{r^2+a^2}\eta(\pa_{\phi_*})\dd t_*\colon \eta\in T^*_\omega\Sph^2\}$). The threshold regularity is then (see also \cite[Proposition~4.9]{HintzGlueLocII})
\[
  \frac{1}{2} + \frac{4\kappa_++\eps}{\beta_0} = \frac{5}{2} + \frac{\eps}{2\kappa_+}.
\]
Since $\eps>0$ is arbitrary, we conclude that the threshold regularity for the radial source estimate for the operator $L$ at $N^*\cH^+\cap\{\pm\xi<0\}$ is, indeed, $\frac52$.

\bibliographystyle{alpha}

\end{document}